\newcommand{\cb}[1]{\color{blue}#1\color{black}}
\newcommand{\nd}[1]{\color{red}#1\color{black}}
\renewcommand{\cb}[1]{#1}
\renewcommand{\nd}[1]{#1}
\theoremstyle{plain}
\newtheorem{thm}{Theorem}[section]
\newtheorem{lem}[thm]{Lemma}
\newtheorem{cor}[thm]{Corollary}
\newtheorem{prop}[thm]{Proposition}
\theoremstyle{definition}
\theoremstyle{remark}
\newtheorem{exmp}[thm]{Example}
\newtheorem{rem}[thm]{Remark}
 \DeclareMathOperator*{\esssup}{esssup}
\DeclareMathOperator*{\essinf}{essinf}
\def\upper{\nd{U}}%
\date{Submitted: February 26, 2014. Revised: November 20, 2014}
\begin{document}

\title{A First-Order BSPDE for Swing Option Pricing: \\  Classical Solutions}

\author{Christian Bender$^1$, Nikolai Dokuchaev$^2$}

 \maketitle
\footnotetext[1]{Saarland University, Department of Mathematics,
Postfach 151150, D-66041 Saarbr\"ucken, Germany, {\tt
bender@math.uni-sb.de}.} \footnotetext[2]{       Department of
Mathematics \& Statistics, Curtin University, GPO Box U1987, Perth,
6845 Western Australia, Australia, {\tt N.Dokuchaev@curtin.edu.au }}

\begin{abstract}

In \citet{BD} we studied a control problem related to swing option pricing in a general non-Markovian setting. The main result there shows that
the value process of this control problem can be uniquely characterized in terms of a first order backward SPDE and a pathwise differential inclusion.
In the present paper we additionally assume that the cashflow process of the swing option is left-continuous in expectation (LCE). Under this assumption we
show that the value process is continuously differentiable in the space variable that represents the volume which the holder of the option
can still exercise until maturity. This gives rise to an existence and uniqueness result for
the corresponding backward SPDE in a classical sense. We also explicitly represent the space derivative
of the value process in terms of a nonstandard optimal stopping problem over a subset of predictable stopping times. This representation can be applied to derive a dual
minimization problem in terms of martingales.
\\[0.2cm] \emph{Keywords:} Backward SPDE, optimal stopping, stochastic optimal control, swing options.
\\[0.2cm] \emph{AMS classification:} 60H15; 49L20; 91G20.
\end{abstract}

\section{Introduction}

Motivated by the pricing problem for swing options, we consider the following optimal control problem.
The investor's aim is to maximize the expected reward of exercising an adapted cashflow process $X$, i.e. she wishes to maximize
\begin{equation}\label{optimal_static}
E\left[\int_0^T u(s)X(s)ds\right]
\end{equation}
over all adapted processes $u$ with values in $[0,L]$ which satisfy the \cb{condition }$\int_0^T u(s)ds \leq 1$. Here $L$ is a local constraint,
which restricts the maximal rate at which the holder of the option can exercise the cashflow process $X$. Moreover, the global constraint (which is a finite fuel constraint) imposes
that the total volume spent by the holder is bounded by one. We refer to \citet{Ke04} \cb{for } the modelling of swing options as continuous time optimal control problems, and note
that the above problem and related problems were recently investigated in a Markovian diffusion setting by \citet{BLM,Do2013,BCV}.

In our companion paper \citep{BD}, to which we also refer for  further references on swing option pricing, we studied the above optimal control problem in a general non-Markovian setting
under the following mild assumptions: $(X(t),\; 0\leq t \leq T)$ is a nonnegative,
right-continuous, $\mathbb{F}$-adapted stochastic process on a filtered probability space
$(\Omega, \mathcal{F},
\mathbb{F}, P)$  satisfying the usual
conditions such that
\begin{equation}\label{ass:square}
 E[\sup_{0\leq t \leq T} X(t)^p]<\infty
\end{equation}
for some $p>1$. We consider these conditions  as standing assumptions for the rest of the paper. Then, a dynamic formulation of (\ref{optimal_static})
reads as follows:
For any $[0,T]$-valued stopping
time $\tau$ and $\mathcal{F}_\tau$-measurable, $(-\infty,1]$-valued
random variable $Y$ denote by $U(\tau,Y)$ the set of all
$\mathbb{F}$-adapted processes with values in $[0,L]$ such that
$\int_\tau^T u(s) ds \leq 1-Y$. Hence, the investor enters the contract
at time $\tau$ and can spend a remaining total volume of $1-Y$ up to maturity $T$. The
corresponding value of the optimization problem is
$$
\bar J(\tau, Y):=\esssup_{u \in U(\tau,Y)} E\left[\left.\int_\tau^T u(s)X(s)ds \right|\mathcal{F}_\tau\right].
$$

The main result in \cite{BD} states (roughly speaking) that a good
version $(J(t,y), \;t\in[0,T],\; y\in(-\infty,1])$ of the adapted random field $(\bar J(t,y), \;t\in[0,T],\; y\in(-\infty,1])$
is characterized as the unique solution to \cb{the }first order backward stochastic partial differential equation (BSPDE)
\begin{eqnarray*}
 J(t,y)&=&E\left[\left. L \int_t^T (X(s)+D^-_yJ(s,y))_+ds\right|\mathcal{F}_t\right], \nonumber \\
J(t,1)&=&0,
\end{eqnarray*}
which is smooth enough to ensure that the differential inclusion
\begin{equation*}
 u(s) \in \left\{\begin{array}{cl} \{0\}, & X(s)+D_y^-J(s,y+\int_t^s  u(r)dr)<0\\ \{L\}, & X(s)+D_y^-J(s,y+\int_t^s u(r)dr)>0 \\
 \ [0,L], &
X(s)+D_y^-J(s,y+\int_t^s  u(r)dr)=0. \end{array}\right.
\end{equation*}
has a solution $u\in U(t,y)$. Here, $D^-_y$ denotes the lefthand side derivative in the $y$-variable and $(\cdot)_+$ denotes the positive part.

The main purpose of the present paper is to study regularity of (the good version $J(t,y)$ of) the value process in the $y$-variable
and to replace the above smoothness condition in terms of the differential inclusion by a classical differentiability condition. To this end we
shall assume that $X$ is additionally \emph{left-continuous in expectation} (LCE), i.e. for every $[0,T]$-valued  stopping time $\sigma$ and every nondecreasing sequence of $[0,T]$-valued stopping times $(\sigma_n)_{n\in \mathbb{N}}$
with  limit $\sigma$ it holds that
\begin{equation}\label{LCE}
\lim_{n\rightarrow \infty} E[X(\sigma_n)]=E[X(\sigma)].
\end{equation}
Intuitively this means that the jumps of $X$ occur at total surprise and cannot be predicted.
Under this assumption we are going to prove the following theorem:
\begin{thm}\label{thm:main_new}
Suppose the standing assumptions and that $X$ \cb{is }left-continuous in expectation.
For every $t\in[0,T]$ denote
$$
\Delta_t:=(1-L(T-t),1),\quad \bar \Delta_t:=[1-L(T-t),1].
$$
Then, \\
 (i) There is a \cb{measurable }version $(J(t,y),\; t\in [0,T], y \in \bar\Delta_t)$ of $(\bar J(t,y),\; t\in [0,T], y \in \bar\Delta_t)$  which fulfills:
\begin{enumerate}
 \item[a)] There is a set $\bar \Omega \in \mathcal{F}$ with $P(\bar \Omega)=1$ such that $D^-_y J(t,\omega,y)$ exists  for every $t\in [0,T]$, $y\in \bar\Delta_t$, and $\omega \in \bar \Omega$
and is left-continuous in $y$. Moreover, $J$ is Lipschitz in $y$ in the following sense: There is an
integrable random variable $C$ satisfying
$$
|J(t,\omega,y_1)-J(t,\omega,y_2)|\leq C(\omega) |y_1-y_2|
$$
for every $t\in [0,T], \omega \in \bar \Omega$ and $y_1,y_2\in \bar\Delta_t$.
\item[b)] For every $t\in [0,T]$, there is a set $\Omega_t$ of full
$P$-measure such that, for every $\omega \in \Omega_t$, the mapping
$$
y\mapsto J(t,\omega,y)
$$
is continuously differentiable on  $\Delta_t$.
\item[c)] \cb{For every $t\in[0,T]$ and $y\in\Delta_t$, $\frac{\partial}{\partial y}J(s,\omega,y)$ exists for  $\lambda_{[t,T-\frac{1-y}{L}]}\otimes P$-almost every $(s,\omega)$,
\begin{equation}\label{BSPDE_new}
 J(t,y)=E\left[\left.\int_{T-\frac{1-y}{L}}^T LX(s)ds+ L \int_t^{T-\frac{1-y}{L}} (X(s)+\frac{\partial}{\partial y}J(s,y))_+ds\right|\mathcal{F}_t\right]
\end{equation}
holds $P$-almost surely }and
the boundary conditions
\begin{equation}\label{boundary}
J(t,1-L(T-t))=E\left[\left.  \int_t^T LX(s)ds \right|\mathcal{F}_t\right],\quad J(t,1)=0
\end{equation}
are satisfied.
\end{enumerate}
(ii) Conversely, if $(J(t,y),\; t\in [0,T], y \in \bar \Delta_t)$  is a \cb{measurable }random field satisfying a), b), and c), then it is a version of $\bar J$, i.e.
for every $t\in [0,T],\;y\in \bar \Delta_t$
$$
J(t,y)=\bar J(t,y)
$$
$P$-almost surely.
\end{thm}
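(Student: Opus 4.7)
The starting point will be the characterisation from \cite{BD}: a good version of $\bar J$ with left derivative $D^-_y J$ already exists and satisfies the BSPDE stated in the introduction with $D^-_y J$ in place of $\frac{\partial}{\partial y} J$. Since $y\mapsto J(t,\omega,y)$ inherits concavity from the convexity of the admissible sets $U(\tau,Y)$, both one-sided derivatives exist everywhere in the interior of the inventory interval and satisfy $D^+_y J(t,y)\leq D^-_y J(t,y)$. The plan is thus to establish the opposite inequality $P$-a.s., simultaneously for all $y\in\Delta_t$; combined with the left-continuity of $D^-_y J$ in a), this yields continuous differentiability on $\Delta_t$ by standard properties of concave functions, proving b).

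The mechanism I expect to use is a dual representation of the one-sided derivatives as essential suprema of quantities of the form $L\,E[X(\tau)|\mathcal{F}_t]$ over classes of stopping times parametrised by $(t,y)$. The natural class for $D^-_y J(t,y)$ should consist of \emph{predictable} stopping times $\tau\geq T-(1-y)/L$, namely those at which the remaining inventory would just be exhausted by exercising at maximum rate $L$; $D^+_y J(t,y)$ should admit the same supremum but taken over the larger class of general stopping times with the same lower bound. The LCE hypothesis (\ref{LCE}) is precisely what forces these two suprema to coincide: any stopping time $\sigma$ can be approximated by a nondecreasing sequence of predictable stopping times $\sigma_n\nearrow\sigma$, and (\ref{LCE}) ensures $E[X(\sigma_n)|\mathcal{F}_t]\to E[X(\sigma)|\mathcal{F}_t]$, so the predictable supremum is no smaller than the general one. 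Heuristically, LCE rules out kinks in $y\mapsto J(t,\omega,y)$ because the jumps of $X$ cannot be predicted and thus cannot produce a discontinuity in the marginal value of inventory.

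Part c) will then follow by substituting the now common value $\frac{\partial}{\partial y}J$ for $D^-_y J$ in the BSPDE from \cite{BD} and splitting the time integral at $T-(1-y)/L$. On the tail interval $(T-(1-y)/L,T]$ the remaining inventory together with the rate constraint $u\leq L$ forces any admissible control to reduce to $u\equiv L$, which produces the first term on the right-hand side of (\ref{BSPDE_new}); the boundary conditions (\ref{boundary}) are immediate from the definition of $\bar J$, as $y=1-L(T-t)$ saturates the resource constraint and $y=1$ leaves no admissible action beyond $u\equiv 0$. Measurability of $\frac{\partial}{\partial y}J$ on the product set in c) will follow by a Fubini argument from the pointwise almost-sure differentiability.

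For the converse (ii), any random field satisfying a), b), c) is a classical solution of the BSPDE of \cite{BD} with the classical derivative in place of $D^-_y J$, and continuous differentiability trivially provides enough regularity for the differential inclusion stated in the introduction to admit a solution; the uniqueness result of \cite{BD} then forces $J=\bar J$ at each fixed $(t,y)$. The main obstacle I anticipate is not the final assembly nor the concavity argument but the rigorous derivation of the predictable optimal stopping representation of $D^-_y J$: one must identify the correct restricted class of stopping times, establish the representation as an essential supremum, and verify that LCE applies in a sufficiently uniform fashion in $(t,y)$ so that the exceptional null set appearing in b) can indeed be chosen once per $t$ rather than per $(t,y)$.
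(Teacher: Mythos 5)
Your overall architecture for part (i)b) --- concavity gives both one-sided derivatives with $D^+_yJ\leq D^-_yJ$, prove the reverse inequality via stopping-time representations of the marginal value, use LCE to pass to a limit, then recycle the characterization of [BD] for uniqueness --- matches the paper's. But the two ingredients you rely on are misidentified. First, the stopping-time classes. The representation is \emph{not} a supremum over predictable stopping times with the deterministic lower bound $T-(1-y)/L$ versus the same supremum over general stopping times. The paper fixes an optimal control $u^{\tau_0,Y_0}$ and defines two \emph{random} time sets: $A(\tau_0,Y_0)$, the times in every neighborhood of which $u^{\tau_0,Y_0}<L$ on a set of positive Lebesgue measure (marginal volume can still be added there), and $B(\tau_0,Y_0)$, the times near which $u^{\tau_0,Y_0}>0$ (volume can be removed). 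Then $-D^-_yJ$ is the $\esssup$ of $E[X(\sigma)|\mathcal{F}_{\tau_0}]$ over \emph{predictable} $\sigma$ valued in $A$, while $-D^+_yJ$ is the $\essinf$ over predictable $\rho$ valued in $B$; equality holds because the exit time $\bar\sigma$ of the optimal inventory from the strip $(1-Y_0-L(T-t),1-Y_0)$ is predictable and lies in $A\cap B$, giving $-D^-_yJ\geq E[X(\bar\sigma)|\mathcal{F}_{\tau_0}]\geq -D^+_yJ$, and concavity gives the reverse. Your class gives the wrong value: for a submartingale the optimal control is $L{\bf 1}_{[T-(1-y)/L,T]}$, so volume can only be \emph{added} at times $\leq T-(1-y)/L$ --- the opposite of your constraint --- and in the paper's Example 4.9 the supremum over stopping times $\geq 5/2$ (all of which are predictable there) equals $5/4$, not the true value $3/2$. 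Second, LCE is not used to approximate a general stopping time from below by predictable ones; that is impossible for totally inaccessible times, and Example 4.9 shows the supremum over non-predictable times in $A$ strictly exceeds the derivative even though $X$ is LCE. LCE enters differently: for a fixed predictable $\sigma$ in the restricted class one constructs, from its announcing sequence and the structure of $u^{\tau_0,Y_0}$, perturbation times $\rho_n\nearrow\sigma$ at which the control can actually be modified, proves $-D^-_yJ\geq E[X(\rho_n)|\mathcal{F}_{\tau_0}]$ by an explicit admissible perturbation, and uses LCE only to send $\rho_n\to\sigma$.

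Two further steps you dismiss are substantive. For the converse (ii), $\mathcal{C}^1$ regularity does not ``trivially'' yield a solution of the differential inclusion: $X(s)+\partial_yJ(s,y)$ still changes sign, the null sets $\Omega_s$ depend on $s$, and the solution must be adapted. The paper constructs one via a monotone sequence of Lipschitz ODEs $y_n'=F_n(s,y_n)$, where monotonicity upgrades the weak $L^2$-compactness of the controls to almost-sure convergence of the inventories, followed by a separate correction at the boundaries $y=1$ and $y=1-L(T-s)$ using $X\geq 0$ and $J(\cdot,1)=0$. Finally, the issue you flag but do not resolve --- one exceptional null set per $t$ rather than per $(t,y)$ --- is closed by a specific device: the jump set of the nonincreasing map $y\mapsto D^-_yJ(t,\omega,y)$ is countable and can be covered by countably many $\mathcal{F}_t$-measurable random variables $Y_n$, and the pointwise identity $D^-_yJ(t,Y_n)=D^+_yJ(t,Y_n)$ is established for \emph{random} $\mathcal{F}_t$-measurable arguments; this is precisely why all the lemmas are formulated for stopping times $\tau_0$ and random $Y_0$. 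Without these pieces your plan does not close.
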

\cb{In the above theorem and for the remainder of the paper $\lambda$ denotes the Lebesgue measure and $\lambda_{[a,b]}$ its restriction to the interval $[a,b]$.}

We note that the above theorem characterizes the value process on the set $\{(t,y);\; 0\leq t\leq T, 1-L(T-t)\leq y \leq 1\}$. For $y\leq 1-L(T-t)$ the optimization becomes trivial,
because the remaining volume $1-y$ \cb{is }at least as large as the maximal volume $L(T-t)$ which one can spend when exercising at the maximal rate of $L$. Hence, $L{\bf 1}_{[t,T]}$ is an optimal strategy
and
\begin{equation}\label{eq:boundaryL}
\bar J(t,y)=E\left[\left.\int_t^T LX(s)ds \right|\mathcal{F}_t\right].
\end{equation}
This also explains the boundary condition (\ref{boundary}) at $y=1-L(T-t)$.

We emphasize that condition b) in Theorem \ref{thm:main_new} is a classical $\mathcal{C}^1$-condition on the solution of the BSPDE. Hence, we can interpret this theorem as an existence and uniqueness
result of a classical solution for the BSPDE (\ref{BSPDE_new})--(\ref{boundary}). Taking  into account that  this BSPDE is a non-Markovian version of a Hamilton-Jacobi-Bellman equation, we think that
existence of a classical solution is a striking feature. Indeed, recent studies of the corresponding HJB equation for stochastic control problems with integral constraints
in the Markovian diffusion case
such as \citet{BCV} only discuss
the HJB equation in the framework of viscosity solutions.

The paper is organized as follows: In Section 2 we give a recap of some of the results in \citet{BD}, to which we refer as [BD] from now on. The proof
 of Theorem~\ref{thm:main_new} is divided into two parts.
In Section 3 we prove the uniqueness part, i.e. we show that every
adapted random field which satisfies a), b), and c) coincides necessarily with the value process $\bar J(t,y)$. It will turn out that the LCE assumption is not required for this part of
Theorem \ref{thm:main_new}.
It is however crucial for the smoothness part which is proved in Section 3. Here we show that a good version of the value process is indeed continuously differentiable in the sense of
b). The derivative in the space variable is additionally represented via some nonstandard optimal stopping problems which can be linked to the interpretation of the derivative as the marginal
value of the underlying control problem. Finally, in Section 4 we derive a dual minimization over martingales and relate the minimizing martingale to the derivative of the value process.

\section{Recap of the main results in [BD]}

In this section we state some results from [BD] for handy reference. We recall that the standing assumptions are in force without further mention.

The first result, Proposition 3.5 in [BD], provides a good version of the value process $\bar J$.

\begin{prop}\label{prop:goodversion}
There is an adapted random field $(J(t,y),\; t\in[0,T],\, y\in(-\infty,1])$ such that
$$
J(\tau,Y)=\bar J(\tau,Y)\quad P-a.s.
$$
for every $[0,T]$-valued stopping time $\tau$ and every $\mathcal{F}_\tau$-measurable, $(-\infty,1]$-valued random variable $Y$. Moreover, $J$  satisfies the following:
\\
 There is a set $\bar \Omega \in \mathcal{F}$ with $P(\bar \Omega)=1$ such that  the following properties hold on $\bar \Omega$:
\begin{enumerate}
 \item For every $y\in(-\infty,1]$, the mapping $t\mapsto  J(t,y)$ is RCLL.
  \item For every $t\in[0,T]$ and $y_1,y_2\in (-\infty,1]$
 $$
|J(t,y_1)-J(t,y_2)|\leq \left(\sup_{r\in [0,T]} Z(r) \right) |y_1-y_2|
$$
where $Z(t)$ is a RCLL modification of $E[\sup_{r\in [0,T]} X(r)|\mathcal{F}_t]$ which satisfies $$\sup_{r\in [0,T]} Z(r) < \infty$$
on $\bar \Omega$.
\item For every $t\in[0,T]$, the mapping $y\mapsto J(t,y)$ is concave.
\end{enumerate}
\end{prop}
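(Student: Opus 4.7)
My plan is a standard countable-dense-in-$y$ construction. I would first establish two pointwise structural inequalities at the $\esssup$ level and then build the field $J$ on rationals in $y$ before extending by continuity.

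The first key inequality to prove is the Lipschitz estimate: for any $[0,T]$-valued stopping time $\tau$ and any $\mathcal{F}_\tau$-measurable $Y_1\leq Y_2$ in $(-\infty,1]$,
\[
0\leq \bar J(\tau,Y_1)-\bar J(\tau,Y_2)\leq E\!\left[\sup_{r\in[0,T]}X(r)\,\Big|\,\mathcal{F}_\tau\right](Y_2-Y_1)\qquad P\text{-a.s.}
\]
Monotonicity is immediate from $U(\tau,Y_2)\subseteq U(\tau,Y_1)$. For the upper bound I would truncate any $\varepsilon$-optimal $u_1\in U(\tau,Y_1)$ at the first time its integral from $\tau$ reaches $1-Y_2$, producing $u_2\in U(\tau,Y_2)$ with $\int_\tau^T(u_1-u_2)\,ds\leq Y_2-Y_1$, and estimate $\int_\tau^T(u_1-u_2)X\,ds\leq \sup_r X(r)\cdot (Y_2-Y_1)$ before conditioning. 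In parallel, concavity of $\bar J(\tau,\cdot)$ will follow from $\lambda u_1+(1-\lambda)u_2\in U(\tau,\lambda Y_1+(1-\lambda)Y_2)$ whenever $u_i\in U(\tau,Y_i)$, combined with linearity of the reward in $u$.

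Next, for each rational $y\in\mathcal{D}:=(-\infty,1]\cap\mathbb{Q}$ I would extract an RCLL modification $J(\cdot,y)$ of the value process $(\bar J(t,y))_{t\in[0,T]}$. This is the principal analytical step: I would use the dynamic programming principle to identify $\bar J(\cdot,y)+\int_0^\cdot u(s)X(s)\,ds$ as a supermartingale for every feasible $u$ (with equality at the optimum), and invoke an El Karoui style RCLL regularization available under the usual conditions on $\mathbb{F}$, with the moment assumption (\ref{ass:square}) supplying the required uniform integrability. In parallel, I would pick an RCLL modification $Z$ of the martingale $E[\sup_r X(r)\,|\,\mathcal{F}_t]$; its running maximum is a.s.\ finite by Doob's $L^p$-inequality and (\ref{ass:square}).

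To finish, I would intersect the countably many exceptional $P$-null sets---one per rational $y$ for RCLL, one for $Z$, and one per rational pair $(y_1,y_2)$ for the Lipschitz inequality at rational $t$---to obtain $\bar\Omega$. On $\bar\Omega$, the Lipschitz estimate for rational $y_1,y_2$ propagates from rational $t$ to every $t\in[0,T]$ using RCLL regularity and right-continuity of $Z$. That uniform Lipschitz bound on a dense set of $y$ makes the continuous extension $y\mapsto J(t,y)$ from $\mathcal{D}$ to $(-\infty,1]$ unambiguous; the extended field inherits RCLL in $t$ (property~1), the Lipschitz estimate (property~2), and, by continuity from concavity verified on $\mathcal{D}$, also concavity in $y$ (property~3). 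The identification $J(\tau,Y)=\bar J(\tau,Y)$ for arbitrary $(\tau,Y)$ will then follow by approximating $Y$ by simple $\mathcal{F}_\tau$-measurable random variables with rational values and using the Lipschitz property together with right-continuity in $t$. I expect the main obstacle to be the RCLL regularization step for fixed $y$ in this non-Markovian setting; the rest is routine bookkeeping anchored in the two pointwise inequalities above.
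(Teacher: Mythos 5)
Your outline is essentially the standard construction, and it is worth noting that the present paper does not prove this proposition at all: it quotes it as Proposition 3.5 of the companion paper [BD], whose proof proceeds along the same lines (RCLL regularization in $t$ for a countable dense set of $y$'s, the conditional Lipschitz estimate in $y$ obtained by truncating a near-optimal control, and extension/aggregation from there). Two points in your sketch should be made explicit to close it. First, the supermartingale regularization theorem you invoke requires right-continuity of the mean function $t\mapsto E[\bar J(t,y)]$, not just uniform integrability; this does hold here because every $u\in U(t',y)$ extends by zero to an element of $U(t,y)$ and conversely every $u\in U(t,y)$ restricts to $U(t',y)$, whence $\bigl|E[\bar J(t,y)]-E[\bar J(t',y)]\bigr|\le L\,E\bigl[\int_t^{t'}X(s)\,ds\bigr]\to 0$ as $t'\downarrow t$ --- this is the one place where the structure of the control problem, rather than general supermartingale theory, enters. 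Second, the final identification $J(\tau,Y)=\bar J(\tau,Y)$ needs more than approximating $Y$ by rational-valued simple random variables: you must also pass from deterministic times to stopping times. That step requires (i) a locality property of the essential supremum, namely $\bar J(\tau,Y){\bf 1}_A=\bar J(\tau',Y'){\bf 1}_A$ on $A=\{\tau=\tau',\,Y=Y'\}\in\mathcal{F}_{\tau\wedge\tau'}$, proved by pasting controls, which gives the identity for stopping times taking countably many values; and (ii) approximation of $\tau$ from the right by such stopping times, combined with the RCLL property of $J(\cdot,y)$ and the uniform estimate above to see that $\bar J(\tau_n,y)\to\bar J(\tau,y)$. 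With these two additions your argument is complete and matches the route taken in [BD].
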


The main theorem of [BD] characterizes the value process. It does not require the LCE assumption.

\begin{thm}\label{thm:main_old}
 (i) The version $(J(t,y),\; t\in [0,T], y \in (-\infty,1])$ of $(\bar J(t,y),\; t\in [0,T], y \in (-\infty,1])$  constructed in Proposition \ref{prop:goodversion} satisfies:
\begin{enumerate}
 \item[a')] There is a set $\bar \Omega \in \mathcal{F}$ with $P(\bar \Omega)=1$ such that $D^-_y J(t,\omega,y)$ exists  for every $t\in [0,T],\; y\in (-\infty,1]$ and $\omega \in \bar \Omega$
and is left-continuous in $y$. Moreover, $J$ is Lipschitz in $y$ in the following sense: There is an
integrable random variable $C$ satisfying
$$
|J(t,\omega,y_1)-J(t,\omega,y_2)|\leq C(\omega) |y_1-y_2|
$$
for every $t\in [0,T], \omega \in \bar \Omega$ and $y_1,y_2\in (-\infty,1]$.
\item[b')] For every $(t,y)\in[0,T]\times (-\infty,1]$, there is a control $u^{t,y}\in U(t,y)$ such that the differential inclusion
\begin{equation}\label{eq:optimality}
 u^{t,y}(s) \in \left\{\begin{array}{cl} \{0\}, & X(s)+D_y^-J(s,y+\int_t^s  u^{t,y}(r)dr)<0\\ \{L\}, & X(s)+D_y^-J(s,y+\int_t^s u^{t,y}(r)dr)>0 \\
 \ [0,L], &
X(s)+D_y^-J(s,y+\int_t^s  u^{t,y}(r)dr)=0. \end{array}\right.
\end{equation}
is satisfied  $\lambda_{[t,T]}\otimes P$-almost surely.
\item[c')] For every $(t,y)\in[0,T]\times (-\infty,1]$,
\begin{eqnarray*}
 J(t,y)&=&E\left[\left. L \int_t^T (X(s)+D^-_yJ(s,y))_+ds\right|\mathcal{F}_t\right], \nonumber \\
J(t,1)&=&0,
\end{eqnarray*}
$P$-almost surely.
\end{enumerate}
(ii) Conversely, if $(J(t,y),\; t\in [0,T], y \in (-\infty,1])$  is a \cb{measurable }random field satisfying a'), b'), and c'), then it is a version of $\bar J$, i.e.
for every $(t,y)\in[0,T]\times (-\infty,1]$
$$
J(t,y)=\bar J(t,y)
$$
$P$-almost surely. In this case, $u^{t,y}\in U(t,y)$ is optimal for $\bar J(t,y)$, if and only if  (\ref{eq:optimality}) is satisfied.
\end{thm}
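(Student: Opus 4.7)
The plan is to prove Theorem~\ref{thm:main_old} by combining Proposition~\ref{prop:goodversion} with a dynamic-programming derivation of the BSPDE and a verification argument for the converse.

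For part~(i), the Lipschitz estimate in a') with integrable constant $C$ follows directly from Proposition~\ref{prop:goodversion} by taking $C=\sup_{r\in[0,T]}Z(r)$, which is integrable by Doob's maximal inequality applied to the RCLL modification $Z$ of the martingale $E[\sup_r X(r)\,|\,\mathcal{F}_\cdot]$ together with the $L^p$ assumption~(\ref{ass:square}). Existence of $D_y^-J$ and its left-continuity in $y$ are pathwise consequences of the concavity of $y\mapsto J(t,\omega,y)$ established in Proposition~\ref{prop:goodversion}. The BSPDE in c') is the HJB equation associated with~(\ref{optimal_static}); I would derive it from the dynamic programming principle
\[
J(t,y)=\esssup_{u\in U(t,y)} E\Big[\int_t^s u(r)X(r)\,dr + J\bigl(s,\, y+\int_t^s u(r)dr\bigr)\,\Big|\,\mathcal{F}_t\Big]
\]
by expanding $J(s,\cdot)$ around $y$ via the left derivative and pointwise optimizing
$\sup_{u(r)\in[0,L]} u(r)\bigl(X(r)+D_y^-J(r,y)\bigr)=L\bigl(X(r)+D_y^-J(r,y)\bigr)_+$, then letting $s\uparrow T$ with terminal value $J(T,y)=0$ (which is immediate from $\bar J(T,y)=0$). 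The optimal control $u^{t,y}$ in b') is then constructed by measurable selection combined with a pathwise existence argument for the differential inclusion~(\ref{eq:optimality}), where concavity of $J$ in $y$ and left-continuity of $D_y^-J$ guarantee that a selection exists and that the induced trajectory $Y^u(s)=y+\int_t^s u^{t,y}(r)dr$ remains in $(-\infty,1]$.

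For part~(ii), the verification argument: fix $(t,y)$ and let $J$ satisfy a'), b'), c'). For any $u\in U(t,y)$, set $Y^u(s)=y+\int_t^s u(r)dr$. The trivial pointwise inequality
\[
L\bigl(X(s)+D_y^-J(s,y')\bigr)_+\ge u(s)\bigl(X(s)+D_y^-J(s,y')\bigr),\qquad u(s)\in[0,L],\ y'\in(-\infty,1],
\]
combined with c') at frozen $y'=Y^u(s)$ and the concavity of $J$ in the $y$-variable, should yield $J(t,y)\ge E[\int_t^T u(s)X(s)ds\mid\mathcal{F}_t]$; taking the essential supremum over $u\in U(t,y)$ gives $J\ge\bar J$. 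For the reverse inequality, substitute $u=u^{t,y}$ from b'): by the differential inclusion, the above inequality becomes equality pointwise in $s$ (a case-by-case check on the sign of $X(s)+D_y^-J(s,Y^u(s))$), yielding $J(t,y)=E[\int_t^T u^{t,y}(s)X(s)ds\mid\mathcal{F}_t]\le\bar J(t,y)$. The optimality characterization in the last sentence of~(ii) follows from the same case analysis: any feasible $u$ achieving the optimum must produce equality in the key inequality, which forces~(\ref{eq:optimality}).

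The main obstacle is executing the verification step rigorously, because $J$ is only left-differentiable in $y$ and RCLL in $t$, so one cannot apply It\^o's formula to $J(s,Y^u(s))$ as one would in a classical smooth HJB setting. I would overcome this by a telescoping discretization: partition $[t,T]$ into $t=t_0<\cdots<t_N=T$, apply c') on each subinterval at the frozen value $y'=Y^u(t_k)$, and use concavity of $J$ in $y$ (for a one-sided tangent inequality linking $J(t_k,Y^u(t_k))-J(t_{k+1},Y^u(t_{k+1}))$ to $D_y^-J$ and the drift $Y^u(t_{k+1})-Y^u(t_k)$) together with the Lipschitz bound in a') to control remainder terms uniformly in the partition. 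Left-continuity of $D_y^-J(s,\cdot)$ then ensures the Riemann-type sums converge to the integral in~c') as the mesh tends to zero, and the integrable Lipschitz constant $C$ provides the dominated convergence needed to exchange limit and conditional expectation. Note that, as claimed in the paper, no LCE assumption is invoked anywhere in this outline.
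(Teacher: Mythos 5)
First, a point of order: the paper does not prove Theorem \ref{thm:main_old} at all. Section 2 is explicitly a ``recap of the main results in [BD]''\,---\,this theorem is imported verbatim from the companion paper and cited for reference, so there is no in-paper proof to compare your proposal against. Judged on its own merits, your outline gets the easy parts right: a') does follow from Proposition \ref{prop:goodversion} (with $C=\sup_r Z(r)$, integrable by Doob's $L^p$ inequality and (\ref{ass:square})), and the one-sided tangent inequality for the concave map $y\mapsto J(t,y)$ is indeed the engine of the verification step in (ii). But two of the three hard steps are waved away, and the third contains an error term you have not actually controlled.

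The most serious gap is b'). ``Measurable selection combined with a pathwise existence argument'' is not a proof: the set-valued right-hand side of (\ref{eq:optimality}) is built from $D^-_yJ$, which is only \emph{left}-continuous in $y$, so the multifunction is not upper semicontinuous and the classical existence theory for differential inclusions (Aubin--Cellina) does not apply directly; moreover the solution must be $\mathbb{F}$-adapted, which a pathwise selection does not deliver. The present paper's own treatment of the analogous step (the unnamed lemma in Section 3) needs a full page: a monotone sequence of Lipschitz ODE approximations is constructed precisely because weak $L^2$ compactness of the approximating controls gives only weak limits, which are useless for verifying a pointwise inclusion; monotonicity is what upgrades this to almost sure convergence. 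Your proposal contains no substitute for this idea. Second, in the telescoping verification of (ii), after applying c') at the frozen value $Y^u(t_k)$ and the concavity inequality at time $t_{k+1}$, the leftover term on the $k$-th subinterval is $\int_{t_k}^{t_{k+1}} u(r)\bigl(D^-_yJ(t_{k+1},Y^u(t_k))-D^-_yJ(r,Y^u(t_k))\bigr)dr$: the mismatch is in the \emph{time} argument of $D^-_yJ$, and neither the Lipschitz bound in $y$ from a') nor left-continuity of $D^-_yJ(s,\cdot)$ in $y$ says anything about it, since $s\mapsto D^-_yJ(s,y)$ has no known modulus of continuity (only $s\mapsto J(s,y)$ is RCLL). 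Some additional structural input is needed to kill this remainder. Finally, your derivation of c') from the dynamic programming principle only yields ``$\leq$'' after the pointwise optimization $\sup_{v\in[0,L]}v(X+D^-_yJ)=L(X+D^-_yJ)_+$; equality requires exhibiting a (near-)optimizer, i.e.\ it circles back to b'), and the first-order expansion of $J(s,\cdot)$ via a one-sided derivative of a merely concave function is itself only an inequality in each direction. These are exactly the points where the companion paper [BD] invests its technical effort, and they are missing here.
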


Theorem \ref{thm:main_old} includes an existence result for optimal controls. One can even choose an optimal control with some additional
properties which turns out to be useful later.

\begin{prop}\label{prop:optimalcontrol}
 For every $[0,T]$-valued stopping time $\tau_0$ and every $\mathcal{F}_{\tau_0}$-measurable, $(-\infty,1]$-valued random variable $Y_0$, there
is an optimal strategy $u^{\tau_0,Y_0}\in U(\tau_0,Y_0)$ for $\bar J(\tau_0,Y_0)$ such that
\begin{equation}\label{U'}
 u^{\tau_0,Y_0}(r)=L \textnormal{ on } \{L(T-r)\leq 1-(Y_0+\int_{\tau_0}^r u^{\tau_0,Y_0}(s)ds)\}.
\end{equation}
and
\begin{equation}\label{U'2}
 \int_{\tau_0}^T  u^{\tau_0,Y_0}(s)ds=1-Y \textnormal{ on } \{L(T-\tau_0)\geq1-Y_0\}
\end{equation}
\end{prop}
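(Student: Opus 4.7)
The plan is to start with an optimal control $u^*\in U(\tau_0,Y_0)$ for $\bar J(\tau_0,Y_0)$---whose existence for general random initial data $(\tau_0,Y_0)$ follows from Theorem \ref{thm:main_old} via the pasting/measurable-selection arguments developed in [BD]---and then to modify it on the ``boundary region'' where the remaining volume has dropped to exactly the amount that can still be spent by exercising at the maximal rate $L$ on the remaining time interval. Heuristically, before this region the control is in the interior of the global constraint, while from the moment the boundary is hit onwards one has no choice but to exercise at rate $L$; and because $X\geq 0$, replacing $u^*$ by $L$ on this terminal piece can only increase the conditional expected reward.

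To make this precise, I would define
$$
\rho := \inf\left\{r \in [\tau_0, T]:\; \int_{\tau_0}^r u^*(s)\,ds + L(T-r) \leq 1-Y_0\right\},
$$
with the convention $\inf\emptyset = T$ (which, as noted below, is not actually triggered). The process $r \mapsto \int_{\tau_0}^r u^*(s)\,ds + L(T-r)$ is continuous, $\mathbb{F}$-adapted, and non-increasing (its derivative is $u^*-L \leq 0$), so $\rho$ is an $\mathbb{F}$-stopping time with $\tau_0\leq \rho\leq T$. I would then set
$$
\tilde u(s) := u^*(s)\mathbf{1}_{[\tau_0,\rho)}(s) + L\mathbf{1}_{[\rho,T]}(s).
$$

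Admissibility: $\tilde u$ is adapted and $[0,L]$-valued, and continuity of the budget process $r\mapsto \int_{\tau_0}^r u^*+L(T-r)$ forces $\int_{\tau_0}^\rho u^*(s)\,ds + L(T-\rho) = 1-Y_0$ whenever $\rho>\tau_0$, which yields $\int_{\tau_0}^T \tilde u(s)\,ds \leq 1-Y_0$ in all cases, so $\tilde u\in U(\tau_0,Y_0)$. Optimality is immediate because $\tilde u \geq u^*$ pointwise and $X\geq 0$, hence
$$
E\left[\left.\int_{\tau_0}^T \tilde u(s)X(s)\,ds\,\right|\mathcal{F}_{\tau_0}\right] \geq E\left[\left.\int_{\tau_0}^T u^*(s)X(s)\,ds\,\right|\mathcal{F}_{\tau_0}\right] = \bar J(\tau_0,Y_0).
$$
Property (\ref{U'}) then follows by a direct case distinction: on $[\rho,T]$ one has $\tilde u = L$ and $\int_{\tau_0}^r\tilde u + L(T-r) = \int_{\tau_0}^\rho u^* + L(T-\rho) \leq 1-Y_0$, while on $[\tau_0,\rho)$ the minimality of $\rho$ gives $\int_{\tau_0}^r u^* + L(T-r) > 1-Y_0$, so the triggering set for (\ref{U'}) is not entered. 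Property (\ref{U'2}) follows from the observation that $\rho = \tau_0$ exactly on $\{L(T-\tau_0)\leq 1-Y_0\}$, so on $\{L(T-\tau_0)\geq 1-Y_0\}$ either $\rho>\tau_0$ (and the continuity identity delivers $\int_{\tau_0}^T\tilde u = 1-Y_0$) or $L(T-\tau_0)=1-Y_0$ (and $\int_{\tau_0}^T\tilde u = L(T-\tau_0)=1-Y_0$).

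The argument above is essentially mechanical once the existence of a starting optimal $u^*$ is available. The only genuinely delicate step I anticipate is the verification that $\rho$ is a bona fide $\mathbb{F}$-stopping time and that the endpoint cases $\rho\in\{\tau_0,T\}$ patch together consistently with the continuity of the budget process; the latter is what ensures $M(\rho)=1-Y_0$ at the first hitting time whenever $\rho>\tau_0$, and hence is the pivot of both (\ref{U'}) and (\ref{U'2}).
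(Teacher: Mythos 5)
Your construction is correct. The budget process $M(r)=\int_{\tau_0}^r u^*(s)\,ds+L(T-r)$ is continuous, adapted and nonincreasing, so $\rho$ is a stopping time and, whenever $\rho>\tau_0$, the first-hitting identity $M(\rho)=1-Y_0$ holds; this gives admissibility of $\tilde u$, the equality $\int_{\tau_0}^T\tilde u=1-Y_0$ on $\{L(T-\tau_0)\geq 1-Y_0\}$ (i.e.\ (\ref{U'2})), and, since $\tilde M=M>1-Y_0$ strictly before $\rho$ while $\tilde u=L$ from $\rho$ on, also (\ref{U'}). Optimality via $\tilde u\geq u^*$ and $X\geq 0$ is clean. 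Note, however, that the paper does not prove this proposition at all: it is dispatched as ``a direct consequence of Proposition 3.2 in [BD]'', so your argument is genuinely different in that it reduces the claim to the bare existence of \emph{some} optimal control and then performs an explicit surgery to enforce the two structural properties. What your route buys is transparency: it shows that (\ref{U'}) and (\ref{U'2}) are not special features of the particular optimizer built in [BD] but can be grafted onto any optimizer. What it cannot avoid importing is the existence statement itself for random initial data $(\tau_0,Y_0)$: Theorem \ref{thm:main_old} b') as quoted in this paper only covers deterministic $(t,y)$, so your parenthetical appeal to the ``pasting/measurable-selection arguments developed in [BD]'' is doing real work and is, in effect, the same citation of Proposition 3.2 in [BD] that the authors make --- just confined to a smaller claim.
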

The above proposition actually is a direct consequence of Proposition 3.2 in [BD].

As a corollary to Theorem \ref{thm:main_old} we observe that it is optimal to exercise a submartingale as late as possible and a supermartingale as early as possible,
which is as expected.
\begin{cor}\label{cor:submartingale}
 Suppose $X$ satisfies the standing assumptions,  $\tau_0$ is a $[0,T]$-valued stopping time and $Y_0$ is a $\mathcal{F}_{\tau_0}$-measurable, $(-\infty,1]$-valued random variable.
\\[0.1cm] (i) If $X$ is an RCLL submartingale, then
$ u^{\tau_0,Y_0}=L {\bf 1}_{[(T-(1-Y_0)/L)\vee \tau_0,T]}$ is optimal for $\bar J(\tau_0,Y_0)$.
\\[0.1cm] (ii) If  $X$ is an RCLL supermartingale, then
$ u^{\tau_0,Y_0}=L {\bf 1}_{[\tau_0,(\tau_0+(1-Y_0)/L)\wedge T]}$ is optimal for $\bar J(\tau_0,Y_0)$.
\end{cor}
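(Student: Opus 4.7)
My plan is to prove both statements by a transport/optional-sampling argument that avoids the BSPDE machinery of Theorem \ref{thm:main_old}. I focus on (i) with $X$ an RCLL submartingale, and set $\sigma^*:=(T-(1-Y_0)/L)\vee \tau_0$, so that $u^*=L\mathbf{1}_{[\sigma^*,T]}$. On the event $\{L(T-\tau_0)<1-Y_0\}=\{\sigma^*=\tau_0\}$ the candidate satisfies $u^*\equiv L\geq u$ pointwise, and the claim is immediate from $X\geq 0$. On the complementary event $\sigma^*=T-(1-Y_0)/L$, $L(T-\sigma^*)=1-Y_0$, and the feasibility constraint $\int_{\tau_0}^T u(s) ds\leq 1-Y_0$ rearranges into $\int_{\tau_0}^{\sigma^*} u(s) ds \leq \int_{\sigma^*}^T (L-u(s)) ds$, so there is enough ``free capacity'' on $[\sigma^*,T]$ to absorb the ``excess mass'' of $u$ on $[\tau_0,\sigma^*]$.

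To make this transport precise I introduce the adapted continuous increasing processes $F(s):=\int_{\tau_0}^s u(r) dr$ on $[\tau_0,\sigma^*]$ and $G(s):=\int_{\sigma^*}^s (L-u(r)) dr$ on $[\sigma^*,T]$, together with their right-continuous inverses $F^{-1},G^{-1}$. Under the usual conditions these inverses are bounded stopping times with values in $[\tau_0,\sigma^*]$ and $[\sigma^*,T]$, respectively, so in particular $F^{-1}(t)\leq \sigma^*\leq G^{-1}(t)$ for every $t\in[0,F(\sigma^*)]$. A change of variables yields
\begin{equation*}
\int_{\tau_0}^{\sigma^*} u(s) X(s) ds = \int_0^{F(\sigma^*)} X(F^{-1}(t)) dt, \qquad \int_{\sigma^*}^T (L-u(s)) X(s) ds = \int_0^{G(T)} X(G^{-1}(t)) dt.
\end{equation*}
The standing assumption $E[\sup_{t\leq T} X(t)^p]<\infty$ makes the submartingale $X$ of class (D), so Doob's optional sampling gives $E[X(F^{-1}(t))|\mathcal{F}_{\tau_0}]\leq E[X(G^{-1}(t))|\mathcal{F}_{\tau_0}]$ for each $t\in[0,F(\sigma^*)]$. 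Integrating in $t$ and using Fubini together with $X\geq 0$ and $F(\sigma^*)\leq G(T)$ delivers
\begin{equation*}
E\left[\int_{\tau_0}^{\sigma^*} u(s) X(s) ds \,\middle|\, \mathcal{F}_{\tau_0}\right] \;\leq\; E\left[\int_{\sigma^*}^T (L-u(s)) X(s) ds \,\middle|\, \mathcal{F}_{\tau_0}\right],
\end{equation*}
and adding $E[\int_{\sigma^*}^T u(s) X(s) ds|\mathcal{F}_{\tau_0}]$ to both sides produces $E[\int_{\tau_0}^T u X ds|\mathcal{F}_{\tau_0}]\leq E[\int_{\tau_0}^T u^* X ds|\mathcal{F}_{\tau_0}]$, which with $u^*\in U(\tau_0,Y_0)$ shows $u^*$ is optimal.

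Part (ii) is entirely symmetric: for the supermartingale I transport mass backward from $[\sigma^*,T]=[(\tau_0+(1-Y_0)/L)\wedge T,T]$ to $[\tau_0,\sigma^*]$, and the reversed optional-sampling inequality $E[X(\mathrm{late})|\mathcal{F}_{\tau_0}]\leq E[X(\mathrm{early})|\mathcal{F}_{\tau_0}]$ flips the sign of the pointwise comparison. The only steps requiring some care are verifying that $F^{-1}$, $G^{-1}$ are stopping times in the usual-conditions filtration and that Doob's optional sampling applies under our integrability hypothesis (both routine from the class-(D) property); everything else is bookkeeping.
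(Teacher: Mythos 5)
Your argument is correct, but it takes a genuinely different route from the paper. The paper proves the corollary as an application of its verification theorem: it first reduces to deterministic pairs $(t,y)$ by a countable-approximation/continuity argument, then constructs a good version $V$ of the candidate value $\bar V(t,y)=E[\int_{t\vee(T-(1-y)/L)}^T LX(s)\,ds\,|\,\mathcal{F}_t]$, computes $D^-_yV(t,y)=-E[X((T-(1-y)/L)-)|\mathcal{F}_t]$ explicitly, and uses the submartingale property to check that $V$ solves the BSPDE of condition c') and that the candidate control solves the differential inclusion b'), so that Theorem \ref{thm:main_old}(ii) yields both $V=J$ and optimality. Your transport/optional-sampling proof bypasses the BSPDE machinery entirely, compares the candidate directly with an arbitrary admissible $u$, and works for general stopping times $\tau_0$ and random $Y_0$ without the reduction to deterministic pairs; what you give up is the explicit formula for the marginal value $D^-_yV$, which the paper's route produces as a by-product and which foreshadows the optimal-stopping representation of Theorem \ref{thm:smooth2}.

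One step deserves more care than your closing remark suggests, although it does go through: the upper limit $F(\sigma^*)$ in $\int_0^{F(\sigma^*)}X(F^{-1}(t))\,dt$ is random, so before applying Fubini and optional sampling you must write the integral as $\int_0^{LT}{\bf 1}_{\{t<F(\sigma^*)\}}X(F^{-1}(t))\,dt$ and check that the indicator is $\mathcal{F}_{F^{-1}(t)}$-measurable. This follows from the identity $\{t<F(\sigma^*)\}=\{F^{-1}(t)<\sigma^*\}$ (valid by continuity of $F$) together with the standard fact that $\{\rho<\sigma\}\in\mathcal{F}_\rho$ for stopping times $\rho,\sigma$; only then can you multiply the conditional submartingale inequality $E[X(G^{-1}(t))|\mathcal{F}_{F^{-1}(t)}]\geq X(F^{-1}(t))$ by the indicator and integrate. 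With that observation, and with the class-(D) property supplied by $E[\sup_t X(t)^p]<\infty$ and $X\geq 0$, your proof is complete.
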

This corollary is indicated at the end of Example 2.2 in [BD]. For sake of completeness we here provide a proof.
\begin{proof}
We only prove the submartingale case, as the supermartingale case is similar.
For any  $[0,T]$-valued stopping time $\tau_0$ and any  $\mathcal{F}_{\tau_0}$-measurable, $(-\infty,1]$-valued random variable $Y_0$ we define
$$
\bar V(\tau_0,Y_0)=E\left[\left.  \int_{\tau_0}^T u^{\tau_0,Y_0}(s)X(s)ds \right|\mathcal{F}_t\right]=E\left[\left.  \int_{\tau_0\vee(T-(1-Y_0)/L)}^T LX(s)ds \right|\mathcal{F}_t\right].
$$
Denote the good version of the value process constructed in Proposition \ref{prop:goodversion}  by $J$. Then,
$$
J(\tau_0,Y_0)\geq  \bar V(\tau_0,Y_0)
$$
$P$-almost surely.
Hence, $u^{\tau_0,Y_0}$ is optimal if and only if
\begin{equation}\label{hilf0003}
 0=E[J(\tau_0,Y_0)-\bar V(\tau_{\nd{0}},Y_0) ]=E\left[J(\tau_0,Y_0)-\int_{\tau_0\vee(T-(1-Y_0)/L)}^T LX(s)ds\right].
\end{equation}
Notice that, if (\ref{hilf0003}) holds for all  deterministic pairs $(\tau_0,Y_0)$, then it is also true for general pairs. Indeed,
$J(\tau_0,Y_0)= \bar V(\tau_0,Y_0)$
then  holds $P$-almost surely for pairs $(\tau_0, Y_0)$
which take at most countably many values. By the continuity properties of $J$ one can then pass to the limit to obtain (\ref{hilf0003}) for general pairs $(\tau_0, Y_0)$.

It is hence sufficient to show optimality of $u^{t,y}$ for deterministic $t\in [0,T]$ and $y\in (-\infty,1]$.
 By similar arguments than at the beginning of the proof of Proposition 3.5 in [BD], there is a version $V(t,y)$ of
$\bar V(t,y)$ which satisfies  a') in Theorem \ref{thm:main_old}. By the definition of $\bar V$ it is straightforward that
$$
D^-_y V(t,y)=\left\{\begin{array}{cl} - E[X((T-(1-y)/L)-)|\mathcal{F}_t], & t< T-(1-y)/L \\ 0, & t\geq T-(1-y)/L, \end{array}\right.
$$
where $X(s-)$ denotes the left limit of $X$ at $s$. Hence, by the submartingale property,
$
X(s)+D^-_y V(s,y) \leq 0
$
for $s< T-(1-y)/L$ and
$
X(s)+D^-_y V(s,y)=X(s) \geq 0
$
for  $s\geq T-(1-y)/L$. Thus, $V$ solves the BSPDE in c') of Theorem \ref{thm:main_old}. It is also straightforward to see that $u^{t,y}=L {\bf 1}_{[t\vee (T-(1-y)/L),T]}$
solves the differential inclusion in b') of Theorem \ref{thm:main_old} with $V$ in place of $Y$, because
$$
 X(s)+D_y^-V(s,y+\int_t^s  u^{t,y}(r)dr)=\left\{\begin{array}{cl} X(s)- E[X((T-(1-y)/L)-)|\mathcal{F}_s]\leq 0, & s< T-(1-y)/L \\ X(s)\geq 0, & s\geq T-(1-y)/L. \end{array}\right.
$$
Consequently, by Theorem \ref{thm:main_old}, $u^{t,y}$ is optimal.
\end{proof}

\section{Uniqueness of classical solutions}

This section is devoted to the proof of the uniqueness part  of Theorem \ref{thm:main_new}. It relies on Theorem \ref{thm:main_old}, (ii).
 This is what we are actually going to show:
\begin{thm}\label{thm:uniqueness}
 Under the standing assumption, suppose that $(J(t,y),\; t\in [0,T], y \in \bar \Delta_t)$  is a \cb{measurable }random field satisfying a), b), and c) of Theorem \ref{thm:main_new}.
Define $$J(t,y):=J(t,1-L(T-t)) ,\quad t\in [0,T],\; y<1-L(T-t).$$
Then
$(J(t,y),\; t\in [0,T], y \in (-\infty,1])$
fulfills conditions  a'), b'), and c') of Theorem \ref{thm:main_old}. In particular,
it is a version of $\bar J$, i.e.
for every $(t,y)\in[0,T]\times (-\infty,1]$
$$
J(t,y)=\bar J(t,y)
$$
$P$-almost surely.
\end{thm}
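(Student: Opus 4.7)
The plan is to verify that the extended random field $J$ satisfies conditions a'), b'), c') of Theorem \ref{thm:main_old} on the full domain $[0,T]\times(-\infty,1]$ and then invoke the uniqueness part of that theorem. Condition a') is almost immediate: since the extension makes $J$ constant in $y$ on $(-\infty, 1-L(T-t)]$, the left derivative $D^-_y J$ vanishes there and matches the constant $0$ limit from the left at $y=1-L(T-t)$, so left-continuity carries across the junction, and the Lipschitz bound transfers with the same constant because $J$ does not vary over the extension region.

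For c'), I would split according to the position of $y$ relative to $1-L(T-t)$. When $y\in\Delta_t$, on $[t,T-(1-y)/L)$ the point $y$ lies in $\Delta_s$, so b) identifies the left derivative with $\partial_y J(s,y)$, while on $(T-(1-y)/L,T]$ the extension forces $D^-_y J(s,y)=0$ and $(X(s)+D^-_y J(s,y))_+=X(s)$; together these identifications turn the BSPDE in c) exactly into c'). For $y\le 1-L(T-t)$ both sides of c') reduce to $E[L\int_t^T X(s)\,ds\mid\mathcal{F}_t]$ via the boundary condition in c) and the same extension observation. At $y=1$, I would pass to the limit $y\uparrow 1$ in c) using the Lipschitz bound of a) as a dominating function together with left-continuity of $D^-_y J$; the limit forces $(X(s)+D^-_y J(s,1))_+=0$ a.s.\ and delivers c') at the terminal boundary.

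The main obstacle is b'), which demands an admissible control $u^{t,y}\in U(t,y)$ solving the differential inclusion driven by $D^-_y J$. For $y\le 1-L(T-t)$, the choice $u^{t,y}\equiv L$ works, since the trajectory $y+L(s-t)$ stays inside the extension region, where $D^-_y J=0$ and hence $X+D^-_y J\ge 0$ is compatible with $u=L$. For $y\in\Delta_t$ I would take the optimal control for $\bar J(t,y)$ provided by Proposition \ref{prop:optimalcontrol}: property (U') automatically enforces $u^{t,y}=L$ on any portion of the trajectory that enters the extension region, where the inclusion is again trivial. The delicate step is the complementary part of the trajectory, which stays in $\Delta_s$; here I would analyze $K(s):=J(s, y+\int_t^s u^{t,y}(r)\,dr)+\int_t^s u^{t,y}(r)X(r)\,dr$ by applying the BSPDE c') for each fixed slice $y\mapsto J(\cdot,y)$ as a martingale driver. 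A direct computation shows that the drift of $K$ is non-positive and vanishes precisely when the differential inclusion holds, so $K$ is a supermartingale; applying the same argument to an arbitrary $u\in U(t,y)$ gives the bound $J\ge \bar J$, while the optimality of $u^{t,y}$ for $\bar J$ forces $K$ to be a genuine martingale, so the inclusion holds $\lambda_{[t,T]}\otimes P$-almost surely. Once b') is established, Theorem \ref{thm:main_old}(ii) yields $J(t,y)=\bar J(t,y)$ on the entire domain, completing the proof.
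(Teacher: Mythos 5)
Your handling of a') and c') matches the paper's Lemma~\ref{lem:ac} almost verbatim: the constant extrapolation kills $D^-_yJ$ on the extension region, c) turns into c') on $\Delta_t$, the boundary condition covers $y\le 1-L(T-t)$, and the case $y=1$ follows by the continuity in a'). That part is fine.

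The gap is in b'), and it is a genuine one: your argument is circular. You propose to take the optimal control $u^{t,y}$ for $\bar J(t,y)$ and deduce that it solves the differential inclusion for the \emph{candidate} field $J$ from the assertion that ``optimality of $u^{t,y}$ for $\bar J$ forces $K$ to be a genuine martingale.'' But the supermartingale property of $K$ only gives $K(t)=J(t,y)\ge E[K(T)\mid\mathcal F_t]=E[\int_t^T u^{t,y}X\,ds\mid\mathcal F_t]=\bar J(t,y)$; to upgrade this to equality (hence to the martingale property, hence to the inclusion) you need $J(t,y)=\bar J(t,y)$, which is precisely the conclusion of the theorem. The one-sided bound $J\ge\bar J$ that your supermartingale argument does deliver cannot be closed without exhibiting \emph{some} admissible control for which $K$ is a martingale, i.e.\ without solving the differential inclusion driven by $D^-_yJ$ --- and that existence statement is exactly what b') asks for and what the section's real work consists of. (A secondary issue: the ``direct computation'' showing that $K$ has nonpositive drift is itself the verification argument of Theorem~\ref{thm:main_old}(ii) in the non-Markovian setting, where $J$ is only Lipschitz in $y$ and RCLL in $t$; the whole point of the paper's strategy is to invoke that theorem as a black box rather than reprove it.) The paper instead constructs a solution of the inclusion directly: it approximates the discontinuous right-hand side by a monotone sequence of Lipschitz fields $F_n$, solves the corresponding ODEs, extracts a weak $L^2$-limit of the controls, uses monotonicity to upgrade the state trajectories to almost sure convergence, and then uses the continuity of $y\mapsto D^-_yJ(s,\cdot)$ from condition b) to identify the limit as a solution of the inclusion on the open region; the resulting control is then patched with $0$ or $L$ after the exit times $\bar\sigma_U,\bar\sigma_L$ to land in $U(t,y)$. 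Note that condition b) is used essentially in that identification step, whereas in your proposal it plays no role in b') at all --- another sign that the construction of the control is being skipped rather than carried out.
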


For the remainder of this Section we assume that $(J(t,y),\; t\in [0,T], y \in \bar \Delta_t)$  is a \cb{measurable }random field satisfying a), b), and c) of Theorem \ref{thm:main_new},
and that it is extended to $[0,T]\times (-\infty,1]$ as described in Theorem \ref{thm:uniqueness}. We first verify conditions a') and c') of Theorem \ref{thm:main_old}.

\begin{lem}\label{lem:ac}
 $J$ satisfies a') and c') in Theorem \ref{thm:main_old}.
\end{lem}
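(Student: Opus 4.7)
The plan is to verify directly that the extended random field $(J(t,y))_{t\in[0,T],\,y\in(-\infty,1]}$ satisfies conditions a') and c') of Theorem \ref{thm:main_old}. The construction of the extension dictates the shape of the proof: since $J(t,\cdot)$ is constant on $(-\infty, 1-L(T-t)]$, both properties decompose into a piece inherited from a), b), c) of Theorem \ref{thm:main_new} on the interior region $\Delta_t$, and a transparent check on the ``extension zone'' $(-\infty, 1-L(T-t))$, where the derivative vanishes and the standing assumption $X \geq 0$ carries the argument.

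For a') I would first extend the Lipschitz bound from $\bar\Delta_t$ to all of $(-\infty, 1]$: for $y_1 < 1-L(T-t) \le y_2$ I apply the triangle inequality through $1-L(T-t)$ and use that $J(t,\cdot)$ is constant on $[y_1, 1-L(T-t)]$, so that the constant $C(\omega)$ from a) still works. Next I would exhibit $D^-_y J(t, \omega, y)$ globally: on $\Delta_t$ it coincides with the classical derivative provided by b); on $(-\infty, 1-L(T-t)]$ it is identically $0$, the boundary point included, since the left-hand limit from the extension side is trivially $0$. Left-continuity in $y$ is then immediate on each of the two pieces (identically $0$ below the threshold, continuous above by b)), and the two pieces match at the junction $y = 1-L(T-t)$.

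For c') fix $(t, y) \in [0,T] \times \Delta_t$ and set $s^* := T - \tfrac{1-y}{L} \in (t, T]$. The key computation is to split the integral $L \int_t^T (X(s) + D^-_y J(s, y))_+ \, ds$ at $s^*$. For $s \in (s^*, T]$ one has $y < 1 - L(T-s)$, so by construction of the extension $D^-_y J(s, y) = 0$; combined with $X \geq 0$ this yields $(X(s) + D^-_y J(s, y))_+ = X(s)$. For $s \in [t, s^*]$ the point $y$ lies in $\bar\Delta_s$, and by b) together with c) of Theorem \ref{thm:main_new} the left derivative agrees with $\frac{\partial}{\partial y} J(s, y)$ for $\lambda_{[t, s^*]} \otimes P$-almost every $(s, \omega)$. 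Taking conditional expectations and matching term by term then reproduces the right-hand side of \eqref{BSPDE_new}, which establishes c') for $y \in \Delta_t$. The case $y = 1 - L(T-t)$ collapses to $s^* = t$, and both sides equal $E[\int_t^T L X(s)\,ds \mid \mathcal{F}_t]$ by \eqref{boundary}; for $y < 1-L(T-t)$ one has $s^* < t$, so $D^-_y J(s, y) = 0$ on all of $[t,T]$ and the identity reduces to \eqref{boundary} again by the very definition of the extension. The terminal condition $J(t, 1) = 0$ is part of \eqref{boundary}.

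The only delicate point I anticipate is the boundary $y = 1 - L(T-t)$: property a) of Theorem \ref{thm:main_new} does not by itself pin down $D^-_y J$ there, since $\bar\Delta_t$ contains no points strictly to its left. The extension supplies the natural value $0$, and nonnegativity of $X$ then ensures that the decomposition of $[t, T]$ at $s^*$ glues \eqref{BSPDE_new} cleanly onto the form demanded by c'). Apart from this alignment, the verification is careful book-keeping, with no deeper obstruction foreseen.
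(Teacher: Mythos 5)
Your argument follows the paper's own proof almost verbatim: a') is read off from a) of Theorem \ref{thm:main_new} together with the constant extrapolation, and c') is obtained by splitting the integral at $s^*=T-\frac{1-y}{L}$, using that $D^-_yJ(s,y)=0$ for $s\geq s^*$ by the extrapolation and that $X\geq 0$ turns $(X(s)+0)_+$ into $X(s)$; the boundary cases $y=1-L(T-t)$ and $y<1-L(T-t)$ are handled exactly as in the paper via \eqref{boundary}.

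There is, however, one case you do not cover: the BSPDE identity of c') at $y=1$. Condition c') requires \emph{both} equations for every $y\in(-\infty,1]$, so in particular you must show
$0=J(t,1)=E\bigl[L\int_t^T (X(s)+D^-_yJ(s,1))_+\,ds\,\big|\,\mathcal{F}_t\bigr]$;
noting that $J(t,1)=0$ is part of \eqref{boundary} only disposes of the second equation. Your splitting argument does not apply here, because c) of Theorem \ref{thm:main_new} gives \eqref{BSPDE_new} only for $y$ in the open interval $\Delta_t$, and b) gives differentiability only on the open interval, so $\frac{\partial}{\partial y}J(s,1)$ need not exist. The paper closes this case by letting $y\uparrow 1$ in the identity already established on $\Delta_t$, using the left-continuity of $D^-_yJ$ and the Lipschitz bound with integrable constant from a') to pass to the limit on both sides (conditional dominated convergence on the right). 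This identity at $y=1$ is not cosmetic: it is used immediately afterwards in the verification of b') to deduce $X(s)+D^-_yJ(s,1)\leq 0$ almost everywhere. Adding this limiting step makes your proof complete.
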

\begin{proof}
Property a') is a direct consequence of a) in Theorem \ref{thm:main_new} and the constant extrapolation of $J$.
 Property  c') holds for $t\in [0,T]$ and
$1-L(T-t)<y<1$ by property c) in Theorem \ref{thm:main_new}, \cb{noting that $D^-_yJ(s,y)=0$ for $s\geq T-(1-y)/L$ by the constant extrapolation}. It then extends to $y=1$ by the continuity properties in a'). Finally, for $y\leq 1-L(T-t)$,
$$
J(t,y)=J(t,1-L(T-t))=E\left[\left.  \int_t^T LX(s)ds \right|\mathcal{F}_t\right]=E\left[\left. L \int_t^T (X(s)+D^-_yJ(s,y))_+ds\right|\mathcal{F}_t\right],
$$
where we used the boundary condition (\ref{boundary}) and \cb{again }the fact that $D^-_yJ(t,y)=0$ for  $y\leq 1-L(T-t)$.
\end{proof}

In order to prove  that the differential inclusion (\ref{eq:optimality}) has a solution, and hence, $J$ satisfies condition  b') in Theorem \ref{thm:main_old},
we denote
\begin{eqnarray*}
 \Gamma_+(t,\omega)&=&\{y\in (1-L(T-t),1);\; X(t,\omega)+D_y^-J(t,\omega,y)>0 \}\\
\Gamma_-(t,\omega)&=&\{y\in (1-L(T-t),1);\; X(t,\omega)+D_y^-J(t,\omega,y)<0 \} \\
\Gamma_0(t,\omega)&=&\mathbb{R} \setminus (\Gamma_+(t,\omega) \cup \Gamma_-(t,\omega)).
\end{eqnarray*}
As a preparation we first prove the following lemma.

\begin{lem}
 For every $(t_0,y_0)\in [0,T]\times \mathbb{R}$ there is an adapted process $\hat u$ such that
\begin{equation}\label{eq:diffincl1}
 \hat u(s) \in \left\{\begin{array}{cl} \{0\}, & y_0+\int_{t_0}^s  \hat u(r)dr\in \Gamma_-(s)\\ \{L\},
&  y_0+\int_{t_0}^s  \hat u(r)dr\in \Gamma_+(s) \\
 \ [0,L], &
 y_0+\int_{t_0}^s  \hat u(r)dr\in \Gamma_0(s) \end{array}\right.
\end{equation}
$\lambda_{[t_0,T]}\otimes P$-almost surely.

\end{lem}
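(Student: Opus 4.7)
My plan is to construct $\hat u$ as the weak-$*$ limit of a sequence of piecewise constant Euler-type approximations. First I would introduce the shorthand $A(s,\omega,y) := X(s,\omega) + D^-_y J(s,\omega,y)$, so that the three regions $\Gamma_+, \Gamma_-, \Gamma_0$ are precisely where $A$ is positive, negative, or zero (within the interval $(1-L(T-s),1)$), with the complement of that interval automatically belonging to $\Gamma_0$ (since by the constant extrapolation $D^-_y J \equiv 0$ outside $\bar\Delta_s$, and $X\ge 0$). By a) of Theorem \ref{thm:main_new}, $A$ is left-continuous in $y$; by b), for each fixed $s$, $A(s,\omega,\cdot)$ is continuous on $\Delta_s$ for $\omega$ in a set of full measure.

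\textbf{Construction.} For each $n$, choose a deterministic partition $t_0 = s_0^n < s_1^n < \cdots < s_{N_n}^n = T$ of mesh tending to zero. Define inductively $Y^{(n)}(s_k^n) := y_0 + \int_{t_0}^{s_k^n}\hat u^{(n)}(r)\,dr$ and
\[ \hat u^{(n)}(s) := L\cdot \mathbf{1}_{\{A(s_k^n,\, Y^{(n)}(s_k^n)) > 0\}} \quad \text{for } s\in [s_k^n, s_{k+1}^n). \]
These approximations are $\mathbb{F}$-adapted and $\{0,L\}$-valued, and the paths $Y^{(n)}$ are uniformly $L$-Lipschitz in $s$. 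By weak-$*$ precompactness in $L^\infty(\lambda_{[t_0,T]}\otimes P)$ combined with Mazur's lemma (or closedness of predictable $[0,L]$-valued processes in the weak $L^2$-topology), a subsequence converges to an adapted $\hat u$, and the corresponding $Y^{(n)}$ converge uniformly on $[t_0, T]$ (via Arzel\`a--Ascoli applied $\omega$-wise) to $Y(s) := y_0+\int_{t_0}^s \hat u(r)\,dr$.

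\textbf{Verification.} To verify the inclusion I would fix $(s, \omega)$ with $Y(s,\omega)\in\Gamma_+(s,\omega)$. Since $A(s,\omega,\cdot)$ is continuous at $Y(s,\omega)\in\Delta_s$, there is a neighborhood where $A$ stays strictly positive. The uniform convergence $Y^{(n)} \to Y$ together with sufficient time-regularity of $A$ on $\{A > 0\}$ would then give $A(s_{k(s)}^n, Y^{(n)}(s_{k(s)}^n)) > 0$ eventually, forcing $\hat u^{(n)}(s) = L$ for all large $n$, so dominated convergence yields $\hat u(s)=L$ $\lambda\otimes P$-almost everywhere on $\{Y \in \Gamma_+\}$. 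The $\Gamma_-$ case is symmetric, and on $\Gamma_0$ no constraint needs to be checked since $[0,L]$ is the admitted set.

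\textbf{Main obstacle.} The principal difficulty is the joint time/space regularity of $A$: property b) only yields continuity of $A(s,\omega,\cdot)$ in $y$ for each fixed $s$, while $X$ is merely right-continuous, so one cannot quote pointwise joint continuity when passing to the limit in the discretization. I would handle this either by exploiting the integral representation in c) to extract enough $s$-regularity of $\partial_y J$ for $\lambda\otimes P$-almost every $(s,\omega)$, or, more cleanly, by recasting the construction as a measurable selection of an upper-semicontinuous, convex-valued, adapted set-valued map and invoking a Filippov-type existence theorem for differential inclusions, which circumvents the need for any pointwise control in $s$.
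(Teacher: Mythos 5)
Your overall strategy (Lipschitz/discretized approximations, weak $L^2$ compactness of the $[0,L]$-valued adapted controls, then a pointwise verification using the $y$-continuity from b)) is the same family of ideas as the paper's proof, but two steps that you leave open are precisely where the real work lies, and your Euler scheme makes one of them harder than necessary. First, the identification of the limit trajectory: weak $L^2(\lambda\otimes P)$ convergence of a subsequence $\hat u^{(n_k)}$ only gives, for each fixed $s$, weak $L^2(\Omega)$ convergence of $Y^{(n_k)}(s)$ to $Y(s)=y_0+\int_{t_0}^s\hat u(r)\,dr$; an $\omega$-wise Arzel\`a--Ascoli argument produces uniformly convergent subsequences whose choice depends on $\omega$, and there is no reason their limit is the $Y$ built from the weak limit $\hat u$. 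Yet the verification step needs \emph{almost sure} convergence $Y^{(n)}(s)\to Y(s)$ in order to exploit the continuity of $y\mapsto D^-_yJ(s,\omega,y)$ on the exceptional-null-set event $\Omega_s$. The paper resolves exactly this by building the approximating vector fields $F_n(s,\omega,y)$ (a one-sided $y$-mollification of $L\mathbf{1}_{\Gamma_{+,m}}$) so that $F_n\le F_{n+1}$; a comparison argument then makes $y_n(s)$ nondecreasing in $n$, hence a.s. convergent, and dominated convergence reconciles the a.s. limit with the weak one. The paper itself flags that without such monotonicity one gets weak convergence only. Your proposal has no substitute for this.

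Second, your "main obstacle" is real and is an artifact of freezing time at the grid points $s_k^n$: the verification then requires control of $A(s_k^n,\cdot)$ versus $A(s,\cdot)$, i.e. joint time regularity of $X+D^-_yJ$, which is not available ($X$ is merely right-continuous and nothing is known about $t\mapsto D^-_yJ(t,y)$ beyond what follows from $J(\cdot,y)$ being RCLL). The paper sidesteps this entirely by mollifying only in $y$ and keeping the exact time argument $s$ in $F_n(s,\omega,y)$; Lipschitz continuity in $y$ (with constant $L2^n$) suffices for a unique adapted Carath\'eodory solution, and the verification never compares different time points. Your two proposed remedies do not close the gap: the representation in c) does not yield pointwise-in-$s$ regularity of $\partial_yJ$, and a Filippov-type selection theorem is a deterministic existence result that does not by itself produce an $\mathbb{F}$-adapted solution, which is the essential requirement here. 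As written, the proposal is therefore incomplete at both of the decisive points.
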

\begin{proof}
We apply a somewhat standard technique
approximating the differential inclusion by a sequence of differential equations with Lipschitz  coefficients, see e.g. the textbook by \citet{AC}.
The crucial observation is that we can construct the sequence of approximating differential equations in a monotonic way, which leads to almost sure convergence
(while without monotonicity one obtains
weak $L^2$-convergence only).

 To this
end define for $m\in \mathbb{N}$
$$
 \Gamma_{+,m}(t,\omega)=\{y\in (1-L(T-t),1);\; X(t,\omega)+D_y^-J(t,\omega,y)\geq 1/m \},
$$
and
for $n\in \mathbb{N}$, $s\in [0,T]$, and $y\in \mathbb{R}$
$$
F_n(s,\omega,y)=2^n\int_{y-2^{-n}}^y  L{\bf 1}_{\{v;\;\exists {m(\omega)\in\mathbb N}\; \forall \eta \in [0,2^{-n}] \;\; v+\eta \in \Gamma_{+,m(\omega)}(s,\omega)\}} dv.
$$
Notice that due to the leftcontinuity assumption on $D^-_yJ$
\begin{eqnarray*}
 && \exists {m(\omega)\in\mathbb N}\;\forall \eta \in [0,2^{-n}] \;\; v+\eta \in \Gamma_{+,m(\omega)}(s,\omega)\\
&\Leftrightarrow& v\in (1-L(T-s),1-2^{-n}) \textnormal{ and } \inf_{\eta \in [0,2^{-n}]\cap \mathbb{Q}} X(s,\omega)+D^-_yJ(s,\omega,v+\eta)>0.
\end{eqnarray*}
Hence, the integrand in the definition of $F_n$ is measurable (as function in $(s,\omega,v)$) and, in particular, $F_n(\cdot,y)$ is \cb{an $(\mathcal{F}_s)_{s\in[t_0,T]}$-adapted process  }for every $y\in \mathbb{R}$. Moreover, by construction,
$F_n$ is Lipschitz in $y$ with constant $L2^n$ (independent of $s,\omega$).
Hence there is a unique (up to indistinguishability) \cb{continuous and }adapted process $y_n$ which satisfies
$$
y_n(s)=y_0+\int_{t_0}^s F_n(r, y_n(r)) dr,\quad s\in [t_0,T].
$$
We define the sequence of $[0,L]$-valued adapted process $(u_n)_{n\in \mathbb{N}}$ via
$$
u_n(s)=F_n(s, y_n(s)),\quad  s\in [t_0,T].
$$
This sequence belongs to the set of adapted and $[0,L]$-valued processes, which as a subset of  $L^2([\cb{t_0},T]\times \Omega)$ is bounded, closed and convex, and, thus, weakly compact.
Consequently, there is an adapted $[0,L]$-valued process $\hat u$ and a subsequence $(n_k)$ such that
$$
u_{n_k} \rightarrow \hat u,\quad k\rightarrow \infty
$$
weakly in $L^2([t_0,T]\times \Omega)$. We claim that $\hat u$ satisfies  (\ref{eq:diffincl1}). In order to see this, fix $s\in [t_0,T]$ and choose
an arbitrary
$\xi\in L^2(\Omega)$. Then, $\xi {\bf 1}_{[t_0,s]} \in L^2([t_0,T]\times \Omega)$. Hence,
\begin{eqnarray*}
&& E[\xi (y_{n_k}(s)-y_0)]= E[\int_{t_0}^T  \xi {\bf 1}_{[t_0,s]}(r) u_{n_k} (r) dr]\\ &\rightarrow & E[\int_{t_0}^T  \xi {\bf 1}_{[t_0,s]}(r) \hat u (r) dr]
= E[\xi \int_{t_0}^s \hat u(r) dr],
\end{eqnarray*}
i.e. $y_{n_k}(s)$ converges to $y_0+\int_{t_0}^s \hat u(r) dr$ weakly in $L^2(\Omega)$.
We shall now show that this convergence holds almost surely, indeed. To this end we first observe that $F_n(s,y)\leq F_{n+1}(s,y)$ for every pair $(s,y)$, because
\begin{eqnarray*}
 F_n(s,y)&=&2^n\int_{y-2^{-(n+1)}}^y   L{\bf 1}_{\{v;\;\exists {m\in\mathbb N}\, \forall \eta \in [0,2^{-n}] \; v-2^{-(n+1)}+\eta \in \Gamma_{+,m}(s)\}} dv\\ &&+2^n\int_{y-2^{-(n+1)}}^y
 L{\bf 1}_{\{v;\;\exists {m\in\mathbb N}\, \forall \eta \in [0,2^{-n}] \; v+\eta \in \Gamma_{+,m}(s)\}} dv\\
&\leq& 2\cdot 2^n \int_{y-2^{-(n+1)}}^y  L{\bf 1}_{\{v;\;\exists {m\in\mathbb N}\, \forall \eta \in [0,2^{-(n+1)}] \; v+\eta \in \Gamma_{+,m}(s)\}} dv=F_{n+1}(s,y).
\end{eqnarray*}
As $y_{n+1}-y_n$ is (up to indistinguishability) the unique continuous solution of the linear differential equation
$$
y_{n+1}(s)-y_n(s)=\int_{t_0}^s (b_n(r)(y_{n+1}(r)-y_n(r))+c_n(r))dr
$$
for
\begin{eqnarray*}
 b_n(r)&=&{\bf 1}_{\{y_{n+1}(r)\neq y_n(r)\}} \frac{F_{n+1}(r,y_{n+1}(r))-F_{n+1}(r,y_{n}(r))}{y_{n+1}(r)- y_n(r)} \\
c_n(r)&=& F_{n+1}(r,y_{n}(r))-F_{n}(r,y_{n}(r))\geq 0
\end{eqnarray*}
we thus observe that, on a set of full $P$-measure independent of $n,s$,
$$
y_{n+1}(s)-y_n(s)= \int_{t_0}^s c_n(r) \exp\{\int_{r}^s b_n(u)du\}dr \geq 0,
$$
i.e. the sequence $y_n(s)$ is nondecreasing. As it is bounded by $y_0+L(T-t_0)$, the $P$-a.s. limit
$$
y(s)=\lim_{n\rightarrow \infty} y_n(s)
$$
exists for every $s\in[t_0,T]$. By dominated convergence $(y_n(s))$ converges to $y(s)$ strongly in $L^2(\Omega)$, and in view of the weak convergence obtained above
we conclude that for every $s\in[t_0,T]$
$$
y(s)=y_0+\int_{t_0}^s \hat u(r) dr,\quad \cb{P\textnormal{-a.s.}}
$$
We next introduce the $\mathcal{B}_{[t_0,T]}\otimes \mathcal{F}$-measurable set
$$
\cb{A}=\{(s,\omega);\ y(s,\omega)\in \Gamma_+(s,\omega) \textnormal{ and } \inf_{\eta \in [-2^{-n},2^{-n}]\cap \mathbb{Q}} X(s,\omega)+D^-_yJ(s,\omega,y_n(s,\omega)+\eta)\leq 0 \textnormal{ i.o.}\},
$$
where `i.o.' means that the event occurs for infinitely many $n$'s. Denote its $s$-section for $s\in [t_0,T]$ by
$$
A_s=\{\omega;\; (s,\omega)\in \cb{A}\}.
$$
As $y_n(s)$ converges to $y(s)$, we observe that $A_s\subset \Omega_s^c$, where $\Omega_s$ is the set of full $P$-measure in property   b) of Theorem \ref{thm:main_new},
on which $y\mapsto D^-_yJ(s,y)$ is continuous. Hence, $P(A_s)=0$.
Consequently, by weak convergence of $(u_{n_k})$ to $\hat u$,
\begin{eqnarray}\label{hilf0009}
 E[\int_{t_0}^T (L-\hat u(s)){\bf 1}_{\{y(s)\in \Gamma_+(s)\}}ds]
&=& \lim_{k\rightarrow \infty} E[\int_{t_0}^T (L-u_{n_k}(s)) {\bf 1}_{\{y(s)\in \Gamma_+(s)\}}{\bf 1}_{A_s^c}ds].
\end{eqnarray}
We now notice that \cb{for every $s\in[t_0,T]$ }(using convergence of $y_n(s)$ to $y(s)$ for the second identity
to conclude that $y_n(s)\in (1-L(T-s)+2^{-n},1-2^{-n})$ for sufficiently large $n$)
\begin{eqnarray*}
&& \{y(s)\in \Gamma_+(s)\}\cap A_s^c\\ &=& \{y(s)\in \Gamma_+(s)\}\cap\left( \bigcup_{N\in \mathbb{N}}\bigcap_{n\geq N} \left\{\inf_{\eta \in [-2^{-n},2^{-n}]\cap \mathbb{Q}} X(s)+D^-_yJ(s,y_n(s)+\eta)> 0\right\}\right)
\\ &=& \{y(s)\in \Gamma_+(s)\}\cap\left( \bigcup_{N\in \mathbb{N}}\bigcap_{n\geq N} \left\{\exists {m\in\mathbb N}\, \forall \eta \in [-2^{-n},2^{-n}] \; y_n(s)+\eta \in \Gamma_{+,m}(s)\right\}\right)
\\ &\subset& \{y(s)\in \Gamma_+(s)\}\cap\left( \bigcup_{N\in \mathbb{N}}\bigcap_{n\geq N} \{F_n(s,y_n(s))=L\} \right)\\ &=& \{y(s)\in \Gamma_+(s)\}\cap\left( \bigcup_{N\in \mathbb{N}}\bigcap_{n\geq N} \{u_n(s)=L\} \right)
\end{eqnarray*}
Thus, by dominated convergence, the right-hand side of (\ref{hilf0009}) converges to zero.
An analogous argument shows
$$
E[\int_{t_0}^T \hat u(s){\bf 1}_{\{y(s)\in \Gamma_-(s)\}}ds]=0.
$$
As $\hat u$ is $[0,L]$-valued these two identities imply that \cb{$\hat u$ }solves the differential inclusion (\ref{eq:diffincl1}).
\end{proof}

\begin{prop}
$J$ fulfills b') of Theorem \ref{thm:main_old}, i.e. for every $(t,y)\in [0,T]\times (-\infty,1]$, there is a $u^{t,y}\in U(t,y)$ which satisfies (\ref{eq:optimality}).
\end{prop}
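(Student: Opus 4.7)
The plan is to construct $u^{t,y}$ piecewise according to the position of $y$ relative to the strip $\Delta_t=(1-L(T-t),1)$. In the constant--extrapolation regime $y\leq 1-L(T-t)$ I take $u^{t,y}\equiv L$: the global constraint $L(T-t)\leq 1-y$ is immediate; the trajectory $y+\int_t^s L\,dr$ stays $\leq 1-L(T-s)$, so the constant extrapolation forces $D_y^-J(s,\cdot)=0$ along it; and $X\geq 0$ then makes every case of (\ref{eq:optimality}) compatible with $u=L$.

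At the upper boundary $y=1$ the only admissible control is $u^{t,y}\equiv 0$, so (\ref{eq:optimality}) reduces to showing $X(s)+D_y^-J(s,1)\leq 0$ for $\lambda_{[t,T]}\otimes P$-almost every $(s,\omega)$. I plan to obtain this by letting $y\uparrow 1$ in (\ref{BSPDE_new}): the left-hand side tends to $J(t,1)=0$ by the Lipschitz bound~a); the term $E[\int_{T-(1-y)/L}^T LX\,ds\mid\mathcal{F}_t]$ vanishes because the integration interval collapses; and monotone convergence handles the remaining term because concavity makes $y\mapsto D_y^-J(s,y)$ non-increasing and the left continuity in~a) identifies the pointwise limit $D_y^-J(s,y)\downarrow D_y^-J(s,1)$. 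This forces $(X(s)+D_y^-J(s,1))_+\equiv 0$ on $[t,T]\times\Omega$.

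In the interior $y\in\Delta_t$ I apply the preceding lemma with $(t_0,y_0)=(t,y)$ to obtain $\hat u$ satisfying (\ref{eq:diffincl1}), set $\hat y(s):=y+\int_t^s\hat u(r)dr$, and stop at $\tau:=\tau^*\wedge\tau^{**}$, where $\tau^*:=\inf\{s\geq t:\hat y(s)\leq 1-L(T-s)\}\wedge T$ is the first hit of the moving lower boundary and $\tau^{**}:=\inf\{s\geq t:\hat y(s)=1\}\wedge T$ that of the upper boundary. I then define $u^{t,y}$ to equal $\hat u$ on $[t,\tau)$, to equal $L$ on $[\tau,T]$ on the event $\{\tau=\tau^*\}$, and to vanish on $[\tau,T]$ on $\{\tau=\tau^{**}\}$. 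Admissibility is built in: after $\tau^*$ the trajectory rides the moving boundary $1-L(T-s)$, while after $\tau^{**}$ it stays at $1$. For each $s\in[t,\tau)$ one has $\hat y(s)\in\Delta_s$, so the sets $\Gamma_{\pm}$ from (\ref{eq:diffincl1}) coincide with those in (\ref{eq:optimality}) and the inclusion transfers directly. After $\tau^*$ the constant extrapolation gives $D_y^-J=0$ and $u=L$ satisfies (\ref{eq:optimality}) thanks to $X\geq 0$; after $\tau^{**}$ the estimate just established for $y=1$ takes over.

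The principal obstacle is the analysis at $y=1$: since (\ref{BSPDE_new}) is only asserted on $\Delta_t$, the key inequality $X(s)+D_y^-J(s,1)\leq 0$ must be recovered by a limiting argument that simultaneously exploits concavity of $J$ in~$y$ and the left continuity of $D_y^-J$ in~$y$ from~a). Everything else reduces to bookkeeping of the stopping times $\tau^*,\tau^{**}$ and verification of admissibility.
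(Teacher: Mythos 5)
Your construction is essentially the paper's: solve the differential inclusion of the preceding lemma from $(t,y)$, stop the trajectory at the first exit time from the moving strip, switch to $L$ after hitting the lower boundary and to $0$ after hitting the upper one, and dispose of the two post-exit regimes using $X\geq 0$ together with $D_y^-J=0$ below the lower boundary (constant extrapolation) and $X(s)+D_y^-J(s,1)\leq 0$ at the upper one. The separate treatment of the initial cases $y\leq 1-L(T-t)$ and $y=1$ is harmless (the single stopped construction already covers them).

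The one step that does not close as written is your derivation of $X(s)+D_y^-J(s,1)\leq 0$: you invoke concavity of $y\mapsto J(s,\cdot)$ to run a monotone convergence argument as $y\uparrow 1$, but concavity is \emph{not} among the hypotheses a)--c) of Theorem \ref{thm:main_new}. It is a property of the value process established in Proposition \ref{prop:goodversion}, which is exactly what the uniqueness direction is not allowed to presuppose. The interchange of limit and conditional expectation is nevertheless available from a) alone: left-continuity of $D_y^-J$ in $y$ gives the pointwise limit $D_y^-J(s,y)\to D_y^-J(s,1)$, and the Lipschitz bound provides the integrable dominating random variable $C$, so dominated convergence yields
$0=J(t,1)=E\bigl[L\int_t^T(X(s)+D_y^-J(s,1))_+\,ds\,\big|\,\mathcal{F}_t\bigr]$
and hence the desired inequality $\lambda_{[t,T]}\otimes P$-a.e. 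This is in fact how the paper organizes the argument: Lemma \ref{lem:ac} first extends property c') of Theorem \ref{thm:main_old} to $y=1$ ``by the continuity properties in a')'', and the proof of the present proposition then simply reads the inequality off from $J(t,1)=0$ and c'). With that substitution your proof is complete and coincides with the paper's.
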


\begin{proof}
 We first denote by  $\hat u$ the solution of (\ref{eq:diffincl1}) with $(t_0,y_0)=(t,y)$ constructed in  the previous lemma. Define
\begin{eqnarray*}
\bar\sigma_{\upper}&:=&\inf\{r\geq t;\; y+\int_{t}^r \hat u(s)ds\geq 1\} \wedge T , \\
\bar\sigma_L&:=&\inf\{r\geq t;\; y+\int_{t}^r \hat u(s)ds\leq 1-L(T-t)\} \wedge T,
\end{eqnarray*}
and
$$
u^{t,y}(r):=\left\{\begin{array}{cl} \hat u(r), & r\in [t,\bar\sigma_{\upper}\wedge \bar\sigma_L) \\
 L{\bf 1}_{\{\bar\sigma_L<\bar\sigma_{\upper}\}} & r \in  [\bar\sigma_{\upper}\wedge\bar\sigma_L,T] \end{array}\right.
$$
Let $y^{t,y}(r)=y+\int_t^r u^{t,y}(s)ds$ \cb{and $y(r)=y+\int_t^r \hat u(s)ds$}. We wish to show that $u^{t,y}$ solves (\ref{eq:optimality}), for which it suffices to verify that
\begin{eqnarray}\label{eq:hilf0001}
 E\left[\int_t^T |u^{t,y}(s)|\,{\bf 1}_{\{X(s)+D_y^-J(s,y^{t,y}(s))<0\}}+|u^{t,y}(s)-L|\,{\bf 1}_{\{X(s)+D_y^-J(s,y^{t,y}(s))>0\}}\,ds\right]=0.
\end{eqnarray}
We decompose
\begin{eqnarray*}
&&  E\left[\int_t^T |u^{t,y}(s)|\,{\bf 1}_{\{X(s)+D_y^-J(s,y^{t,y}(s))<0\}}+|u^{t,y}(s)-L|\,{\bf 1}_{\{X(s)+D_y^-J(s,y^{t,y}(s))>0\}}\,ds\right]\\
&=& E\left[\int_t^T |\hat u(s)|\,{\bf 1}_{\{X(s)+D_y^-J(s, y(s))<0\}} \,{\bf 1}_{\{s< \bar\sigma_{\upper}\wedge \bar\sigma_L\}} ds\right] \\ && +
E\left[\int_t^T L \,{\bf 1}_{\{X(s)+D_y^-J(s,y^{t,y}(s))<0\}} \,{\bf 1}_{\{s\geq \bar\sigma_{\upper}\wedge \bar\sigma_L\}} {\bf 1}_{\{\bar\sigma_L<\bar\sigma_{\upper}\}} ds\right]
\\ && + E\left[\int_t^T |\hat u(s)-L|\,{\bf 1}_{\{X(s)+D_y^-J(s, y(s))>0\}} \,{\bf 1}_{\{s< \bar\sigma_{\upper}\wedge\bar\sigma_L\}} ds\right]
\\ && +
E\left[\int_t^T L \,{\bf 1}_{\{X(s)+D_y^-J(s,y^{t,y}(s))>0\}} \,{\bf 1}_{\{s\geq \bar\sigma_{\upper}\wedge\bar\sigma_L\}} {\bf 1}_{\{\bar\sigma_L\geq \bar\sigma_{\upper}\}} ds\right]=(I)+(II)+(III)+(IV).
\end{eqnarray*}
The previous lemma implies that the first and the third term \cb{vanish. For }the second term we notice that $y^{t,y}(s)\leq 1-L(T-s)$ on $\{s\geq \bar\sigma_{\upper}\wedge\bar\sigma_L\}\cap
\{\bar\sigma_L<\bar\sigma_{\upper}\}$ by the definition of $\bar\sigma_L$ and $u^{t,y}$. But then $D_y^-J(s,y^{t,y}(s))=0$ by the constant extrapolation of $J$.  Hence,
$$
(II)\leq  E\left[\int_t^T L \,{\bf 1}_{\{X(s)<0\}} ds\right]=0,
$$
because of  the nonnegativivity of $X$.
Similarly one can treat the fourth term. We first observe by c') of Theorem \ref{thm:main_old} that
$$
0=J(t,1)=E\left[\left. L \int_t^T (X(s)+D^-_yJ(s,1))_+ds\right|\mathcal{F}_t\right],
$$
which yields
$$
X(s)+D_y^-J(s,1)\leq 0,\quad \lambda_{[t,T]}\otimes P\textnormal{-a.s.}
$$
However, we have  $y^{t,y}(s)=1$ on $\{s\geq \bar\sigma_{\upper}\wedge\bar\sigma_L\}\cap
\{\bar\sigma_L\geq \bar\sigma_{\upper}\}$ by the definition of $\bar\sigma_{\upper}$ and $u^{t,y}$. Hence,
$$
(IV)\leq   E\left[\int_t^T L \,{\bf 1}_{\{X(s)+D_y^-J(s,1)>0\}} ds\right]=0.
$$
We finally note that  $$\int_t^T u^{t,y}(r) dr= y^{t,y}(T)-y \leq 1-y.$$ Thus,
$u^{t,y}$ belongs to $U(t,y)$.
\end{proof}

In view of Theorem \ref{thm:main_old}, (ii), and Lemma \ref{lem:ac}, the above proposition concludes the proof of Theorem \ref{thm:uniqueness},
and hence of the uniqueness part (ii) of Theorem \ref{thm:main_new}.

\section{Regularity of the value process}

In this section we study regularity of the `good' version $J$ of the value process in the $y$-variable.
For the remainder of this section we always assume that $J$ is the random field constructed in Proposition \ref{prop:goodversion}. Notice that, by concavity,
the one-sided derivatives $D^{\pm}_yJ(t,y)$ exist.
In view of Theorem \ref{thm:main_old} and (\ref{eq:boundaryL}) we observe that $J$ satisfies the requirements of Theorem \ref{thm:main_new}, (i),  once we establish
the following result.
 \begin{thm}\label{thm:smooth}
 Suppose that $X$ satisfies the standing assumptions and it is LCE. Then, for every $t\in [0,T]$, there is a set $\Omega_t$ of full
$P$-measure such that, for every $\omega \in \Omega_t$, the mapping
$$
y\mapsto J(t,\omega,y)
$$
is continuously differentiable on  $(1-L(T-t),1)$. \cb{Moreover, for every $t\in[0,T]$ and $y\in (1-L(T-t),1)$, $\frac{\partial}{\partial y}J(s,\omega,y)$ exists
for $\lambda_{[t,T-\frac{1-y}{L}]}\otimes P$-almost every $(s,\omega)$.}
\end{thm}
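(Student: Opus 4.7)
The strategy rests on the concavity of $J(t,\cdot)$ established in Proposition~\ref{prop:goodversion}, which guarantees that the one-sided derivatives $D^\pm_y J(t,y)$ exist everywhere on $(1-L(T-t),1)$ and satisfy $D^-_y J(t,y)\ge D^+_y J(t,y)$. The map $y\mapsto J(t,\omega,y)$ is continuously differentiable precisely when these two coincide for every $y$ in the interval. The plan is to represent the one-sided derivatives as values of optimal stopping problems and then to use the LCE hypothesis to equate them.

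First, I would derive a nonstandard optimal stopping representation of the form
$$
D^-_y J(t,y) \;=\; -\,\essinf_{\tau\in\mathcal T^-(t,y)}\,E[L\,X(\tau)\,|\,\mathcal F_t],\qquad D^+_y J(t,y) \;=\; -\,\essinf_{\tau\in\mathcal T^+(t,y)}\,E[L\,X(\tau)\,|\,\mathcal F_t],
$$
with families $\mathcal T^\pm(t,y)$ of stopping times determined by the optimal strategy $u^{t,y}$ from Proposition~\ref{prop:optimalcontrol}. The heuristic is the marginal-value principle: decreasing $y$ by $\varepsilon$ frees up extra fuel that, by \eqref{U'}--\eqref{U'2}, is optimally spent at rate $L$ over a small interval of length $\varepsilon/L$, whose location defines the relevant stopping time. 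The families $\mathcal T^+$ and $\mathcal T^-$ differ according to whether this marginal interval is inserted strictly, or only as a predictable left-limit approximation, and the representations themselves are obtained by carefully comparing $\bar J(t,y\pm\varepsilon)$ with $\bar J(t,y)$ and passing to the limit $\varepsilon\downarrow 0$ using Theorem~\ref{thm:main_old}.

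Next, the LCE assumption \eqref{LCE} yields $E[X(\tau_n)]\to E[X(\tau)]$ whenever $\tau_n\uparrow \tau$, and an extension to conditional expectations forces the two essinfima above to coincide, giving $D^-_y J(t,y)=D^+_y J(t,y)$ $P$-a.s.\ at each fixed $(t,y)$. A further application of LCE, combined with an analysis of how the sets $\mathcal T^\pm(t,y)$ vary with $y$, gives continuity of the stopping value in $y$; together with the monotonicity of $y\mapsto D^-_y J(t,\omega,y)$ this promotes the pointwise almost-sure equality to simultaneous equality on the whole open interval, producing the full-measure set $\Omega_t$ on which $J(t,\omega,\cdot)$ is $\mathcal C^1$. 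The second claim—that $\frac{\partial}{\partial y}J(s,\omega,y)$ exists for $\lambda_{[t,T-(1-y)/L]}\otimes P$-a.e.\ $(s,\omega)$—is then immediate from Fubini: for each fixed $y$ the first part applied at time $s$ gives almost-sure existence of $\partial_y J(s,y)$, hence the bad $(s,\omega)$-set is a product-measure null set.

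I expect the main obstacle to be the passage from a.s.\ pointwise equality $D^-_y J(t,y)=D^+_y J(t,y)$ to simultaneous equality on the full interval $(1-L(T-t),1)$. Concavity alone only rules out kinks at countably many deterministic $y$, with the exceptional null set depending on $y$, while the random kink set of $J(t,\omega,\cdot)$ may a priori lie outside any countable deterministic set. Hence one really needs $P$-a.s.\ continuity of $y\mapsto D^-_y J(t,\omega,y)$, which in turn requires an LCE-based continuity statement for the optimal stopping value together with a delicate measurable selection for the families $\mathcal T^\pm(t,y)$ as $y$ varies; this is where the LCE hypothesis, not used in the uniqueness part, is essential.
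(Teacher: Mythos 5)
Your outline gets the architecture right -- concavity gives the one-sided derivatives with $D^-_yJ\geq D^+_yJ$, the marginal value is captured by nonstandard optimal stopping problems over predictable stopping times restricted to the sets where the optimal control can still be increased (resp.\ decreased), LCE is what lets you pass from the announcing sequences $\rho_n\uparrow\sigma$ to $\sigma$ itself, and the second claim is indeed a Fubini argument. You also correctly identify the crux: equality $D^-_yJ(t,y)=D^+_yJ(t,y)$ $P$-a.s.\ at each \emph{fixed} $y$ does not by itself kill the random kink set of $y\mapsto J(t,\omega,y)$.

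However, your proposed resolution of that crux -- an ``LCE-based continuity statement for the optimal stopping value'' in $y$ plus a ``delicate measurable selection'' for $\mathcal T^\pm(t,y)$ -- is a gap, not an argument. The value of the restricted stopping problem depends on $y$ through the sets $A(t,y)$, $B(t,y)$, which are built from the optimal control $u^{t,y}$; there is no reason for these to vary continuously (or even measurably in any useful sense) in $y$, and the paper does not attempt this. The actual mechanism is different and simpler: the comparison of $D^-_yJ$ and $D^+_yJ$ via a common stopping time $\bar\sigma\in\mathcal S^p_{A}\cap\mathcal S^p_{B}$ is proved not just for deterministic $(t,y)$ but for an arbitrary stopping time $\tau_0$ and an arbitrary $\mathcal F_{\tau_0}$-measurable random variable $Y_0$ (Proposition \ref{prop:derivative}). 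By concavity the kinks of $y\mapsto D^-_yJ(t,\omega,y)$ form a countable set that can be enumerated by $\mathcal F_t$-measurable random variables $Y_1(\omega),Y_2(\omega),\dots$; feeding each $Y_n$ into the proposition as the (random) second argument shows that $P$-a.s.\ none of them is actually a kink, so the kink set is a.s.\ empty and $J(t,\omega,\cdot)\in\mathcal C^1$ on $\Delta_t$ simultaneously. Without extending your one-sided-derivative estimates to random $Y_0$ (which requires formulating the whole construction -- $M(\tau_0,Y_0)$, the perturbed controls $u^{\tau_0,Y_0}\pm u_{n,h}$, etc.\ -- at the level of stopping times and $\mathcal F_{\tau_0}$-measurable volumes), your proof does not close, and the continuity-in-$y$ route you sketch is unlikely to be salvageable. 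One further small inaccuracy: the left derivative is an $\esssup$ over $\mathcal S^p_{A}$ and the right derivative an $\essinf$ over $\mathcal S^p_{B}$, not two essinfima; what the proof really needs is only the two one-sided inequalities together with the nonemptiness of $\mathcal S^p_{A}\cap\mathcal S^p_{B}$.
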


\begin{rem}
(i) The LCE assumption is crucial for Theorem \ref{thm:smooth} to hold. Example 4.5 in [BD] provides a counterexample to the assertion of this theorem
for a process $X$ which fails to be LCE. \\[0.1cm]
 (ii) Theorem \ref{thm:smooth2} below implies that, under the assumptions of Theorem \ref{thm:smooth}, the following stronger regularity assertion holds, if and only if
$X(T)=0$ $P$-almost surely:  For every $t\in [0,T]$, there is a set $\Omega_t$ of full
$P$-measure such that, for every $\omega \in \Omega_t$, the mapping
$$
y\mapsto J(t,\omega,y)
$$
is continuously differentiable on  $(-\infty,1)$.
\\ (iii) It was also noticed in the context of continuous time multiple stopping problems that the regularity of the value process
is typically improved, when $X(T)=0$, see \citet{Be2}.
\end{rem}
\cb{Recall that the derivatives $-D^{\pm}_yJ(t,y)$
correspond to the marginal value of the control problem. We first choose an optimal control $ u^{t,y}$ for $\bar J(t,y)$. Then heuristically,
in order to calculate $-D^-_yJ(t,y)\approx \frac{J(t,y-\Delta y)-J(t,y)}{\Delta y}$ one would like to spend optimally an inifinitesimal additional volume of $\Delta y$ at a time $\sigma$ where exercise is still
possible, i.e. where $u^{t,y}(\sigma)<L$. This intuitively leads to the following optimal stopping problem: maximize $E[X(\sigma)|\mathcal{F}_t]$ over stopping times $\sigma(\omega)$ which take values
in the set $\{s>t;\;  u^{t,y}(s,\omega)<L\}$. More precisely, the additional volume $\Delta y$ cannot be spent at a single time point $\sigma$, but rather
in `small' neighborhoods around $\sigma$.
Therefore we have to restrict the optimal stopping problem to time points $\sigma$, such that, given the strategy $ u^{t,y}$, it  is still possible
`to exercise in small neighborhoods of $\sigma$'.  This is how we make this heuristic idea precise.}

Fix a stopping time $\tau_0$ with values in $[0,T]$, an $\mathcal{F}_{\tau_0}$ measurable, $(-\infty,1]$-valued random variable  $Y_0$,
and an optimal control $u^{\tau_0,Y_0}\in U(\tau_0,Y_0)$ which satisfies the properties in Proposition \ref{prop:optimalcontrol}.
We denote
$$
M(\tau_0,Y_0):=\{L(T-\tau_0)>1-Y_0>0\} \in \mathcal{F}_{\tau_0},
$$
and
define
$$
A(\tau_0,Y_0):=\{t\in (\tau_0,T];\ \nd{\forall\,{\epsilon>0}}\;\, \lambda(\{s\in [\tau_0\vee (t-\epsilon), (t+\epsilon)\wedge T];\,
 u^{\tau_0,Y_0}(s)<L\})>0\}.
$$
This set $A(\tau_0,Y_0)$ is our way to make precise the set of time points \cb{$t$, such that given $u^{\tau_0,Y_0}$, one can still exercise in small neighborhoods of $t$}. We also introduce
$$
B(\tau_0,Y_0):=\{t\in (\tau_0,T];\ \nd{\forall\,{\epsilon>0}}\;\, \lambda(\{s\in [\tau_0\vee (t-\epsilon), (t+\epsilon)\wedge T];\,
u^{\tau_0,Y_0}(s)>0\})>0\},
$$
which corresponds to those points, where one can take away some marginal volume from the optimal control  $u^{\tau_0,Y_0}$.

We denote by $\mathcal{S}_{\tau_0+}^p$ the set of predictable stopping times $\sigma$ with values in $(\tau_0,T]\cup\{T\}$, by
$\mathcal{S}_{A(\tau_0,Y_0)}^p$ the set of stopping times $\sigma \in  \mathcal{S}_{\tau_0+}^p$ such that $\sigma$ takes values
 in $A(\tau_0,Y_0)$
on the set $M(\tau_0,Y_0)$, and by $\mathcal{S}_{B(\tau_0,Y_0)}^p$ the set of stopping times $\sigma \in  \mathcal{S}_{\tau_0+}^p$ such that $\sigma$ takes values
 in $B(\tau_0,Y_0)$
on the set $M(\tau_0,Y_0)$.

  The proof of Theorem \ref{thm:smooth} is prepared by several lemmas. We first  show that the set
$\mathcal{S}_{A(\tau_0,Y_0)}^p\cap\mathcal{S}_{B(\tau_0,Y_0)}^p$ is nonempty.
\begin{lem}\label{lem:barsigma}
 Suppose $X$ satisfies the standing assumptions and $(\tau_0,Y_0)$ are as above. Define
\begin{equation}\label{eq:barsigma}
\bar \sigma=\left\{\begin{array}{cl} \inf\{t\geq \tau_0;\; \int_{\tau_0}^t  u^{\tau_0,Y_0}(s)ds
\notin (1-Y_0-L(T-t),1-Y_0)\},& \omega \in M(\tau_0,Y_0) \\
                    T,& \textnormal{otherwise}.
 \end{array} \right.
\end{equation}
Then, $\bar \sigma \in \mathcal{S}_{A(\tau_0,Y_0)}^p\cap\mathcal{S}_{B(\tau_0,Y_0)}^p$.
\end{lem}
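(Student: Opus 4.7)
The plan is to treat $\bar\sigma$ as the first exit time of the continuous, adapted process $Z(t):=\int_{\tau_0}^t u^{\tau_0,Y_0}(s)\,ds$ from the time-dependent open interval $I(t):=(1-Y_0-L(T-t),\,1-Y_0)$. On $M(\tau_0,Y_0)$ one has $1-Y_0-L(T-\tau_0)<0<1-Y_0$, so $Z(\tau_0)=0$ sits strictly inside $I(\tau_0)$, while on the complement the value $\bar\sigma=T$ is set by hand. The verification splits into three tasks: first, that $\bar\sigma$ is a predictable stopping time with values in $(\tau_0,T]\cup\{T\}$; second, that $\bar\sigma(\omega)\in A(\tau_0,Y_0)$ for $\omega\in M(\tau_0,Y_0)$; third, that $\bar\sigma(\omega)\in B(\tau_0,Y_0)$ for $\omega\in M(\tau_0,Y_0)$.

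For predictability I would use the announcing sequence
\[
\bar\sigma_n:=\inf\{t\geq\tau_0:\ Z(t)\geq 1-Y_0-1/n\ \text{or}\ Z(t)\leq 1-Y_0-L(T-t)+1/n\}\wedge T,
\]
set to $T$ on $M^c$. Continuity of $Z$ and of the two boundary curves makes each $\bar\sigma_n$ a stopping time, the sequence is nondecreasing with limit $\bar\sigma$, and for each $\omega\in M$ we have $\bar\sigma_n(\omega)<\bar\sigma(\omega)$ for all sufficiently large $n$ (because $Z$ lies strictly inside $I$ on $[\tau_0,\bar\sigma)$). That $\bar\sigma>\tau_0$ on $M$ follows from continuity of $Z$ and the strict interior starting condition, so the values of $\bar\sigma$ indeed lie in $(\tau_0,T]\cup\{T\}$.

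To establish membership in $A\cap B$ on $M$, continuity of $Z$ forces one of two cases at $\bar\sigma$: either $Z(\bar\sigma)=1-Y_0$ (Case~A: upper boundary) or $Z(\bar\sigma)=1-Y_0-L(T-\bar\sigma)$ with $\bar\sigma<T$ (Case~B: lower boundary). In Case~A, strict interiority on $[\tau_0,\bar\sigma)$ yields $Z(\bar\sigma-\epsilon)<Z(\bar\sigma)=1-Y_0$, so $u^{\tau_0,Y_0}>0$ on a positive-measure subset of $[\bar\sigma-\epsilon,\bar\sigma]$, placing $\bar\sigma\in B$. For $\bar\sigma\in A$ in Case~A: if $\bar\sigma<T$, then $u^{\tau_0,Y_0}\in U(\tau_0,Y_0)$ together with $Z(\bar\sigma)=1-Y_0$ forces $u^{\tau_0,Y_0}=0$ a.e.\ on $[\bar\sigma,T]$, so right-neighborhoods suffice; if $\bar\sigma=T$, then $u^{\tau_0,Y_0}\equiv L$ on a full left-neighborhood is ruled out because it would give $Z(T-\epsilon)=1-Y_0-L\epsilon$, contradicting the strict inequality $Z(T-\epsilon)>1-Y_0-L\epsilon$ required by interiority. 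In Case~B, property (\ref{U'}) of Proposition \ref{prop:optimalcontrol} applied at $\bar\sigma$ forces $u^{\tau_0,Y_0}=L$ a.e.\ on $[\bar\sigma,T]$, so $\bar\sigma\in B$ via right-neighborhoods; moreover the bound $Z(\bar\sigma)-Z(\bar\sigma-\epsilon)<L\epsilon$ (from strict interiority at $\bar\sigma-\epsilon$ versus the lower-boundary value at $\bar\sigma$) yields $u^{\tau_0,Y_0}<L$ on a positive-measure subset of $[\bar\sigma-\epsilon,\bar\sigma]$, placing $\bar\sigma\in A$.

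The step I expect to be most delicate is the boundary case $\bar\sigma=T$ in Case~A, where no right-neighborhood is available; the strictness of the open interval $I(t)$ for $t<\bar\sigma$ is exactly what is needed to preclude $u^{\tau_0,Y_0}\equiv L$ on an entire left-neighborhood of $T$. The remainder is a faithful translation of the geometric picture: the first touch of $Z$ with either wall of $I$ is simultaneously a time at which \emph{some} consumption is still taking place (giving $B$) and \emph{some} abstention is still taking place (giving $A$).
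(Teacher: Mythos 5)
Your argument is correct and is essentially the paper's proof: the same announcing-sequence construction for predictability, the same left-neighborhood estimates from strict interiority of the volume path $Z(t)=\int_{\tau_0}^t u^{\tau_0,Y_0}(s)\,ds$ before $\bar\sigma$, and the same use of the global volume constraint (upper wall) and of property (\ref{U'}) (lower wall) for the right-neighborhoods, with your split by which wall is hit matching the paper's comparison of the two one-sided exit times. The only technical slip is that your announcing sequence equals $T$ identically on $M(\tau_0,Y_0)^c$, where $\bar\sigma=T$, so it fails to announce $\bar\sigma$ there; cap it at $T-1/n$ on that set (the paper uses $T-T/n$) and the proof is complete.
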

\begin{proof}
 On the set $M(\tau_0,Y_0)$ we have
$$
\int_{\tau_0}^T  u^{\tau_0,Y_0}(s)ds=1-\cb{Y_0}.
$$
by Proposition \ref{prop:optimalcontrol}. Moreover, as $0\in (1-Y_0-L(T-t),1-Y_0)$ on $M(\tau_0,Y_0)$, we get $\tau_0<\bar\sigma\leq T$ on $M(\tau_0,Y_0)$.
Hence, the stopping time $\bar\sigma$ takes values in $(\tau_0,T]\cup \{T\}$. The sequence $(\bar \sigma_n)$ defined by
\begin{equation*}
\bar \sigma_n=\left\{\begin{array}{cl} \inf\{t\geq \tau_0;\; \int_{\tau_0}^t \ u^{\tau_0,Y_0}(s)ds
\notin (1-Y_0-L(T-t)+1/n,1-Y_0-1/n)\},& \omega \in M(\tau_0,Y_0) \\ \ &\textnormal{and }\tau_0 \leq T-\frac{T}{n}  \\
     T-\frac{T}{n}               ,& \textnormal{otherwise}.
 \end{array} \right.
\end{equation*}
announces $\bar \sigma$, because $t\mapsto \int_{\tau_0}^t  u^{\tau_0,Y_0}(s)ds $ is continuous and $\tau_0<T$ on $M(\tau_0,Y_0)$. Here, we use the convention
$(a,b)= \emptyset$ for $a\geq b$. Therefore, $\bar \sigma$ is predictable. It, thus, remains to show that $\bar \sigma$ takes values
in $A(\tau_0,Y_0)\cap B(\tau_0,Y_0)$ on the set $M(\tau_0,Y_0)$. We define

\begin{eqnarray*}
\bar \sigma_{\upper}&=&\inf\{t\in [\tau_0,T];\; \int_{\tau_0}^t  u^{\tau_0,Y_0}(s)ds \geq 1-Y_0 \},\\
\bar \sigma_L&=&\inf\{t\in [\tau_0,T];\; \int_{\tau_0}^t  u^{\tau_0,Y_0}(s)ds \leq 1-Y_0-L(T-t) \},
\end{eqnarray*}
(with the usual convention that the infimum of the empty set is $+\infty$). Then $\bar \sigma=\bar\sigma_{\upper} \wedge \bar\sigma_L$ on $M(\tau_0,Y_0)$.
First note that by definition of $\bar \sigma_{\upper}$ and $\bar\sigma_{\nd{L}}$
we obtain on $M(\tau_0,Y_0)$
\begin{eqnarray*}
 \cb{\forall {\epsilon>0}}\quad  \lambda(\{s\in [\tau_0\vee (\bar\sigma_{\upper}-\epsilon), \bar \sigma_{\upper}];\,
 u^{\tau_0,Y_0}(s)>0\})>0, \;\textnormal{ on }\{\bar \sigma_{\upper}\leq T\},\\
\cb{\forall {\epsilon>0}}\quad  \lambda(\{s\in [\tau_0\vee(\bar\sigma_L-\epsilon), \bar \sigma_L];\,
 u^{\tau_0,Y_0}(s)<L\})>0,\;\textnormal{ on }\{\bar \sigma_L\leq T\}.
\end{eqnarray*}
Hence, $\bar \sigma_{\upper} \in B(\tau_0,Y_0)$ on $\{\bar \sigma_{\upper}\leq T\}$ and $\bar \sigma_L \in A(\tau_0,Y_0)$ on $\{\bar \sigma_L\leq T\}$. In particular, we have $\bar \sigma \in
A(\tau_0,Y_0) \cap  B(\tau_0,Y_0)$ on $\{\bar \sigma_{\upper}=\bar \sigma_L\}$. It now suffices to show that $\bar \sigma_{\upper} \in A(\tau_0,Y_0)$
on $\{\bar \sigma_{\upper}< \bar \sigma_L\}$ and $\bar \sigma_L \in B(\tau_0,Y_0)$
on $\{\bar \sigma_L< \bar \sigma_{\upper}\}$. Obviously, $ u^{\tau_0,Y_0}(r)=0<L$ almost everywhere
on $[\nd{\bar\sigma}_{\upper},T]$. However, $\bar\sigma_{\upper}<T$ on $\{\bar \sigma_{\upper}< \bar \sigma_L\}$. Therefore, $\bar \sigma_{\upper} \in A(\tau_0,Y_0)$
on $\{\bar \sigma_{\upper}< \bar \sigma_L\}$. Now, by (\ref{U'}), we get $u^{\tau_0,Y_0}(r)=L>0$ almost everywhere
on $[\sigma_L,T]$. This implies $\bar \sigma_L \in B(\tau_0,Y_0)$
on $\{\bar \sigma_L< \bar \sigma_{\upper}\}$.
\end{proof}

The next lemma relates the left-hand side derivative $D^-_yJ(\tau_0,Y_0)$ to stopping times in $\mathcal{S}_{A(\tau_0,Y_0)}^p$. We recall that $X$ is said to be LCE at a stopping time
$\sigma$, if (\ref{LCE}) holds for every nondecreasing sequence of $[0,T]$-valued stopping times $(\sigma_n)_{n\in \mathbb{N}}$
with  limit $\sigma$.
\begin{lem}\label{lem:A}
 Suppose $X$ satisfies the standing assumptions, $(\tau_0,Y_0)$ are as above, and $\sigma\in \mathcal{S}_{A(\tau_0,Y_0)}^p$. Then, there is a sequence
of stopping times $(\rho_n)$, taking values in $[\tau_0,T]$, which nondecreasingly converges to $\sigma$ and such that
$$
-D^-_yJ(\tau_0,Y_0)\geq  E[X(\rho_n)|\mathcal{F}_{\tau_0}]
$$
on $M(\tau_0,Y_0)$ for every $n\in \mathbb{N}$.
Moreover,
$$
-E[D^-_yJ(\tau_0,Y_0){\bf 1}_{\cb{M(\tau_0,Y_0) }}] \geq  E[X(\sigma) {\bf 1}_{\cb{M(\tau_0,Y_0)}}],
$$
if $X$ is LCE at $\sigma$.
\end{lem}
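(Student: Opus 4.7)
The proof strategy is to perturb the optimal control $u^{\tau_0,Y_0}$ by injecting an infinitesimal extra volume $\Delta y$ in a small neighborhood of $\sigma$ where $u^{\tau_0,Y_0}<L$, and to exploit suboptimality together with the definition of the left-derivative. Since $\sigma$ is predictable, we fix an announcing sequence of stopping times $\tau_0\leq\sigma_n<\sigma$ with $\sigma_n\nearrow\sigma$ on $\{\sigma>\tau_0\}\supset M(\tau_0,Y_0)$, and set $\rho_n:=\sigma_n$ on $M(\tau_0,Y_0)$ (extending off $M(\tau_0,Y_0)$ so that $\rho_n$ remains a $[\tau_0,T]$-valued stopping time). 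The property $\sigma\in A(\tau_0,Y_0)$ guarantees that in every neighborhood of $\sigma$ there is a positive-measure subset on which $u^{\tau_0,Y_0}<L$, and hence room to inject the extra volume.

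For each $n$ and each sufficiently small $\Delta y>0$, we construct an adapted nonnegative bump process $b_n^{\Delta y}$ supported in a shrinking neighborhood of $\sigma$, with $b_n^{\Delta y}\leq L-u^{\tau_0,Y_0}$ and $\int_{\tau_0}^T b_n^{\Delta y}(s)\,ds=\Delta y$. Then $u^{\tau_0,Y_0}+b_n^{\Delta y}\in U(\tau_0,Y_0-\Delta y)$, so by the definition of $\bar J$ and Proposition~\ref{prop:goodversion},
\[
J(\tau_0,Y_0-\Delta y)-J(\tau_0,Y_0)\;\geq\; E\left[\int_{\tau_0}^T b_n^{\Delta y}(s)X(s)\,ds\,\Big|\,\mathcal{F}_{\tau_0}\right].
\]
Dividing by $\Delta y$ and letting $\Delta y\downarrow 0$ gives a lower bound on $-D^-_y J(\tau_0,Y_0)$. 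Concentrating the bump on a small right-neighborhood of $\rho_n=\sigma_n$, whose intersection with $\{u^{\tau_0,Y_0}<L\}$ has positive measure for all $n$ large thanks to $\sigma\in A(\tau_0,Y_0)$ and $\sigma_n\nearrow\sigma$, and invoking the RCLL property of $X$ together with a Lebesgue averaging argument at Lebesgue points of $s\mapsto X(s,\omega)$, this lower bound is seen to dominate $E[X(\rho_n)\,|\,\mathcal{F}_{\tau_0}]$ on $M(\tau_0,Y_0)$.

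For the \emph{moreover} statement, integrate the pointwise inequality against ${\bf 1}_{M(\tau_0,Y_0)}$ (noting $M(\tau_0,Y_0)\in\mathcal{F}_{\tau_0}$) to get $-E[D^-_y J(\tau_0,Y_0){\bf 1}_{M(\tau_0,Y_0)}]\geq E[X(\rho_n){\bf 1}_{M(\tau_0,Y_0)}]$ for each $n$. The modified stopping times $\tilde\rho_n:=\rho_n{\bf 1}_{M(\tau_0,Y_0)}+\sigma{\bf 1}_{M(\tau_0,Y_0)^c}$ are $[0,T]$-valued and nondecreasingly converge to $\sigma$, so LCE at $\sigma$ yields $E[X(\tilde\rho_n)]\to E[X(\sigma)]$; splitting the expectation over $M(\tau_0,Y_0)$ and its complement cancels the $M(\tau_0,Y_0)^c$ contribution, giving $E[X(\rho_n){\bf 1}_{M(\tau_0,Y_0)}]\to E[X(\sigma){\bf 1}_{M(\tau_0,Y_0)}]$, and the desired inequality follows in the limit.

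The main obstacle is the measurable construction of $b_n^{\Delta y}$: one needs an adapted, nonnegative process with prescribed total mass $\Delta y$, the pointwise bound $L-u^{\tau_0,Y_0}$, and support inside a small predictable neighborhood of $\sigma$ contained in $\{u^{\tau_0,Y_0}<L\}$, jointly measurable in $(n,\Delta y,\omega)$. The case where every positive-measure subset of $\{u^{\tau_0,Y_0}<L\}$ arbitrarily close to $\sigma$ lies strictly to the right of $\sigma$ is the most delicate, since one then loses the obvious left-approximation and must rely on right-continuity of $X$ and the Lebesgue averaging step to recover the pathwise lower bound $E[X(\rho_n)\,|\,\mathcal{F}_{\tau_0}]$ rather than merely $E[X(\sigma)\,|\,\mathcal{F}_{\tau_0}]$.
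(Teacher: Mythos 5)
Your overall strategy (perturb the optimal control by a small bump near $\sigma$, use suboptimality to bound the difference quotient, then pass to the limit via LCE) is the same as the paper's, and your treatment of the \emph{moreover} part is fine. But there is a genuine gap in the choice $\rho_n:=\sigma_n$. The condition $\sigma\in A(\tau_0,Y_0)$ only guarantees room (positive measure of $\{u^{\tau_0,Y_0}<L\}$) in every neighborhood of $\sigma$, not in every right-neighborhood of $\sigma_n$. It is perfectly possible that $u^{\tau_0,Y_0}=L$ a.e.\ on $[\sigma_n,\sigma_n+\delta]$ for some $\delta>0$, so any admissible bump of total mass $\Delta y$ must be supported on $[\tilde\rho_n,\cdot\,]$, where
$$
\tilde\rho_n=\inf\Bigl\{t\geq\sigma_n;\;\int_{\sigma_n}^t\bigl(L-u^{\tau_0,Y_0}(s)\bigr)ds>0\Bigr\}\wedge T
$$
may be strictly larger than $\sigma_n$. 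As $\Delta y\downarrow 0$ the bump's support shrinks to $\tilde\rho_n$ from the right, and right-continuity of $X$ then yields the lower bound $E[X(\tilde\rho_n)|\mathcal{F}_{\tau_0}]$ --- not $E[X(\sigma_n)|\mathcal{F}_{\tau_0}]$. Since $X$ is only right-continuous and adapted, there is no inequality between $X(\sigma_n)$ and $X(\tilde\rho_n)$, so the per-$n$ inequality you claim for $\rho_n=\sigma_n$ can fail. Your closing remark identifies exactly this delicate case but does not resolve it; ``Lebesgue averaging at Lebesgue points'' cannot move the base point of the average from $\tilde\rho_n$ back to $\sigma_n$.

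The repair is the paper's key device: the lemma only asserts existence of \emph{some} nondecreasing sequence, so one takes $\rho_n:=\tilde\rho_n$ on $M(\tau_0,Y_0)$ (and $\rho_n:=\sigma$ off it). One checks that $\sigma_n\leq\tilde\rho_n\leq\sigma$ on $M(\tau_0,Y_0)$ --- the upper bound precisely because $\sigma\in A(\tau_0,Y_0)$ forces $\int_{\sigma_n}^t(L-u^{\tau_0,Y_0})ds>0$ for every $t>\sigma$ --- and that $(\tilde\rho_n)$ is nondecreasing, so $(\rho_n)$ still increases to $\sigma$. With the bump $u_{n,h}=(L-u^{\tau_0,Y_0}){\bf 1}_{[\tilde\rho_n,\tilde\rho_{n,h}]}$, where $\tilde\rho_{n,h}$ is the first time the accumulated room reaches $h$, the normalized average of $X$ over the support converges to $X(\tilde\rho_n)=X(\rho_n)$ by right-continuity, which is exactly the stated bound. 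With this modification the rest of your argument, including the passage to the limit under LCE, goes through as written.
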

\begin{proof}
 As $\sigma$ is predictable, there is a sequence of stopping times $(\tilde \sigma_n)$ which announces $\sigma$.
Then the sequence $(\sigma_n)=(\tilde \sigma_n\vee\tau_0)$ nondecreasingly converges to $\sigma$ and satisfies $\tau_0\leq \sigma_n<\sigma$ on $M(\tau_0,Y_0)\subset \{\tau_0<T\}$. We define
$$
\tilde \rho_n=\inf\{t\geq \sigma_n;\;\int_{\sigma_n}^t (L- u^{\tau_{\nd{0}},Y_0}(s))ds> 0\}\wedge T
$$
and for $h>0$
$$
\tilde \rho_{n,h}=\inf\{t\geq \sigma_n;\;\int_{\sigma_n}^t (L- u^{\tau_{\nd{0}},Y_0}(s))ds\geq h\}\wedge T.
$$
Then, $\tilde \rho_{n,h}$ converges to $\tilde \rho_n$ as $h\downarrow 0$.  As
$$
\int_{\sigma_n}^{\tilde \rho_n} (L- u^{\tau_{\nd{0}},Y_0}(s))ds=0,
$$
on $M(\tau_0,Y_0)$, we can conclude that $ u^{\tau_{\nd{0}},Y_0}(s)=L$  \cb{for almost every $(s,\omega)$ such that $s\in[\sigma_n(\omega),\tilde \rho_n(\omega)]$ and $\omega\in M(\tau_0,Y_0)$}. Now, taking into account that
$\sigma\in A(\tau_0,Y_0)$ and $\sigma_n<\sigma$ on  $M(\tau_0,Y_0)$, we observe that $\sigma_n \leq  \tilde \rho_n\leq \sigma$ on $M(\tau_0,Y_0)$. We now define
$$
\rho_n=\left\{\begin{array}{cl}
     \tilde \rho_n, & \omega \in    M(\tau_0,Y_0) \\ \sigma, & \textnormal{otherwise}
              \end{array}
 \right.
$$
Then, $(\rho_n)$ nondecreasingly converges to $\sigma$. Let
$$
u_{n,h}(t)=(L- u^{\tau_0,Y_0}(t)){\bf 1}_{[\tilde \rho_n,\tilde \rho_{n,h}]}(t).
$$
Then, $ u^{\tau_0,Y_0}+u_{n,h}\in U(\tau_0,Y_0-h)$. Hence,
\begin{eqnarray*}
\frac{J(\tau_0,Y_0-h)-J(\tau_0,Y_0)}{h} \geq \frac{1}{h} E[\int_{\tilde \rho_n}^{\tilde \rho_{n,h}} X(t)(L- u^{\tau_0,Y_0}(t)) dt |\mathcal{F}_{\tau_0}].
\end{eqnarray*}
On the set $M(\tau_0,Y_0)$ we have
\begin{eqnarray}\label{eq:hilf13}
 && \frac{1}{h} E[\int_{\tilde \rho_n}^{\tilde \rho_{n,h}} X(t)(L- u^{\tau_0,Y_0}(t)) dt |\mathcal{F}_{\tau_0}] \nonumber \\
&=& E[X(\rho_n) h^{-1} \int_{\rho_n}^{\tilde \rho_{n,h} } (L-u^{\tau_0,Y_0}(t)) dt |\mathcal{F}_{\tau_0}] \nonumber \\ && +  \frac{1}{h} E[\int_{\rho_n}^{\tilde \rho_{n,h} } (X(t)-X(\rho_n))(L-u^{\tau_0,Y_0}(t)) dt |\mathcal{F}_{\tau_0}]
\end{eqnarray}
and
$$
 \lim_{h\downarrow 0} h^{-1} \int_{\rho_n}^{\tilde \rho_{n,h}} (L- u^{\tau_0,Y_0}(t)) dt=1.
$$
So, the first term on the righthand side of (\ref{eq:hilf13}) converges to $ E[X(\rho_n)|\mathcal{F}_{\tau_0}]$ as $h\downarrow 0$. The second term on the righthand side of (\ref{eq:hilf13}) converges to zero
by right-continuity of $X$. Consequently,
$$
-D^-_yJ(\tau_0,Y_0)\geq E[X(\rho_n)|\mathcal{F}_{\tau_0}]
$$
on  $M(\tau_0,Y_0)$. As $\rho_n= \sigma$ on the complement of $M(\tau_0,Y_0)$, we then obtain
\begin{eqnarray*}
 -E[D^-_yJ(\tau_0,Y_0){\bf 1}_{\cb{M(\tau_0,Y_0) }}] \geq  E[X(\rho_n)]- E[X(\sigma) (1-{\bf 1}_{\cb{M(\tau_0,Y_0) }})]
\end{eqnarray*}
If $X$ is LCE at $\sigma$, the right-hand side converges to
 $E[X(\sigma){\bf 1}_{\cb{M(\tau_0,Y_0) }}]$, which completes the proof.
\end{proof}

The corresponding result for the right-hand side derivative reads as follows.
\begin{lem}\label{lem:B}
 Suppose $X$ satisfies the standing assumptions, $(\tau_0,Y_0)$ are as above, and $\sigma\in \mathcal{S}_{B(\tau_0,Y_0)}^p$. Then, there is a sequence
of stopping times $(\rho_n)$, taking values in $[\tau_0,T]$, which nondecreasingly converges to $\sigma$ and such that
$$
-D^+_yJ(\tau_0,Y_0)\leq  E[X(\rho_n)|\mathcal{F}_{\tau_0}]
$$
on $M(\tau_0,Y_0)$ for every $n\in \mathbb{N}$.
Moreover,
$$
-E[D^+_yJ(\tau_0,Y_0) {\bf 1}_{M(\tau_0,Y_0)}]\leq  E[X(\sigma){\bf 1}_{M(\tau_0,Y_0)}],
$$
 if $X$ is LCE at $\sigma$.
\end{lem}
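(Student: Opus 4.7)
\smallskip
The proof mirrors that of Lemma \ref{lem:A} with the roles of the ``used capacity'' $u^{\tau_0,Y_0}$ and the ``free capacity'' $L-u^{\tau_0,Y_0}$ interchanged; the informal picture is that $-D^+_y J(\tau_0,Y_0)$ is the marginal loss when $h$ extra units of volume are forced upon the holder, and the cheapest way to absorb them is to withdraw mass from the optimal control near a target time $\sigma\in B(\tau_0,Y_0)$. Accordingly, I let $(\tilde\sigma_n)$ announce $\sigma$, set $\sigma_n:=\tilde\sigma_n\vee\tau_0$, and define
\[
\tilde\rho_n:=\inf\{t\geq\sigma_n;\ \int_{\sigma_n}^t u^{\tau_0,Y_0}(s)ds>0\}\wedge T,\qquad
\tilde\rho_{n,h}:=\inf\{t\geq\sigma_n;\ \int_{\sigma_n}^t u^{\tau_0,Y_0}(s)ds\geq h\}\wedge T.
\]
Monotone convergence gives $u^{\tau_0,Y_0}=0$ a.e. on $[\sigma_n,\tilde\rho_n)$, and the assumption $\sigma\in B(\tau_0,Y_0)$ then rules out $\tilde\rho_n>\sigma$ on $M(\tau_0,Y_0)$ by exactly the argument of Lemma \ref{lem:A}. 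Setting $\rho_n:=\tilde\rho_n$ on $M(\tau_0,Y_0)$ and $\rho_n:=\sigma$ elsewhere therefore produces a sequence of stopping times nondecreasingly converging to $\sigma$.

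The natural test control for $J(\tau_0,Y_0+h)$ is $u':=u^{\tau_0,Y_0}-u^{\tau_0,Y_0}\mathbf{1}_{[\tilde\rho_n,\tilde\rho_{n,h}]}$, and this is where the argument genuinely deviates from Lemma \ref{lem:A}: since Proposition \ref{prop:optimalcontrol} gives $\int_{\tau_0}^T u^{\tau_0,Y_0}(s)ds=1-Y_0$ on $M(\tau_0,Y_0)$, the perturbation only removes mass $\min(h,H_n)$ with $H_n:=\int_{\sigma_n}^T u^{\tau_0,Y_0}(s)ds$, so it fails the global constraint $\int\leq 1-(Y_0+h)$ on the event $\{H_n<h\}$. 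The main obstacle is to repair admissibility of $u'$ on this event; unlike in Lemma \ref{lem:A}, where adding mass to the optimal control only slackens an already-satisfied constraint, subtracting mass can leave the constraint violated. My remedy is to truncate: with
\[
\tilde\tau_h:=\inf\{t\geq\tau_0;\ \int_{\tau_0}^t u'(s)\,ds>1-Y_0-h\}\wedge T,
\]
set $\hat u:=u'\mathbf{1}_{[\tau_0,\tilde\tau_h]}$ on $M(\tau_0,Y_0)$ and $\hat u:=0$ on $M(\tau_0,Y_0)^c$ (taking $h$ small enough that $Y_0+h\leq 1$ on $M(\tau_0,Y_0)$, which is possible since $Y_0<1$ there). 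Then $\hat u\in U(\tau_0,Y_0+h)$, $\hat u$ agrees with $u'$ on $\{H_n\geq h\}$ (where $\tilde\tau_h=T$), and the truncation defect on $\{H_n<h\}\cap M(\tau_0,Y_0)$ is $\int_{\tilde\tau_h}^T u'(s)X(s)ds\leq (h-H_n)\sup_rX(r)\leq h\sup_rX(r)\mathbf{1}_{\{H_n<h\}}$.

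Plugging $\hat u$ into $J(\tau_0,Y_0+h)\geq E[\int_{\tau_0}^T\hat u(s)X(s)ds\mid\mathcal{F}_{\tau_0}]$ and dividing by $h$ then yields, on $M(\tau_0,Y_0)$,
\[
\frac{J(\tau_0,Y_0+h)-J(\tau_0,Y_0)}{h}\geq -\frac{1}{h}E\Bigl[\int_{\tilde\rho_n}^{\tilde\rho_{n,h}} u^{\tau_0,Y_0}(s)X(s)ds\Bigm|\mathcal{F}_{\tau_0}\Bigr]-E\bigl[\sup_rX(r)\mathbf{1}_{\{H_n<h\}}\bigm|\mathcal{F}_{\tau_0}\bigr].
\]
As $h\downarrow 0$ the first right-hand side term tends to $-E[X(\rho_n)\mid\mathcal{F}_{\tau_0}]$ by the same decomposition used in Lemma \ref{lem:A} (using $\tilde\rho_{n,h}\downarrow\rho_n$, $h^{-1}\int_{\tilde\rho_n}^{\tilde\rho_{n,h}}u^{\tau_0,Y_0}ds\to 1$, and right-continuity of $X$), while the error term vanishes on $M(\tau_0,Y_0)$ by dominated convergence for conditional expectations: the integrand is dominated by $\sup_rX(r)\in L^1$ thanks to the standing assumption (\ref{ass:square}), and $\{H_n<h\}\cap M(\tau_0,Y_0)$ shrinks to the empty set because $H_n>0$ on $M(\tau_0,Y_0)$ by the $B$-condition at $\sigma$ (applied with $\epsilon=\sigma-\sigma_n$). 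This gives the pointwise bound $-D^+_y J(\tau_0,Y_0)\leq E[X(\rho_n)\mid\mathcal{F}_{\tau_0}]$ on $M(\tau_0,Y_0)$. The integrated version then follows as in Lemma \ref{lem:A}: multiplying by $\mathbf{1}_{M(\tau_0,Y_0)}$, taking expectation, letting $n\to\infty$, and invoking LCE at $\sigma$ together with $\rho_n=\sigma$ on $M(\tau_0,Y_0)^c$ yields $E[X(\rho_n)\mathbf{1}_{M(\tau_0,Y_0)}]\to E[X(\sigma)\mathbf{1}_{M(\tau_0,Y_0)}]$, which is the desired inequality.
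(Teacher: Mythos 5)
Your strategy is the same as the paper's: announce $\sigma$ by $\sigma_n$, define $\tilde\rho_n$ and $\tilde\rho_{n,h}$ via the first times the cumulative spent volume after $\sigma_n$ becomes positive, respectively reaches $h$, and compare $J(\tau_0,Y_0+h)$ with the reward of the optimal control minus the mass on $[\tilde\rho_n,\tilde\rho_{n,h}]$. Your truncation at $\tilde\tau_h$ is a legitimate variant of the paper's device (which instead intersects the removal window with $[\bar\sigma_{U,h},\bar\sigma_U]$), and your error event $\{H_n<h\}$ coincides with the paper's $\{\bar\sigma_{U,h}<\tilde\rho_n\}$; the estimate $\int_{\tilde\tau_h}^T u'(s)X(s)ds\leq (h-H_n)\sup_r X(r)$ and the observation that $H_n>0$ on $M(\tau_0,Y_0)$ by the $B$-condition are both correct and do the same work as in the paper.

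There is, however, one step that fails as written: the parenthetical ``taking $h$ small enough that $Y_0+h\leq 1$ on $M(\tau_0,Y_0)$, which is possible since $Y_0<1$ there.'' The quantity $h$ must be deterministic (it is the increment in the difference quotient), while $Y_0$ is a random variable, and $M(\tau_0,Y_0)=\{L(T-\tau_0)>1-Y_0>0\}$ does not bound $1-Y_0$ away from zero; if $\esssup_{M(\tau_0,Y_0)}Y_0=1$, no admissible $h>0$ exists. Worse, whenever $P(Y_0+h>1)>0$ (which can happen on part of $M(\tau_0,Y_0)$ and certainly on its complement, where $Y_0=1$ is allowed), the set $U(\tau_0,Y_0+h)$ is empty, so the comparison $J(\tau_0,Y_0+h)\geq E[\int_{\tau_0}^T\hat u(s)X(s)ds\mid\mathcal{F}_{\tau_0}]$ has no content: not even $\hat u\equiv 0$ is admissible there. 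The repair is exactly the localization the paper flags as the ``additional complication'': introduce $M_m(\tau_0,Y_0)=\{L(T-\tau_0)>1-Y_0\geq 1/m\}$, restrict to $h<1/m$, multiply the perturbation by ${\bf 1}_{M_m(\tau_0,Y_0)}$ so that the test control lies in $U(\tau_0,Y_0+h{\bf 1}_{M_m(\tau_0,Y_0)})$, deduce the bound $-D^+_yJ(\tau_0,Y_0)\leq E[X(\tilde\rho_n)\mid\mathcal{F}_{\tau_0}]$ on $M_m(\tau_0,Y_0)$, and then exhaust $M(\tau_0,Y_0)=\bigcup_m M_m(\tau_0,Y_0)$. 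With this modification the rest of your argument goes through unchanged.
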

\begin{proof}
 The proof is similar to the proof of the previous lemma, starting from
\begin{eqnarray*}
\tilde \rho_n&=&\inf\{t\geq \sigma_n;\;\int_{\sigma_n}^t  u^{\tau_{\nd{0}},Y_0}(s)ds> 0\}\wedge T \\
\tilde \rho_{n,h}&=&\inf\{t\geq \sigma_n;\;\int_{\sigma_n}^t  u^{\tau_{\nd{0}},Y_0}(s)ds\geq h\}\wedge T,
\end{eqnarray*}
where the sequence $(\sigma_n)$ again nondecreasingly converges to $\sigma$ and satisfies $\tau_0\leq \sigma_n<\sigma$ on $M(\tau_0,Y_0)$. It has one additional complication, namely that  $ u^{\tau_0,Y_0}- u^{\tau_0,Y_0} {\bf 1}_{[\tilde \rho_n,\tilde \rho_{n,h}]}$
 does in general not belong to $U(\tau_0,Y_0+h)$. As a remedy we fix some arbitrary $m\in \mathbb{N}$ and assume $h<1/m$. We introduce the set
$$
M_m(\tau_0,Y_0):=\{L(T-\tau_0)>1-Y_0\geq 1/m\} \in \mathcal{F}_{\tau_0}
$$
and the stopping times
\begin{eqnarray*}
\cb{\bar\sigma_{U,h}}&:=& \inf\{t\geq \tau_0;\;Y_0+\int_{\tau_0}^t  u^{\tau_{\nd{0}},Y_0}(s)ds\geq 1-h\}\wedge T\\
\cb{\bar\sigma_{U}}&:=& \inf\{t\geq \tau_0;\;Y_0+\int_{\tau_0}^t  u^{\tau_{\nd{0}},Y_0}(s)ds\geq 1\}\wedge T
\end{eqnarray*}
Let
$$
u_{n,h}(t)= u^{\tau_0,Y_0}(t){\bf 1}_{[\tilde \rho_n\wedge \cb{\bar\sigma_{U,h}} ,\tilde \rho_{n,h}\wedge \cb{\bar\sigma_{U}}]}(t) {\bf 1}_{M_m(\tau_0,Y_0)}.
$$
As $\int_{\tau_0}^T u^{\tau_0,Y_0}(r)dr=1-Y_0$ on $M_m(\tau_0,Y_0)\subset M(\tau_0,Y_0)$ by Proposition \ref{prop:optimalcontrol}, we conclude
that
$$
\int_{\tau_0}^T (u^{\tau_0,Y_0}(r)- u_{n,h}(r)) dr=1-Y_0-h
$$
on $M_m(\tau_0,Y_0)$.  This implies that $ u^{\tau_0,Y_0}-u_{n,h}\in U(\tau_0,Y_0+h{\bf 1}_{M_m(\tau_0,Y_0)})$.
Consequently, on
$M_m(\tau_0,Y_0)$,
\begin{eqnarray}\label{eq:hilf0002}
&& -\frac{J(\tau_0,Y_0+h)-J(\tau_0,Y_0)}{h} \leq \frac{1}{h} E[\int_{\tau_0}^T  u_{n,h}(t)X(t) dt|\mathcal{F}_{\tau_0}] \nonumber \\ &\leq&
\frac{1}{h}  E[\int_{\tilde \rho_n}^{\tilde \rho_{n,h}}  u^{\tau_0,Y_0}(t)X(t) dt|\mathcal{F}_{\tau_0}] + E[\sup_{r\in [0,T] } X(r) {\bf 1}_{\{\cb{\bar\sigma_{U,h}}<\tilde \rho_n\}} |\mathcal{F}_{\tau_0}]
\end{eqnarray}
Note that by the definition of $\tilde \rho_n$, we have
$$
1-Y_0-\int_{\tau_0}^{\tilde \rho_n} u^{\tau_{\nd{0}},Y_0}(s)ds= 1-Y_0-\int_{\tau_0}^{\sigma_n}  u^{\tau_{\nd{0}},Y_0}(s)ds
$$
on $M_m(\tau_0,Y_0)$. Moreover, this expression is strictly positive on $M_m(\tau_0,Y_0)$, because $\sigma\in \mathcal{S}_{B(\tau_0,Y_0)}^p$
and $\sigma_n<\sigma$. Hence, $\tilde \rho_n\leq  \cb{\bar\sigma_{U,h}}$ for $h$ sufficiently small (depending on $\omega$). This shows that the second term in
(\ref{eq:hilf0002}) tends to zero as $h$ goes to zero and that
$$
 \lim_{h\downarrow 0} h^{-1} \int_{\tilde \rho_n}^{\tilde \rho_{n,h}} u^{\tau_0,Y_0}(t) dt=1.
$$
Now the same argument as in (\ref{eq:hilf13}) can be applied to the first term in (\ref{eq:hilf0002}). We hence conclude that
$$
-D^+_yJ(\tau_0,Y_0)\leq  E[X(\tilde \rho_n)|\mathcal{F}_{\tau_0}]
$$
on $M_m(\tau_0,Y_0)$ for every $m\in \mathbb{N}$, and thus, on $M(\tau_0,Y_0)=\cup_{m\in \mathbb{N}} M_m(\tau_0,Y_0)$. Defining
$$
\rho_n=\left\{\begin{array}{cl}
     \tilde \rho_n, & \omega \in    M(\tau_0,Y_0) \\ \sigma, & \textnormal{otherwise,}
              \end{array}
 \right.
$$
the rest of the proof is identical to the one of the previous lemma.
\end{proof}

As a consequence of the previous three lemmas we get the following criterion for the left-hand side derivative and the right-hand side
derivative of $J$ to coincide.
\begin{prop}\label{prop:derivative}
 Suppose $X$ satisfies the standing assumptions and  $(\tau_0,Y_0)$ are as above. If $X$ is LCE at $\bar\sigma$, defined in (\ref{eq:barsigma}),
then
$$
D^-_yJ(\tau_0,Y_0)=D^+_yJ(\tau_0,Y_0)
$$
on $\{1>Y_0\neq 1-L(T-\tau_0)\}$, $P$-almost surely. Moreover, $D^-_yJ(\tau_0,Y_0)=0$ on $\{Y_0\leq 1-L(T-\tau_0)\}$.
\end{prop}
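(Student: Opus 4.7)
The plan is to combine Lemmas \ref{lem:barsigma}, \ref{lem:A}, and \ref{lem:B} into a two-sided sandwich. Lemma \ref{lem:barsigma} places $\bar\sigma$ simultaneously in $\mathcal{S}_{A(\tau_0,Y_0)}^p$ and in $\mathcal{S}_{B(\tau_0,Y_0)}^p$, so under the hypothesis that $X$ is LCE at $\bar\sigma$, both Lemmas \ref{lem:A} and \ref{lem:B} apply with the specific choice $\sigma=\bar\sigma$. The two conclusions then bracket the common value $-E[X(\bar\sigma){\bf 1}_{M(\tau_0,Y_0)}]$ between the expectations of the one-sided derivatives on $M(\tau_0,Y_0)$, and the concavity of $y\mapsto J(\tau_0,y)$ furnished by Proposition \ref{prop:goodversion} forces pointwise agreement.

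I would first treat the secondary assertion on $\{Y_0\leq 1-L(T-\tau_0)\}$. By (\ref{eq:boundaryL}) and the fact that $J$ is a version of $\bar J$, one has $J(\tau_0,y)=E[\int_{\tau_0}^T LX(s)ds|\mathcal{F}_{\tau_0}]$ $P$-a.s.\ for each fixed $y\leq 1-L(T-\tau_0)$. A countable-dense argument, together with the pathwise Lipschitz continuity of $y\mapsto J(\tau_0,\omega,y)$ on $\bar\Omega$ (property 2 in Proposition \ref{prop:goodversion}), upgrades this to $y\mapsto J(\tau_0,\omega,y)$ being constant on $(-\infty,1-L(T-\tau_0)]$ for every $\omega\in\bar\Omega$; hence $D^-_yJ(\tau_0,Y_0)=0$ on $\{Y_0\leq 1-L(T-\tau_0)\}$.

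For the main assertion I would argue on $M(\tau_0,Y_0)=\{1>Y_0>1-L(T-\tau_0)\}$, after observing that $\{1>Y_0\neq 1-L(T-\tau_0)\}$ is the disjoint union of $M(\tau_0,Y_0)$ and $\{Y_0<1-L(T-\tau_0)\}$, and that the latter is already handled by the previous paragraph. By Lemma \ref{lem:barsigma} the predictable stopping time $\bar\sigma$ belongs to $\mathcal{S}_{A(\tau_0,Y_0)}^p\cap \mathcal{S}_{B(\tau_0,Y_0)}^p$, so invoking Lemmas \ref{lem:A} and \ref{lem:B} with $\sigma=\bar\sigma$ and using the LCE hypothesis at $\bar\sigma$ yields
\begin{equation*}
-E[D^-_yJ(\tau_0,Y_0){\bf 1}_{M(\tau_0,Y_0)}]\geq E[X(\bar\sigma){\bf 1}_{M(\tau_0,Y_0)}]\geq -E[D^+_yJ(\tau_0,Y_0){\bf 1}_{M(\tau_0,Y_0)}],
\end{equation*}
and therefore $E[(D^-_yJ(\tau_0,Y_0)-D^+_yJ(\tau_0,Y_0)){\bf 1}_{M(\tau_0,Y_0)}]\leq 0$. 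The concavity of $y\mapsto J(\tau_0,y)$ on $\bar\Omega$ gives $D^-_yJ(\tau_0,Y_0)-D^+_yJ(\tau_0,Y_0)\geq 0$ pointwise. A nonnegative random variable with nonpositive expectation on $M(\tau_0,Y_0)$ must vanish there $P$-a.s., which is the desired equality.

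The main obstacle is more bookkeeping than substance, since the three preparatory lemmas have already absorbed the analytic content. One must verify that $D^\pm_yJ(\tau_0,Y_0)$ are integrable, so that the expectations in the sandwich are finite and their difference is well defined; this follows from the Lipschitz bound in property 2 of Proposition \ref{prop:goodversion} combined with Doob's $L^p$-inequality applied to the martingale underlying $Z$, using the standing assumption (\ref{ass:square}). One must also check that the LCE assumption is formulated at precisely the stopping time $\bar\sigma$ fed into Lemmas \ref{lem:A} and \ref{lem:B}, which is exactly the hypothesis of the proposition.
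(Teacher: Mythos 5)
Your proposal is correct and follows essentially the same route as the paper: combine Lemmas \ref{lem:barsigma}, \ref{lem:A} and \ref{lem:B} at $\sigma=\bar\sigma$ into the sandwich $E[D^+_yJ(\tau_0,Y_0){\bf 1}_{M(\tau_0,Y_0)}]\geq -E[X(\bar\sigma){\bf 1}_{M(\tau_0,Y_0)}]\geq E[D^-_yJ(\tau_0,Y_0){\bf 1}_{M(\tau_0,Y_0)}]$, then use concavity to upgrade the integrated inequality to pointwise equality, and handle $\{Y_0\leq 1-L(T-\tau_0)\}$ via the constancy of $J$ there coming from (\ref{eq:boundaryL}). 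Your extra remarks on integrability and the decomposition of $\{1>Y_0\neq 1-L(T-\tau_0)\}$ are fine bookkeeping that the paper leaves implicit.
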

\begin{proof}
 The previous three lemmas imply that
$$
E[D^+_yJ(\tau_0,Y_0) {\bf 1}_{M(\tau_0,Y_0)}]\geq - E[X(\bar \sigma){\bf 1}_{M(\tau_0,Y_0)}] \geq E[D^-_yJ(\tau_0,Y_0) {\bf 1}_{M(\tau_0,Y_0)}].
$$
As $D^-_yJ(\tau_0,Y_0) \geq D^+_yJ(\tau_0,Y_0)$ by concavity, we conclude that
$$
D^-_yJ(\tau_0,Y_0)=D^+_yJ(\tau_0,Y_0)
$$
on $M(\tau_0,Y_0)$. On the set $\{Y_0 \leq 1- L(T-\tau_0)\}$, we have
$$
J(\tau_0,Y_0)=E[\int_{\tau_0}^T LX(s)ds|\mathcal{F}_{\tau_0}].
$$
Hence, $D^-_yJ(\tau_0,Y_0)=0$ on  $\{Y_0 \leq 1- L(T-\tau_0)\}$ and  $D^+_yJ(\tau_0,Y_0)=0$ on  $\{Y_0 < 1- L(T-\tau_0)\}$.
\end{proof}

We are now in the position to give the proof of Theorem \ref{thm:smooth}, which at the same time finishes the proof of Theorem \ref{thm:main_new}.
\begin{proof}[Proof of Theorem \ref{thm:smooth}]
The case $t=T$ is trivial, because $J(T,y)=0$ for every $y\in(-\infty,1)$.  We now fix $t\in [0,T)$. Then, for every $\omega \in \bar \Omega$ (which is the set of full measure introduced in
Proposition \ref{prop:goodversion}), the mapping
$$
y\mapsto D^-_yJ(t,\omega,y)
$$
is nonincreasing and left-continuous on $(1-L(T-t),1)$ by concavity. Hence, there is a countable family of $\mathcal{F}_t$-measurable random variables $(Y_n)_{n \in \mathbb{N}}$ with values in   $[1-L(T-t),1]$
such that the jumps of $y\mapsto D^-_yJ(t,y)$, restricted to $(1-L(T-t),1)$,  are included in $(Y_n)_{n \in \mathbb{N}}$, More precisely,
\begin{eqnarray*}
D(\omega)&:=&\{y\in (1-L(T-t),1);\;  D^-_yJ(t,\omega,y)\neq \lim_{\eta \downarrow \cb{y}}  D^-_yJ(t,\omega,\eta)\}\\ &=&
\{y\in (1-L(T-t),1);\;  D^-_yJ(t,\omega,y)\neq   D^+_yJ(t,\omega,y)\} \\&\subset& \{ Y_1(\omega),Y_2(\omega),\ldots\}.
\end{eqnarray*}
Here the first identity follows again by concavity. By the previous proposition there is a set $\Omega_t\subset \bar \Omega$ of full $P$-measure such that, for every $n\in \mathbb{N}$
and $\omega \in \Omega_t$,
$$
Y_n(\omega)\in (1-L(T-t),1)\;\Longrightarrow \; D^-_yJ(t,\omega,Y_n(\omega))= D^+_yJ(t,\omega,Y_n(\omega))
$$
This implies that $D(\omega)=\emptyset$ for $\omega\in\Omega_t$ and, hence,
$$
y\mapsto J(t,\omega,y)
$$
is continuously differentiable on $(1-L(T-t),1)$ for $\omega\in \Omega_t$.

\cb{In order to prove the second assertion, we introduce the set
$$
C:=\left\{(t,\omega,y)\in[0,T]\times \Omega\times(-\infty,1);\; D^+_y J(t,\omega,y)=D^-_yJ(t,\omega,y)\right\}\in \mathcal{B}_{[0,T]}\otimes \mathcal{F}\otimes \mathcal{B}_{(-\infty,1)}.
$$
We fix $t\in [0,T]$. For $y\in (1-L(T-t),1)$  and $s\in [t,T-\frac{1-y}{L}]$ we consider the cuts
$$
C_{y}=\{ (r,\omega) \in [t,T-\frac{1-y}{L}]\times \Omega;\; (r,\omega,y)\in C\},\quad C_{(s,y)}=\{ \omega \in \Omega;\; (s,\omega,y)\in C\}.
$$
Then, for every $y\in (1-L(T-t),1)$  and $s\in [t,T-\frac{1-y}{L})$
$$
\Omega_s\subset\left\{\omega \in \Omega;\; \forall  \eta\in( 1-L(T-s),1) \quad D^+_y J(s,\omega,\eta)=D^-_yJ(s,\omega,\eta) \right\}\subset C_{(s,y)},
$$
where $\Omega_s$ was constructed in the first part of the proof. Consequently, $P(C_{(s,y)})=1$ for every $y\in (1-L(T-t),1)$  and $s\in [t,T-\frac{1-y}{L})$. An application of
Fubini's theorem then yields for every $y\in (1-L(T-t),1)$
\begin{eqnarray*}
 \lambda_{[t,T-\frac{1-y}{L}]}\otimes P(C_y)=\int_t^{T-\frac{1-y}{L}} P(C_{(s,y)}) ds =T-\frac{1-y}{L}-t,
\end{eqnarray*}
 which finishes the proof.}
\end{proof}

The following theorem relates the derivative of $J$ explicitly to optimal stopping problems, if $X$ is LCE. It can be considered as the main result of this section.

\begin{thm}\label{thm:smooth2}
 Suppose $X$ satisfies the standing assumptions and it is LCE. Then, for every $[0,T]$-valued stopping time $\tau_0$ and every $\mathcal{F}_{\tau_0}$-measurable, $(-\infty,1]$-valued random variable
$Y_0$ the following holds:\\[0.1cm]
(i) On the set $\{L(T-\tau_0)>1-Y_0>0\}$
$$
-D^-_yJ(\tau_0,Y_0)=-D^+_yJ(\tau_0,Y_0)=\esssup_{\sigma \in \mathcal{S}_{A(\tau_0,Y_0)}^p} E[X(\sigma)|\mathcal{F}_{\tau_0}]=
\essinf_{\rho \in \mathcal{S}_{B(\tau_0,Y_0)}^p} E[X(\rho)|\mathcal{F}_{\tau_0}].
$$
 Moreover, every stopping time from the nonempty set $\mathcal{S}_{A(\tau_0,Y_0)}^p\cap\mathcal{S}_{B(\tau_0,Y_0)}^p$
is optimal for both optimal stopping problems.\\[0.1cm]
(ii) On the set $\{Y_0<1-L(T-\tau_0)\}$
$$
-D^-_yJ(\tau_0,Y_0)=-D^+_yJ(\tau_0,Y_0)=0.
$$
(iii) On the set $\{Y_0=1-L(T-\tau_0)\}\cap\{\tau_0<T\}$
$$
-D^-_yJ(\tau_0,Y_0)=0,\quad -D^+_yJ(\tau_0,Y_0)=\essinf_{\sigma \in \mathcal{S}_{\tau_0}} E[X(\sigma)|\mathcal{F}_{\tau_0}],
$$
where $\mathcal{S}_{\tau_0}$ denotes the set of stopping times with values in $[\tau_0,T]$. \\[0.1cm]
(iv) On the set $\{Y_0=1\}\cap\{\tau_0<T\}$
$$
-D^-_yJ(\tau_0,Y_0)=\esssup_{\sigma \in \mathcal{S}_{\tau_0}} E[X(\sigma)|\mathcal{F}_{\tau_0}].
$$
\end{thm}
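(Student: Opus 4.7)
The overall strategy is to bolt together Lemmas \ref{lem:barsigma}--\ref{lem:B} and Proposition \ref{prop:derivative} to prove (i), and then to derive (ii)--(iv) from the boundary structure of $J$ on $\{Y_0\leq 1-L(T-\tau_0)\}$ given by (\ref{eq:boundaryL}), combined with a passage to the limit in (i).

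For part (i), I fix $(\tau_0,Y_0)$ with $\{L(T-\tau_0)>1-Y_0>0\}=M(\tau_0,Y_0)$. Since $X$ is LCE globally, Proposition \ref{prop:derivative} immediately gives $D^-_yJ=D^+_yJ$ on $M(\tau_0,Y_0)$. The next step is to upgrade the expectation inequalities in Lemmas \ref{lem:A} and \ref{lem:B} to pointwise conditional ones by repeating their arguments after restricting the approximating sequences $\rho_n$ to an arbitrary $\mathcal{F}_{\tau_0}$-measurable subset $A$ of $M$ (replacing $\rho_n$ by $\rho_n\mathbf{1}_A+\sigma\mathbf{1}_{A^c}$ preserves both the stopping-time property and the convergences used). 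This would yield
$$E[X(\sigma)|\mathcal{F}_{\tau_0}]\leq -D^-_yJ(\tau_0,Y_0),\quad E[X(\rho)|\mathcal{F}_{\tau_0}]\geq -D^+_yJ(\tau_0,Y_0)$$
on $M$ for every $\sigma\in\mathcal{S}_{A(\tau_0,Y_0)}^p$ and $\rho\in\mathcal{S}_{B(\tau_0,Y_0)}^p$. Taking the esssup over $\sigma$ and the essinf over $\rho$, and inserting the stopping time $\bar\sigma\in\mathcal{S}_A^p\cap\mathcal{S}_B^p$ supplied by Lemma \ref{lem:barsigma} into both chains, the concavity inequality $-D^-_yJ\leq -D^+_yJ$ forces the whole chain to collapse, establishing the identities claimed in (i) and the optimality of every element of $\mathcal{S}_A^p\cap\mathcal{S}_B^p$ for both stopping problems simultaneously.

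Part (ii) will follow directly from (\ref{eq:boundaryL}), since $J(\tau_0,Y_0)$ is constant in $Y_0$ on $\{Y_0\leq 1-L(T-\tau_0)\}$ so that both one-sided derivatives vanish. For part (iii), the left derivative at $Y_0=1-L(T-\tau_0)$ is zero for the same constancy reason; the right derivative I plan to obtain by approaching $Y_0$ from the right through $Y_0+h$, $h\downarrow 0$, applying (i), and passing to the limit. As $h\downarrow 0$, the optimal control $u^{\tau_0,Y_0+h}$ must concentrate near the maximum rate $L$ to spend the almost-maximal remaining volume $1-Y_0-h$, so that $B(\tau_0,Y_0+h)$ sweeps out all of $(\tau_0,T]$ in the limit and $\mathcal{S}_{B(\tau_0,Y_0+h)}^p$ expands to $\mathcal{S}_{\tau_0+}^p$. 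The LCE assumption then allows any stopping time in $\mathcal{S}_{\tau_0}$ to be announced by predictable ones without changing $E[X(\cdot)|\mathcal{F}_{\tau_0}]$, so that $\essinf_{\mathcal{S}_{\tau_0+}^p}=\essinf_{\mathcal{S}_{\tau_0}}$. Combined with the right-continuity in $y$ of $D^+_yJ$, this yields (iii). Part (iv) is analogous, now approaching $Y_0=1$ from below by $1-h\uparrow 1$: the optimal control $u^{\tau_0,1-h}$ becomes small, $A(\tau_0,1-h)$ fills out $(\tau_0,T]$, and the esssup over $\mathcal{S}_A^p$ converges to the esssup over $\mathcal{S}_{\tau_0}$.

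The main obstacle will be the limiting arguments behind (iii) and (iv): one must show carefully that $B(\tau_0,Y_0+h)$ and $A(\tau_0,1-h)$ do exhaust $(\tau_0,T]$ in a measure-theoretic sense strong enough to pass the essinf/esssup to the limit, and that the predictable essinf/esssup coincides with the one over all stopping times under LCE. The remaining piece of (i) --- the localisation of Lemmas \ref{lem:A} and \ref{lem:B} to $\mathcal{F}_{\tau_0}$-measurable subsets of $M$ --- is routine once the approximation $\rho_n$ is redone with the correct event-dependence.
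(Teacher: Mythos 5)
Your treatment of parts (i) and (ii) is essentially the paper's argument. For (i) the paper likewise combines Lemmas \ref{lem:A}, \ref{lem:B} and \ref{lem:barsigma} and closes the chain of inequalities with the concavity bound $D^-_yJ\geq D^+_yJ$; the only difference is cosmetic, since the first displayed inequalities of Lemmas \ref{lem:A} and \ref{lem:B} are already stated conditionally on $M(\tau_0,Y_0)$ for each $\rho_n$, so less ``upgrading'' is needed than you suggest --- one only has to pass to the limit $E[X(\rho_n)|\mathcal{F}_{\tau_0}]\to E[X(\sigma)|\mathcal{F}_{\tau_0}]$, which the paper does via LCE together with the uniform integrability coming from (\ref{ass:square}), and which your localization to $\mathcal{F}_{\tau_0}$-measurable subsets would accomplish equally well. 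Part (ii) is exactly Proposition \ref{prop:derivative}.

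For (iii) and (iv) you depart from the paper, and here there is a genuine gap. Your plan is to apply (i) at the interior points $Y_0+h$ (resp.\ $1-h$), let $h\downarrow 0$, and argue that $\mathcal{S}^p_{B(\tau_0,Y_0+h)}$ (resp.\ $\mathcal{S}^p_{A(\tau_0,1-h)}$) exhausts the predictable stopping times. One direction is indeed cheap: since $\mathcal{S}^p_{B(\tau_0,Y_0+h)}\subset\mathcal{S}_{\tau_0}$, the essinf over the former dominates the essinf over the latter, and $D^+_yJ(\tau_0,\cdot)$ is right-continuous by concavity. But the reverse inequality requires, for an \emph{arbitrary} $\sigma\in\mathcal{S}_{\tau_0}$, a family $\rho_h\in\mathcal{S}^p_{B(\tau_0,Y_0+h)}$ with $\limsup_{h\downarrow 0}E[X(\rho_h)|\mathcal{F}_{\tau_0}]\leq E[X(\sigma)|\mathcal{F}_{\tau_0}]$. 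This is precisely the step you defer as ``the main obstacle'', and it is not routine: $B(\tau_0,Y_0+h)$ depends on the chosen optimal control $u^{\tau_0,Y_0+h}$; the debut of $\{s\geq\sigma:\; s\in B(\tau_0,Y_0+h)\}$ must be shown to be a stopping time and then rendered predictable without leaving $B$; and (i) is only available on $\{L(T-\tau_0)>h\}$, forcing an additional localization of $h$ against the random horizon $T-\tau_0$. Moreover, your appeal to LCE for passing between predictable and general stopping times is misplaced: a general stopping time cannot be announced, and the approximation $(\sigma+1/k)\wedge T$ from the right relies on right-continuity of $X$, not on LCE.

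The paper proves (iii) without invoking (i) or the sets $A$, $B$ at all. Writing $Y_n=Y_0+(1-Y_0)/n$, it identifies the difference quotient with $E[\int_{\tau_0}^T(L-u^{\tau_0,Y_n}(s))X(s)\,ds|\mathcal{F}_{\tau_0}]/((1-Y_0)/n)$, bounds it from below by replacing $X$ with the RCLL submartingale $Y_*(s)=\essinf_{\sigma\in\mathcal{S}_s}E[X(\sigma)|\mathcal{F}_s]$ and applying Corollary \ref{cor:submartingale}, and bounds it from above by inserting the explicit admissible perturbations $u_{n,m}=L-L{\bf 1}_{[\sigma_m,\sigma_m+(T-\tau_0)/n]}$ built from an arbitrary stopping time $\sigma$ (admissible because the optimal control at $Y_0=1-L(T-\tau_0)$ is identically $L$, so volume can be removed anywhere); LCE is then used only to let $\sigma_m\uparrow\sigma$. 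You should either adopt this direct variational argument or supply the exhaustion and measurable-selection details in full; as written, (iii) and (iv) are not proved.
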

\begin{rem}
 A related representation for the marginal value of a discrete time multiple stopping problem is derived in Theorem 2.2 of \citet{Be1}.
We also note that differentiability of the  value process for some class of finite fuel problems related to the monotone follower problem
 can be shown by  expressing the derivative explicitly in terms of (standard) optimal stopping problems,
see e.g. \citet{Ka,KS86}.
\end{rem}

Before we provide the proof, we note that the two stopping problems in Theorem \ref{thm:smooth2}, (i), make the intuition at the beginning of this section rigorous.
 The marginal value can be calculated by adding some marginal volume at the
best time where exercise is still possible. It can also be calculated by removing some marginal volume at the cheapest time, where this is possible.
The interesting aspect is that here `best time' and `cheapest time'
refer to predictable stopping times only. The next example shows that this restriction is essential.
\begin{exmp}
 Suppose $\xi$ is a binary trial with $P(\{\xi=1\})=P(\{\xi=-1\})$. Define
$$
X(t)=1+\xi(2-t){\bf 1}_{[1,3]}(t), \quad t\in [0,3],
$$
and consider the filtration $(\mathcal{F}_t)_{t\in[0,3] }$ generated by $X$. Then $X$ satisfies the standing assumptions on the time horizon $[0,3]$ and is LCE as the sum of the martingale $1+\xi{\bf 1}_{[1,3]}(t)$ and the continuous process
$\xi(1-t){\bf 1}_{[1,3]}(t)$. We assume $L=1$. For $t\in[0,1]$ and $y\in[0,1]$ it is then straightforward to see that
$$
J(t,y)=2-\frac{(1+y)^2}{2}
$$
and that
$$
 u^{t,y}(r)={\bf 1}_{[1,2-y]}(r){\bf 1}_{\{X(1)=2\}}+ {\bf 1}_{[2+y,3]}(r){\bf 1}_{\{X(1)=0\}}
$$
is optimal. In particular,
$$
A(0,1/2)=\left\{\begin{array}{cl}
      (0,5/2], & X(1)=0;\\ \
(0,1]\cup [3/2,3], & X(1)=2.
                \end{array}
 \right.
$$
Define the (non-predictable) stopping time
$$
\tau=\inf\{r\geq 0;\; X(r)\geq 3/2\}= {\bf 1}_{\{X(1)=2\}} + 5/2 \,{\bf 1}_{\{X(1)=0\} },
$$
which takes values in $A(0,1/2)$.
Then,
$$
E[X(\tau)]=7/4>3/2=-D^-_yJ(0,1/2)=\sup_{\sigma \in \mathcal{S}_{A(0,1/2)}^p} E[X(\sigma)],
$$
where the last identity is due to Theorem \ref{thm:smooth2}. However, in view of Lemma \ref{lem:barsigma} and Theorem \ref{thm:smooth2}, an optimal stopping time in $\mathcal{S}_{A(0,1/2)}^p$
is given by
\begin{eqnarray*}
\bar \sigma&=&\inf\{r\geq 0;\; \int_0^r u^{0,1/2}(s)ds \notin (t-5/2,1/2)\}= 3/2\;{\bf 1}_{\{X(1)=2\}} + 5/2 \,{\bf 1}_{\{X(1)=0\} }\\&=&\inf\{r\geq 0;\; X(r)=3/2\}.
\end{eqnarray*}
This shows that
the restriction to predictable stopping times cannot be avoided in the optimal stopping characterization of the $y$-derivative of $J$.
\end{exmp}

We now give the proof of Theorem \ref{thm:smooth2}.

\begin{proof}[Proof of Theorem \ref{thm:smooth2}]
 (i) Choose some stopping time $\sigma \in \mathcal{S}_{A(\tau_0,Y_0)}^p$.
 By Lemma \ref{lem:A}, there is a sequence
of stopping times $(\rho_n)$ with values in $[\tau_0,T]$, which nondecreasingly converges to $\sigma$ and satisfies
$$
-D^-_yJ(\tau_0,Y_0)\geq  E[X(\rho_n)|\mathcal{F}_{\tau_0}]
$$
on $M(\tau_0,Y_0)$.
By the integrability property of $X$ in (\ref{ass:square}) and left-continuity in expectation one easily obtains
$$
\lim_{n\rightarrow \infty} E[X(\rho_n)|\mathcal{F}_{\tau_0}]=E[X(\sigma)|\mathcal{F}_{\tau_0}],
$$
because $\rho_n\geq \tau_0$. Hence,
$$
-D^-_yJ(\tau_0,Y_0)\geq  E[X(\sigma)|\mathcal{F}_{\tau_0}]
$$
on $M(\tau_0,Y_0)$.
Analogously we obtain
$$
-D^+_yJ(\tau_0,Y_0)\leq  E[X(\rho)|\mathcal{F}_{\tau_0}]
$$
on $M(\tau_0,Y_0)$
for $\rho \in \mathcal{S}_{B(\tau_0,Y_0)}^p$ making use of Lemma \ref{lem:B}. Hence, choosing $\tilde \sigma$ from the nonempty set  $\mathcal{S}_{A(\tau_0,Y_0)}^p\cap \mathcal{S}_{B(\tau_0,Y_0)}^p$ (by Lemma
\ref{lem:barsigma}), we get
\begin{eqnarray*}
 -D^-_yJ(\tau_0,Y_0)&\geq&  \esssup_{\sigma \in \mathcal{S}_{A(\tau_0,Y_0)}^p} E[X(\sigma)|\mathcal{F}_{\tau_0}] \geq E[X(\tilde \sigma)|\mathcal{F}_{\tau_0}]
 \\ &\geq & \essinf_{\rho \in \mathcal{S}_{B(\tau_0,Y_0)}^p} E[X(\rho)|\mathcal{F}_{\tau_0}] \geq -D^+_yJ(\tau_0,Y_0)
\end{eqnarray*}
on $M(\tau_0,Y_0)$. As $D^-_yJ(\tau_0,Y_0)\geq D^+_yJ(\tau_0,Y_0)$ by concavity, the assertion follows.
\\[0.1cm]
(iii) Define $Y_n:=Y_0+(1-Y_0)/n$. Denote by $u^{\tau_0,Y_n}$ an optimal control for $\bar J(\tau_0,Y_n)$. Then,
$Y_n>Y_0$ on $\{Y_0=1-L(T-\tau_0)\}\cap\{\tau_0<T\}$. Hence, on this set,
\begin{eqnarray*}
 -D^+_yJ(\tau_0,Y_0)&=&\lim_{n\rightarrow \infty} \frac{-J(\tau_0,Y_n)+J(\tau_0,1-L(T-\tau_0))}{Y_n-Y_0}
\\ &=& \lim_{n\rightarrow \infty} \frac{E\left[\left. \int_{\tau_0}^T (L-u^{\tau_0,Y_n}(s))X(s) \right|\mathcal{F}_{\tau_0}\right]}{(1-Y_0)/n}.
\end{eqnarray*}
  Denoting by $Y_*(s)$ an RCLL version of the submartingale
$s\mapsto \essinf_{\sigma \in \mathcal{S}_s} E[X(\sigma)|\mathcal{F}_s],
$
we, hence, obtain thanks to Corollary \ref{cor:submartingale},
\begin{eqnarray*}
 -D^+_yJ(\tau_0,Y_0)&\geq& \limsup_{n\rightarrow \infty} \frac{E\left[\left. \int_{\tau_0}^T (L-u^{\tau_0,Y_n}(s))Y_*(s) \right|\mathcal{F}_{\tau_0}\right]}{(1-Y_0)/n} \\
&\geq& \limsup_{n\rightarrow \infty} \frac{E\left[\left. \int_{\tau_0}^T (L-L {\bf 1}_{[\tau_0\vee (T-(1-Y_n)/L),T]}(s))Y_*(s) \right|\mathcal{F}_{\tau_0}\right]}{(1-Y_0)/n}
\\ &=& \limsup_{n\rightarrow \infty} \frac{E\left[\left. \int_{\tau_0}^{\tau_0+(1-Y_0)/(Ln)} Y_*(s) \right|\mathcal{F}_{\tau_0}\right]}{(1-Y_0)/(Ln)}
\\ &=& Y_*(\tau_0)=\essinf_{\sigma \in \mathcal{S}_{\tau_0}} E[X(\sigma)|\mathcal{F}_{\tau_0}].
\end{eqnarray*}
For the reverse inequality fix some  some arbitrary stopping time $\sigma$ with values in $[\tau_0, T]$.
Define $\sigma_m=\sigma\wedge(T-(T-\tau_0)/m)\geq \tau_0$.  We observe that, for $n\geq m$,
\begin{eqnarray*}
u_{n,m}&:=&L-L{\bf 1}_{[\sigma_m, \sigma_m+(T-\tau_0)/n]}{\bf 1}_{\{Y_0=1-L(T-\tau_0)\}\cap\{\tau_0<T\}} \\ &\in& U(\tau_0, L(T-\tau_0)-{\bf 1}_{\{Y_0=1-L(T-\tau_0)\}\cap\{\tau_0<T\}} L(T-\tau_0)/n).
\end{eqnarray*}
Thus, on $\{Y_0=1-L(T-\tau_0)\}\cap\{\tau_0<T\}$,
\begin{eqnarray*}
  -D^+_yJ(\tau_0,Y_0)&=&\lim_{n\rightarrow \infty} \frac{-J(\tau_0,L(T-\tau_0)(1+1/n))+J(\tau_0,1-L(T-\tau_0))}{L(T-\tau_0)/n}
\\ &\leq & \liminf_{n\rightarrow \infty} \frac{E\left[\left. \int_{\tau_0}^T (L-u_{n,m}(s))X(s) \right|\mathcal{F}_{\tau_0}\right]}{L(T-\tau_0)/n}
\\ &=& \liminf_{n\rightarrow \infty} \frac{E\left[\left. \int_{\sigma_m}^{\sigma_m+(T-\tau_0)/n} X(s) \right|\mathcal{F}_{\tau_0}\right]}{(T-\tau_0)/n}=E[X(\sigma_m)|\mathcal{F}_{\tau_0}]
\end{eqnarray*}
by right-continuity of $X$. Finally, by left-continuity in expectation we obtain
$$
-D^+_yJ(\tau_0,Y_0)\leq \lim_{m\rightarrow \infty} E[X(\sigma_m)|\mathcal{F}_{\tau_0}]=E[X(\sigma)|\mathcal{F}_{\tau_0}].
$$
As $\sigma$ was arbitrary this concludes the proof for the right-hand side derivative. The statement for the left-hand side derivative is included in Proposition \ref{prop:derivative}.
\\[0.1cm] (iv) is similar to (iii), and (ii)
is a direct consequence of Proposition \ref{prop:derivative}.
\end{proof}

\section{Duality}

In this final section of the paper we discuss a dual minimization problem to $\bar J(0,0)$. Recall that admissible controls in $U(0,0)$ are subject to three constraints.
A local constraint requires that $u$ takes values in $[0,L]$, a global one imposes that the total volume spent by the investor $\int_0^T u(s)ds$ is bounded by one, and the third one
is the adaptedness condition. In Section 7 of [BD] we relaxed the adaptedness constraint and came up with a continuous time version of an information relaxation
dual. This kind of dual is well studied for discrete time stochastic control problems, see e.g.  \citet{BSS}. We now relax the global constraint and re-inforce it by a more classical Lagrange multiplier approach.
It turns out that the Lagrange multiplier can be calculated explicitly in term of the derivative of $J$.
This approach leads to the following result:

\begin{thm}\label{thm:dual}
Suppose $LT>1$, and that $X$ satisfies the standing assumptions and is LCE.  Denote by $\mathcal{M}$ the set of \cb{RCLL }martingales on $[0,T]$. Then
$$
\bar J(0,0)=\inf_{M\in \mathcal{M}} \left(E\left[\int_0^T L (X(t)-M(t))_+\,dt\right] +E[M(0)] \right).
$$
Moreover, an optimal martingale is given by \cb{the (unique up to indistinguishability) RCLL and adapted modification of}
$$
\bar M(t)=\left\{\begin{array}{cl} -\cb{\frac{\partial }{\partial y}}J\left(t,\int_0^t u^{0,0}(s)ds\right), & t< \bar\sigma \\
Y^*(\bar \sigma)+M^*(t)-M^*(\bar \sigma), & t\geq\bar\sigma=\bar\sigma_{\upper}  \\
          Y_*(\bar \sigma)+M_*(t)-M_*(\bar \sigma), & t\geq \bar\sigma=\bar\sigma_L,
  \end{array}
\right.
$$
where $\bar \sigma=\bar \sigma_{\upper}\wedge\bar \sigma_L$,
\begin{eqnarray*}
\bar \sigma_{\upper}&=&\inf\{t\geq 0;\; \int_{0}^t  u^{0,0}(s)ds \geq 1 \},\\
\bar \sigma_L&=&\inf\{t\geq 0;\; \int_{0}^t  u^{0,0}(s)ds \leq 1-L(T-t) \},
\end{eqnarray*}
and $u^{0,0}$ is an optimal control for $\bar J(0,0)$ satisfying $\int_0^T u^{0,0}(s)ds=1$. Moreoever,
 $M^*$ and $M_*$ are the martingale parts of the Doob-Meyer decompositions of the RCLL supermartingale
$Y^*(s)=\esssup_{\sigma \in \mathcal{S}_s} E[X(\sigma)|\mathcal{F}_s]$ and the RCLL submartingale
$Y_*(s)=\essinf_{\sigma \in \mathcal{S}_s} E[X(\sigma)|\mathcal{F}_s]$.
\end{thm}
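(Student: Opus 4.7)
The proof splits into three steps: weak duality; verifying that $\bar M$ is an RCLL martingale; and verifying that $\bar M$ attains equality in the dual bound.

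\textbf{Weak duality.} For any $M\in\mathcal{M}$ and any $u\in U(0,0)$, the pointwise inequality $u(X-M)\le L(X-M)_+$ (valid since $u\in[0,L]$) together with adaptedness of $u$ and the martingale identity $E[u(s)M(s)]=E[u(s)M(T)]$ yields
\begin{equation*}
E\left[\int_0^T u(s) X(s)\,ds\right] \le E\left[\int_0^T L(X-M)_+\,ds\right] + E\left[M(T)\int_0^T u(s)\,ds\right].
\end{equation*}
Specializing to an optimal control $u^{0,0}$ with $\int_0^T u^{0,0}\,ds = 1$, whose existence under $LT>1$ is provided by Proposition \ref{prop:optimalcontrol}, the last term collapses to $E[M(T)]=E[M(0)]$. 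Hence $\bar J(0,0)\le E[\int_0^T L(X-M)_+\,ds]+E[M(0)]$, and taking the infimum over $M\in\mathcal{M}$ gives one direction of the duality.

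\textbf{Martingale property of $\bar M$.} By dynamic programming, the restriction of $u^{0,0}$ to $[t,T]$ may be chosen as an optimal control for $\bar J(t,U(t))$ with $U(t)=\int_0^t u^{0,0}\,ds$, so the sets $A(t,U(t))$, $B(t,U(t))$ and the predictable stopping time $\bar\sigma$ of Lemma \ref{lem:barsigma} (applied at $(t,U(t))$) coincide on $\{t<\bar\sigma\}$ with their counterparts derived from $(0,0)$. Lemma \ref{lem:barsigma} then places $\bar\sigma$ in $\mathcal{S}_{A(t,U(t))}^p\cap\mathcal{S}_{B(t,U(t))}^p$, and Theorem \ref{thm:smooth2}(i) identifies
\begin{equation*}
\bar M(t) = -\frac{\partial}{\partial y}J(t,U(t)) = E[X(\bar\sigma)\mid\mathcal{F}_t] \quad \text{on } \{t<\bar\sigma\},
\end{equation*}
which is manifestly a martingale. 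On $[\bar\sigma,T]$ both branches of the definition write $\bar M$ as the sum of an $\mathcal{F}_{\bar\sigma}$-measurable initial value and martingale increments from the Doob--Meyer decompositions, hence are themselves martingales. The delicate point, and the main obstacle, is stitching the two pieces across the predictable stopping time $\bar\sigma$: here the LCE hypothesis, combined with the predictability of $\bar\sigma$, is what ensures that the left-limit $E[X(\bar\sigma)\mid\mathcal{F}_{\bar\sigma-}]$ aligns with the initial value $Y^*(\bar\sigma)$ or $Y_*(\bar\sigma)$ of the continuation piece, after which an RCLL modification on $[0,T]$ exists and gives the candidate $\bar M$.

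\textbf{Strong duality via complementarity.} The central identity is
\begin{equation*}
u^{0,0}(s)(X(s)-\bar M(s)) = L(X(s)-\bar M(s))_+ \qquad \lambda\otimes P\textnormal{-a.e.}
\end{equation*}
On $[0,\bar\sigma)$ this follows by case analysis from the differential inclusion of Theorem \ref{thm:main_old}(b'), since there $\bar M = -\partial_y J(\cdot,U(\cdot))$. On $\{\bar\sigma=\bar\sigma_{\upper}\}\cap[\bar\sigma,T]$, $u^{0,0}=0$ (all volume has been used) and the Doob--Meyer decomposition $Y^*=M^*-A^*$ gives $\bar M(s)=Y^*(s)+A^*(s)-A^*(\bar\sigma)\ge Y^*(s)\ge X(s)$, so both sides vanish. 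On $\{\bar\sigma=\bar\sigma_L\}\cap[\bar\sigma,T]$, property (\ref{U'}) forces $u^{0,0}=L$, and analogously $\bar M(s)=Y_*(s)-(A_*(s)-A_*(\bar\sigma))\le Y_*(s)\le X(s)$, so both sides equal $L(X-\bar M)$. Integrating, applying the martingale integration-by-parts $E[\int_0^T u^{0,0}\bar M\,ds] = E[\bar M(T)U(T)] = E[\bar M(0)]$ (using $U(T)=1$ and the fact that $U$ is a continuous BV process and $\bar M$ a martingale), and rearranging gives $E[\int_0^T L(X-\bar M)_+\,ds] + E[\bar M(0)] = E[\int_0^T u^{0,0}X\,ds] = \bar J(0,0)$, which together with the first step completes the proof.
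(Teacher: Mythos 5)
Your overall architecture matches the paper's: the weak duality step is the paper's Lagrangian relaxation of the global volume constraint (your identity $E[u(s)M(s)]=E[u(s)M(T)]$, specialised immediately to $u^{0,0}$ with $\int_0^T u^{0,0}\,ds=1$, is exactly that argument); the identification $\bar M(t)=-\frac{\partial}{\partial y}J(t,\int_0^t u^{0,0})=E[X(\bar\sigma)\mid\mathcal{F}_t]$ for $t<\bar\sigma$ via dynamic programming, Lemma \ref{lem:barsigma} and Theorem \ref{thm:smooth2}(i) is the paper's identity (\ref{dual1}); and your pointwise complementary-slackness identity $u^{0,0}(s)(X(s)-\bar M(s))=L(X(s)-\bar M(s))_+$ is a correct and arguably cleaner repackaging of the paper's chain of integral manipulations — the case analysis via the differential inclusion before $\bar\sigma$ and via the Doob--Meyer inequalities $X\le Y^*\le\bar M$ (resp. $X\ge Y_*\ge\bar M$) after $\bar\sigma$ is precisely what the paper does.

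The gap is in the martingale property of $\bar M$, which you yourself call ``the delicate point'' and then dispose of in one sentence. What is actually needed is the identity $X(\bar\sigma)=Y^*(\bar\sigma)$ on $\{\bar\sigma_{\upper}\le\bar\sigma_L\}$ and $X(\bar\sigma)=Y_*(\bar\sigma)$ on $\{\bar\sigma_L<\bar\sigma_{\upper}\}$ — the paper's (\ref{dual2}). Only with this does the continuation piece $Y_*(\bar\sigma)+M_*(t)-M_*(\bar\sigma)$ (resp. the $Y^*$-piece) start from the terminal value $X(\bar\sigma)$ of the closed martingale $E[X(\bar\sigma)\mid\mathcal{F}_t]$, so that $E[\bar M(T)\mid\mathcal{F}_t]=\bar M(t)$ also for $t<\bar\sigma$. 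Without it $\bar M$ need not be a martingale, and then both the admissibility of $\bar M$ as a competitor in the dual and your final step $E[\int_0^T u^{0,0}\bar M\,ds]=E[\bar M(T)]=E[\bar M(0)]$ collapse. This identity is not a soft consequence of ``LCE plus predictability of $\bar\sigma$'': one inequality ($X(\bar\sigma_L)\ge Y_*(\bar\sigma_L)$) is trivial, but the reverse direction requires showing that $\bar\sigma$ minimizes $E[X(\cdot)]$ over $\mathcal{S}^p_{B(0,0)}$ (Theorem \ref{thm:smooth2}(i)), that every predictable stopping time with values in $(\bar\sigma_L,T]\cup\{T\}$ can be modified into an element of $\mathcal{S}^p_{B(0,0)}$ because $u^{0,0}=L$ on $[\bar\sigma_L,T]$ by (\ref{U'}), and finally approximating an optimal stopping time $\sigma^*$ for $Y_*(\bar\sigma_L)$ by the predictable times $(\sigma^*+1/k)\wedge T$ and passing to the limit using right-continuity and LCE of $X$. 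This is the content of the paper's Lemma \ref{lem:dual}, it is the substantive work of the section, and your proposal does not contain it.
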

We prepare the proof of this theorem with the following lemma.
\begin{lem}\label{lem:dual}
 Under the assumptions and with the notations of Theorem \ref{thm:dual} we have:
\begin{eqnarray}
 E[X(\bar \sigma)|\mathcal{F}_t]&=&-\cb{\frac{\partial }{\partial y}}J\left(t,\int_0^t u^{0,0}(s)ds\right),\quad t<\bar \sigma, \label{dual1}\\
X(\bar \sigma)&=&\left\{\begin{array}{cl} Y^*(\bar \sigma_{\upper}), & \bar\sigma_{\upper}\leq \bar\sigma_L
\\ Y_*(\bar \sigma_L), & \bar\sigma_L< \bar\sigma_{\upper}
 \end{array}\right.
 \label{dual2}
\end{eqnarray}
\end{lem}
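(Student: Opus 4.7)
\textbf{Plan for (\ref{dual1}).} I would invoke dynamic programming: set $Y(t):=\int_0^t u^{0,0}(s)\,ds$, so that $u^{0,0}|_{[t,T]}$ is almost surely optimal for $\bar J(t,Y(t))$ and inherits the properties of Proposition \ref{prop:optimalcontrol}. On $\{t<\bar\sigma\}$ we have $Y(t)\in(1-L(T-t),1)$, which places this event inside $M(t,Y(t))$, and the stopping time $\bar\sigma$ of the lemma coincides there with the one produced by Lemma \ref{lem:barsigma} for the shifted data $(t,Y(t))$. Hence $\bar\sigma\in\mathcal{S}_{A(t,Y(t))}^p\cap\mathcal{S}_{B(t,Y(t))}^p$ is optimal for both the essential supremum and essential infimum characterizations in Theorem \ref{thm:smooth2}(i), so $-D_y^-J(t,Y(t))=-D_y^+J(t,Y(t))=E[X(\bar\sigma)|\mathcal{F}_t]$ on $\{t<\bar\sigma\}$. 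Theorem \ref{thm:smooth} then collapses the two one-sided derivatives into $\partial_yJ$, yielding (\ref{dual1}).

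\textbf{Strategy for (\ref{dual2}).} I would pass to the left-limit $t\uparrow\bar\sigma$ in (\ref{dual1}). By \eqref{ass:square} the martingale $(E[X(\bar\sigma)|\mathcal{F}_t])_{t\in[0,\bar\sigma)}$ is uniformly integrable, and so converges a.s.\ to $E[X(\bar\sigma)|\mathcal{F}_{\bar\sigma-}]$. On $\{\bar\sigma=\bar\sigma_U\leq\bar\sigma_L\}\cap\{\bar\sigma<T\}$, $Y(t)$ increases continuously to $1$; Theorem \ref{thm:smooth2}(iv) at $(\bar\sigma_U,1)$ gives $-D_y^-J(\bar\sigma_U,1)=Y^*(\bar\sigma_U)$, and combining this with the continuity properties of $J$ from Proposition \ref{prop:goodversion} and Theorem \ref{thm:smooth} identifies it with the left-limit of $-\partial_yJ(t,Y(t))$. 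The symmetric argument via Theorem \ref{thm:smooth2}(iii) treats $\{\bar\sigma=\bar\sigma_L<\bar\sigma_U\}\cap\{\bar\sigma<T\}$, producing $Y_*(\bar\sigma_L)$; the case $\{\bar\sigma=T\}$ is tautological. All together one obtains
\[
E[X(\bar\sigma)\,|\,\mathcal{F}_{\bar\sigma-}]=Y^*(\bar\sigma_U)\quad\text{resp.}\quad E[X(\bar\sigma)\,|\,\mathcal{F}_{\bar\sigma-}]=Y_*(\bar\sigma_L)
\]
on the respective events.

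\textbf{Main obstacle.} The delicate step is upgrading these conditional-expectation identities to the almost-sure equalities (\ref{dual2}). Once I show that $Y^*(\bar\sigma_U)$ and $Y_*(\bar\sigma_L)$ are $\mathcal{F}_{\bar\sigma-}$-measurable, the trivial a.s.\ inequalities $X(\bar\sigma_U)\leq Y^*(\bar\sigma_U)$ and $X(\bar\sigma_L)\geq Y_*(\bar\sigma_L)$, combined with the matching conditional expectations under $\mathcal{F}_{\bar\sigma-}$, force the desired pointwise identities. The $\mathcal{F}_{\bar\sigma-}$-measurability should follow from transferring the LCE property of $X$ to the Snell envelopes $Y^*$ and $Y_*$ at the predictable times $\bar\sigma_U$ and $\bar\sigma_L$, yielding $Y^*(\bar\sigma_U)=Y^*(\bar\sigma_U-)$ and $Y_*(\bar\sigma_L)=Y_*(\bar\sigma_L-)$ almost surely. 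Executing this LCE-transfer via a near-optimal-stopping approximation together with the right-continuity of $X$ is the technical crux of the proof.
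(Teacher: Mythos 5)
Your treatment of \eqref{dual1} is essentially the paper's own argument: restart the problem at $(t,\int_0^t u^{0,0}(s)\,ds)$ via dynamic programming, identify $\bar\sigma$ with the exit time of Lemma \ref{lem:barsigma} for the shifted data, and apply Theorem \ref{thm:smooth2}(i). Apart from the minor point that one should be explicit about why the identity, proved for each fixed deterministic $t$, can be used along random times later on, this part is fine.

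For \eqref{dual2} your route has a genuine gap. It hinges on the identification $\lim_{t\uparrow\bar\sigma_{\upper}}\left(-\frac{\partial}{\partial y}J(t,\int_0^t u^{0,0}(s)\,ds)\right)=-D^-_yJ(\bar\sigma_{\upper},1)=Y^*(\bar\sigma_{\upper})$ (and its analogue at $\bar\sigma_L$). This is a joint limit in which $t\uparrow\bar\sigma_{\upper}$ while simultaneously $y=\int_0^t u^{0,0}(s)\,ds\uparrow 1$, i.e.\ the space variable runs into the boundary of $\Delta_t$. Nothing in the paper controls such a limit: Theorem \ref{thm:smooth} gives continuity of $\partial_y J(t,\cdot)$ only on the \emph{open} interval $(1-L(T-t),1)$ for fixed $t$; Proposition \ref{prop:goodversion} gives RCLL paths in $t$ of $J$ itself, not of its $y$-derivative; and Theorem \ref{thm:smooth2}(iii)--(iv) shows the one-sided derivatives genuinely differ at the boundary, so no continuity up to $\bar\Delta_t$ is available. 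Without this step the identity $E[X(\bar\sigma)\,|\,\mathcal{F}_{\bar\sigma-}]=Y^*(\bar\sigma_{\upper})$ is not established. In addition, the resolution you propose for your ``main obstacle'' --- deducing $Y^*(\bar\sigma_{\upper})=Y^*(\bar\sigma_{\upper}-)$, hence $\mathcal{F}_{\bar\sigma-}$-measurability, from the LCE property --- is false in general: the martingale $1+\xi{\bf 1}_{[1,3]}(t)$ appearing in the paper's own example is LCE, equals its Snell envelope, and is not $\mathcal{F}_{1-}$-measurable at the predictable time $1$. (That particular obstacle is in fact avoidable: once one has the equality of the expectations $E[X(\bar\sigma){\bf 1}_A]=E[Y^*(\bar\sigma_{\upper}){\bf 1}_A]$, the pointwise inequality $X(\bar\sigma_{\upper})\leq Y^*(\bar\sigma_{\upper})$ already forces a.s.\ equality on $A$, with no measurability claim needed.) The paper sidesteps the limit altogether: working at time $0$ with Theorem \ref{thm:smooth2}(i), it shows that $\bar\sigma$ minimizes $E[X(\cdot)]$ over predictable stopping times valued in $(\bar\sigma_L,T]\cup\{T\}$ (by patching any such $\sigma$ into an element of $\mathcal{S}^p_{B(0,0)}$), compares with the predictable right-approximations $(\sigma^*+1/k)\wedge T$ of an optimal stopping time $\sigma^*$ for $Y_*(\bar\sigma_L)$, and concludes from the resulting inequality $E[X(\bar\sigma){\bf 1}_{\{\bar\sigma_L<\bar\sigma_{\upper}\}}]\leq E[Y_*(\bar\sigma_L){\bf 1}_{\{\bar\sigma_L<\bar\sigma_{\upper}\}}]$ together with the trivial bound $X(\bar\sigma_L)\geq Y_*(\bar\sigma_L)$. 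You would need to supply an argument of this kind, or prove the missing joint boundary continuity, to complete \eqref{dual2}.
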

\begin{rem}\label{rem:martingale}
 By Lemma \ref{lem:dual}, the process $\bar M$ in Theorem \ref{thm:dual} can be expressed as
$$
\bar M(t)=\left\{\begin{array}{cl} E[X(\bar \sigma)|\mathcal{F}_t], & t\leq \bar\sigma \\
X(\bar \sigma)+M^*(t)-M^*(\bar \sigma), & t>\bar\sigma=\bar\sigma_{\upper}  \\
          X(\bar \sigma)+M_*(t)-M_*(\bar \sigma), & t> \bar\sigma=\bar\sigma_L.
  \end{array}
\right.
$$
In particular, this shows that $\bar M$ is a martingale \cb{and, thus, has an adapted RCLL modification, which we still denote by $\bar M$}.
\end{rem}

\begin{proof}
First note that by Proposition 3.2 in [BD] there is an optimal control $u^{0,0}$ satisfying $\int_0^T u^{0,0}(s)ds=1$. Moreover,
$u^{0,0}$ fulfills (\ref{U'}) with $\tau_0=0$ and $Y_0=0$ by the same proposition.
We first prove (\ref{dual1}). By Theorem \ref{thm:smooth2}, (i), and Lemma \ref{lem:barsigma} we observe that  for $t<\bar\sigma$
\begin{equation}\label{hilf0004}
-\cb{\frac{\partial }{\partial y}}J\left(t,\int_0^t u^{0,0}(s)ds\right)= E[X(\bar \sigma(t))|\mathcal{F}_t]
\end{equation}
where
\begin{eqnarray*}
\bar \sigma(t)&=&\inf\{s\geq t;\; \int_{t}^s \bar u^{0,0}(r)dr
\notin (1-\int_0^t u^{0,0}(r)dr-L(T-t),1-\int_0^t u^{0,0}(r)dr)\} \\
&=&
\inf\{s\geq t;\; \int_{0}^s \bar u^{0,0}(r)dr
\notin (1-L(T-t),1)\}
\end{eqnarray*}
Here we used that ${\bf 1}_{[t,T]} u^{0,0}$ is optimal for $\bar J(t,\int_0^t  u^{0,0}(r)dr)$ by the dynamic programming principle in Proposition 3.3 of [BD].
As $\bar\sigma=\bar \sigma(0)$, we conclude that $\bar \sigma(t)=\bar\sigma$ for $t<\bar\sigma$. Hence, (\ref{hilf0004}) implies (\ref{dual1}).

We now turn to the proof of (\ref{dual2}). Suppose that $\sigma$ is any predictable stopping time with values in $(\bar\sigma_L,T]\cup\{T\}$
and denote an announcing sequence by $(\sigma_n)$. We will first show that
\begin{equation}\label{hilf0006}
 E[X(\bar \sigma){\bf 1}_{\{\bar\sigma_L<\bar \sigma_{\upper}\}}]\leq E[X(\sigma){\bf 1}_{\{\bar\sigma_L<\bar \sigma_{\upper}\}}].
\end{equation}
To this end we define
$$
\tilde \sigma=\left\{\begin{array}{cl} \sigma,& \bar\sigma_L<\bar \sigma_{\upper} \\ \bar\sigma_{\upper}, & \bar\sigma_{\upper}\leq \bar\sigma_L.\end{array} \right.
$$
The stopping time $\cb{\tilde\sigma}$ is predictable, since it is announced by the sequence
$$
\tilde \sigma_n=\left\{\begin{array}{cl} \sigma_n \vee \bar\sigma_{L} ,& \bar\sigma_{L}<\bar \sigma_{\upper,n} \\ \bar\sigma_{\upper,n}, & \bar\sigma_{\upper,n}\leq \bar\sigma_{L},\end{array} \right\}\wedge(T-1/n)
$$
where the sequence
\begin{eqnarray*}
 \bar \sigma_{\upper,n}&=&\inf\{t\in [0,T];\; \int_{0}^t  u^{0,0}(s)ds \geq 1-1/n \}
\end{eqnarray*}
announces $\bar\sigma_{\upper}$. Now note that $\tilde \sigma=\bar \sigma \in B(0,0)$ on $\bar\sigma_{\upper} \leq \bar\sigma_L$ by Lemma \ref{lem:barsigma}. Moreover,
$(\bar \sigma_L,T]\subset B(0,0)$ on $\bar\sigma_L<\bar\sigma_{\upper}$ by (\ref{U'}). Consequently, $\tilde \sigma$ belongs to $\mathcal{S}_{B(0,0)}^p$. Hence,
Theorem \ref{thm:smooth2}, (i), and Lemma \ref{lem:barsigma} yield
$$
E[X(\bar\sigma)]\leq E[X(\tilde \sigma)],
$$
which in turn implies (\ref{hilf0006}) by the definition of $\tilde \sigma$. In a next step we fix an optimal stopping time $\sigma^*$ for the
optimal stopping problem
$Y_*(\cb{\bar\sigma_L}\wedge T)$, which exists, because $X$ is right-continuous and LCE, see e.g. \citet{ElK}. Then, for every $k\in \mathbb{N}$, the stopping time
$\sigma^*_k=(\sigma^*+1/k)\wedge T$ is predictable (with announcing sequence $((\sigma^*+1/k-1/n)\wedge (T-1/n))_{n\geq k}$) and takes values in $(\bar\sigma_L,T]\cup\{T\}$. Thus,
by (\ref{hilf0006}), we have
$$
E[X(\bar \sigma){\bf 1}_{\{\bar\sigma_L<\bar \sigma_{\upper}\}}]\leq E[X(\sigma^*_k){\bf 1}_{\{\bar\sigma_L<\bar \sigma_{\upper}\}}]
$$
Passing to the limit we obtain by right-continuity of $X$ and optimality of $\sigma^*$,
$$
E[X(\bar \sigma){\bf 1}_{\{\bar\sigma_L<\bar \sigma_{\upper}\}}]\leq E[X(\sigma^*){\bf 1}_{\{\bar\sigma_L<\bar \sigma_{\upper}\}}]=E[Y_*(\bar \sigma_L){\bf 1}_{\{\bar\sigma_L<\bar \sigma_{\upper}\}}].
$$
As obviously,
$$
X(\bar\sigma_L)\geq Y_*(\bar \sigma_L)
$$
on $\{\bar\sigma_L<\bar \sigma_{\upper}\}$, we finally arrive at
$$
X(\bar\sigma_L)= Y_*(\bar \sigma_L)
$$
on $\{\bar\sigma_L<\bar \sigma_{\upper}\}$. The proof of (\ref{dual2}) in the case $\bar\sigma_{\upper}< \bar\sigma_L$ is analogously, while the case $\bar\sigma_L=\bar\sigma_{\upper}$ is trivial, since this implies
$\bar\sigma_{\upper}=T$.
\end{proof}

\begin{proof}[Proof of Theorem \ref{thm:dual}]
We introduce the set
$
{\bf U}
$
consisting of all adapted, $[0,L]$-valued processes. Hence, $u\in U(0,0)$, if and only if $u\in {\bf U}$ and satisfies the global constraint $\int_0^T u(s)ds\leq 1$.
We then define the Lagrangian
$$
\mathcal{L}(u,\Lambda)=E\left[\int_0^T u(s)X(s)ds-\Lambda\left(\int_0^T u(s)ds-1 \right) \right]
$$
for $u\in {\bf U}$ and $\mathcal{F}_T$-measurable integrable random variables $\Lambda$. Put differently, we relax the global constraint and try to enforce it by an appropriate choice of
the Lagrange multiplier $\Lambda$. Apparently we have for any $\Lambda$
$$
\mathcal{L}(u^{0,0},\Lambda)=E\left[\int_0^T u^{0,0}(s)X(s)ds\right]=\bar J(0,0),
$$
because $u^{0,0}$ is optimal and satisfies $\int_0^T u^{0,0}(s)ds= 1$. Thus, we obtain for every $\mathcal{F}_T$-measurable integrable random variable $\Lambda$
\begin{eqnarray*}
 \bar J(0,0) &\leq& \sup_{u\in {\bf U}}  \mathcal{L}(u,\Lambda)= \sup_{u\in {\bf U}} E\left[\int_0^T u(s)(X(s)-E[\Lambda|\mathcal{F}_s]) ds\right]+E[\Lambda]
 \\
&=&  E\left[\int_0^T L(X(s)-E[\Lambda|\mathcal{F}_s])_+ ds\right]+E[\Lambda].
\end{eqnarray*}
Consequently, we get for every RCLL martingale $M$ with the choice $\Lambda=M(T)$,
$$
J(0,0)\leq E\left[\int_0^T L(X(s)-M(s))_+ ds\right]+E[M(0)].
$$

In order to finish the proof it is, thus, sufficient to show that
$$
\bar J(0,0)=E\left[\int_0^{T}  L(X(t)- \bar M(t))_+ dt\right]+E[\bar M(0)],
$$
 where $\bar M$ is indeed a martingale thanks to Remark \ref{rem:martingale}.
To this end recall that the good version of the value process constructed in Proposition \ref{prop:goodversion} is denoted by $J$. By the characterization of optimal controls in Theorem \ref{thm:main_old}, (ii),
by the martingale property of $\bar M$, and by (\ref{U'})
we obtain
\begin{eqnarray*}
\bar J(0,0)&=&J(0,0)=E[\int_0^T u^{0,0}(t)X(t)dt]\\&=&E\left[\int_0^T  \left( L(X(t)+ D^-_yJ(t,\int_0^t u^{0,0}(s)ds))_+ - D^-_yJ(t,\int_0^t u^{0,0}(s)ds)u^{0,0}(t) \right) dt\right] \\
&=& E\left[\int_0^{\bar\sigma}  \left( L(X(t)- \bar M(t))_+ + E[\bar M(\bar \sigma)|\mathcal{F}_t] u^{0,0}(t) \right) dt\right]
\\ &&+ E\left[{\bf 1}_{\{\bar\sigma_{\upper}\leq \bar\sigma_L\}} \int_{\bar\sigma}^T  \left( L(X(t)+ D^-_yJ(t,1))_+ - D^-_yJ(t,1)u^{0,0}(t) \right) dt\right] \\
&& + E\left[{\bf 1}_{\{\bar\sigma_{\upper}> \bar\sigma_L\}} \int_{\bar\sigma}^T  \left( L(X(t)+ D^-_yJ(t,1-L(T-t)))_+ - D^-_yJ(t,1-L(T-t))u^{0,0}(t) \right) dt\right]
\\ &=& (I)+(II)+(III).
\end{eqnarray*}
Note that
\begin{eqnarray*}
 E\left[\int_0^{\bar\sigma}   E[\bar M(\bar \sigma)|\mathcal{F}_t] u^{0,0}(t) dt\right]=\int_0^{T}   E\left[{\bf 1}_{\{\bar\sigma>t\}}\bar M(\bar \sigma) u^{0,0}(t)\right]dt
= E\left[\bar M(\bar \sigma) \int_0^{\bar\sigma} u^{0,0}(t)dt\right].
\end{eqnarray*}
Hence,
$$
(I)=E\left[\int_0^{\bar\sigma}  L(X(t)- \bar M(t))_+ dt\cb{+} \bar M(\sigma) \int_0^{\bar\sigma} u^{0,0}(t)dt \right].
$$
Moreover, by Theorem \ref{thm:smooth2}, we have,
$$
(II)=E\left[{\bf 1}_{\{\bar\sigma_{\upper}\leq \bar\sigma_L\}} \int_{\bar\sigma_{\upper}}^T   Y^*(t)u^{0,0}(t)  dt\right]=0=E\left[{\bf 1}_{\{\bar\sigma_{\upper}\leq \bar\sigma_L\}} \bar M(\bar \sigma) \int_{\bar\sigma_{\upper}}^T u^{0,0}(t)dt\right],
$$
because $u^{0,0}(t)=0$ for $t>\bar \sigma_{\upper}$, and
$$
(III)=E\left[{\bf 1}_{\{\bar\sigma_{\upper}> \bar\sigma_L\}} \int_{\bar\sigma_L}^T   LX(t) dt\right].
$$
Hence,
\begin{eqnarray*}
 \bar J(0,0)&=&E\left[\int_0^{T}  L(X(t)- \bar M(t))_+ dt\right]+E\left[\bar M(\bar \sigma) \int_0^{\bar\sigma} u^{0,0}(t)dt\right] \\ &&
- E\left[{\bf 1}_{\{\bar\sigma_{\upper}\leq\bar\sigma_L\}} \int_{\bar\sigma}^T    L(X(t)- \bar M(t))_+ dt  \right]+E\left[{\bf 1}_{\{\sigma_{\upper}\leq \sigma_L\}} \bar M(\bar \sigma) \int_{\bar\sigma}^T u^{0,0}(t)dt\right]\\&&+
E\left[{\bf 1}_{\{\bar\sigma_{\upper}> \bar\sigma_L\}} \int_{\bar\sigma}^T  \left( LX(t) -  L(X(t)- \bar M(t))_+ \right)dt\right].
\end{eqnarray*}
Now, by the definition of $\bar M$ and the supermartingale property of $Y^*$ we obtain for $t>\bar\sigma$ on $\bar\sigma_{\upper}\leq \bar\sigma_L$,
$$
X(t)\leq Y^*(t)\leq Y^*(\bar \sigma)+M^*(t)-M^*(\bar \sigma)=\bar M(t)
$$
and analogously \cb{for }$t>\bar \sigma$ on $\bar\sigma_L< \bar\sigma_{\upper}$, using the submartingale property of $Y_*$
$$
X(t)\geq Y_*(t)\geq Y_*(\bar \sigma)+M_*(t)-M_*(\bar \sigma)=\bar M(t).
$$
Thus,
\begin{eqnarray*}
 \bar J(0,0)&=&E\left[\int_0^{T}  L(X(t)- \bar M(t))_+ dt\right]+E\left[\bar M(\bar \sigma) \int_0^{\bar\sigma} u^{0,0}(t)dt\right] \\ &&
+E\left[{\bf 1}_{\{\sigma_{\upper}\leq \sigma_L\}} \bar M(\bar \sigma) \int_{\bar\sigma}^T u^{0,0}(t)dt\right]+
E\left[{\bf 1}_{\{\bar\sigma_{\upper}> \bar\sigma_L\}} \int_{\bar\sigma}^T  L\bar M(t)dt\right]
\end{eqnarray*}
Noting that
\begin{eqnarray*}
E\left[{\bf 1}_{\{\bar\sigma_{\upper}> \bar\sigma_L\}} \int_{\bar\sigma}^T  L\bar M(t)dt\right]= E\left[{\bf 1}_{\{\bar\sigma_{\upper}> \bar\sigma_L\}} \bar M(\bar \sigma)\int_{\bar\sigma_L}^T  Ldt\right]
=E\left[{\bf 1}_{\{\bar\sigma_{\upper}> \bar\sigma_L\}} \bar M(\bar \sigma)\int_{\bar\sigma}^T  u^{0,0}(t) dt\right]
\end{eqnarray*}
by (\ref{U'})  and that $\int_0^T u^{0,0}(t)dt=1$, we finally  obtain
$$
\bar J(0,0)=E\left[\int_0^{T}  L(X(t)- \bar M(t))_+ dt\right]+E\left[\bar M(\bar \sigma)\right]=E\left[\int_0^{T}  L(X(t)- \bar M(t))_+ dt\right]+E[\bar M(0)].
$$
\end{proof}

\subsubsection*{Acknowledgement}
The authors gratefully acknowledge financial support by the ATN-DAAD Australia Germany Joint Research Cooperation Scheme.

\end{document}